\newtheorem{theorem}{Theorem}
\newtheorem{corollary}[theorem]{Corollary}
\newtheorem{lemma}[theorem]{Lemma}
\newtheorem{proposition}[theorem]{Proposition}
\newtheorem{remarkthm}[theorem]{Remark}
\newtheorem{definition}[theorem]{Definition}
\newenvironment{remark}{\begin{remarkthm}\normalfont\quad}{\end{remarkthm}}
\renewcommand{\le}{\leqslant}
\renewcommand{\ge}{\geqslant}
\newcommand{\eps}{\varepsilon}
\newcommand{\Sig}{\Sigma}
\newcommand{\sig}{\sigma}
\newcommand{\noin}{\noindent}
\newcommand{\Bbf}{\mathbf{B}_{\mathrm{bf}}}
\newcommand{\Vbf}{\mathbf{W}^{\le 5}_{\mathrm{bf}}}
\newcommand{\Wbf}{\mathbf{W}^{\ge 6}_{\mathrm{bf}}}
\newcommand{\e}[1]{\hat{#1}}
\newcommand{\cD}{{\mathcal D}}
\newcommand{\cT}{{\mathcal T}}
\newcommand{\cW}{{\mathcal W}}
\newcommand{\qedb}{\hfill$\blacksquare$}
\begin{document}
\title{Syntactic complexity of bifix-free regular languages}
\author{Marek Szyku{\l}a}
\address{Institute of Computer Science, University of Wroc{\l}aw,\\
Joliot-Curie 15, PL-50-383 Wroc{\l}aw, Poland}
\email{msz@cs.uni.wroc.pl}
\author{John Wittnebel}
\address{David R. Cheriton School of Computer Science, University of Waterloo,\\
Waterloo, ON, Canada N2L 3G1}
\email{jkwittnebel@hotmail.com}

\begin{abstract}
We study the properties of syntactic monoids of bifix-free regular languages.
In particular, we solve an~open problem concerning syntactic complexity:
We prove that the cardinality of the syntactic semigroup of a~bifix-free language with state complexity $n$ is at most $(n-1)^{n-3}+(n-2)^{n-3}+(n-3)2^{n-3}$ for $n\ge 6$.
The main proof uses a~large construction with the method of injective function.
Since this bound is known to be reachable, and the values for $n \le 5$ are known, this completely settles the problem.
We also prove that $(n-2)^{n-3} + (n-3)2^{n-3} - 1$ is the minimal size of the alphabet required to meet the bound for $n \ge 6$.
Finally, we show that the largest transition semigroups of minimal DFAs which recognize bifix-free languages are unique up to renaming the states.

\medskip\noin
{\textsc{Keywords:}} bifix-free, prefix-free, regular language, suffix-free, syntactic complexity, transition semigroup
\end{abstract}

\maketitle
\section{Introduction}

The \emph{syntactic complexity}~\cite{BrYe11} $\sig(L)$ of a~regular language $L$ is defined as the size of its syntactic semigroup~\cite{Pin97}.
It is known that this semigroup is isomorphic to the transition semigroup of the quotient automaton $\cD$ and of a~minimal deterministic finite automaton accepting the language.
The number $n$ of states of $\cD$ is the \emph{state complexity} of the language~\cite{Yu01}, and it is the same as the \emph{quotient complexity}~\cite{Brz10} (number of left quotients) of the language.
The \emph{syntactic complexity of a~class} of regular languages is the maximal syntactic complexity of languages in that class expressed as a~function of the quotient complexity~$n$.

Syntactic complexity is related to the Myhill equivalence relation \cite{Myh57}, and it counts the number of classes of non-empty words in a~regular language which act distinctly.
It provides a~natural bound on the time and space complexity of algorithms working on the transition semigroup.
For example, a~simple algorithm checking whether a~language is \emph{star-free} just enumerates all transformations and verifies whether none of them contains a~non-trivial cycle \cite{McSe71}.

Syntactic complexity does not refine state complexity, but used as an~additional measure it can distinguish particular subclasses of regular languages from
the class of all regular languages, whereas state complexity alone cannot.
For example, the state complexity of basic operations in the class of star-free languages is the same as in the class of all regular languages (except the reversal, where the tight upper bound is $2^{n-1}-1$ see \cite{BrSz15Aperiodic}).

Finally, the largest transition semigroups play an~important role in the study of \emph{most complex} languages~\cite{Brz13} in a~given subclass.
These are languages that meet all the upper bounds on the state complexities of Boolean operations, product, star, and reversal, and also have maximal syntactic semigroups and most complex atoms~\cite{BrTa14}.
In particular, the results from this paper enabled the study of most complex bifix-free languages \cite{FeSz17ComplexityOfBifixFree}.

A~language is \emph{prefix-free} if no word in the language is a~proper prefix of another word in the language.
Similarly, a~language is \emph{suffix-free} if there is no word that is a~proper suffix of another word in the language.
A~language is \emph{bifix-free} if it is both prefix-free and suffix-free.
Prefix-, suffix-, and bifix-free languages are important classes of codes, which have numerous applications in such fields as cryptography and data compression.
Codes have been studied extensively; see~\cite{BPR09} for example.

Syntactic complexity has been studied for a~number of subclasses of regular languages (e.g.,~\cite{BrLi15,BLL12,BLY12,BrSz15Aperiodic,HoKo04,IvNa14}).
For bifix-free languages, the lower bound $(n-1)^{n-3}+(n-2)^{n-3}+(n-3)2^{n-3}$ for the syntactic complexity for $n \ge 6$ was established in~\cite{BLY12}. The values for $n \le 5$ were also determined.

The problem of establishing tight upper bound on syntactic complexity can be quite challenging, depending on the particular subclass.
For example, it is easy for prefix-free languages and right ideals, while much more difficult for suffix-free languages and left ideals.
The case of bifix-free languages studied in this paper requires an~even more involved proof, as the structure of a~maximal transition semigroup is more complicated.

Our main contributions in this paper are as follows:
\begin{enumerate}
\item We prove that $(n-1)^{n-3}+(n-2)^{n-3}+(n-3)2^{n-3}$ is also an~upper bound for syntactic complexity for $n \ge 8$.
To do this, we apply the general method of injective function (cf. \cite{BrSz14a} and~\cite{BrSz15SyntacticComplexityOfSuffixFree}).
The construction here is much more involved than in the previous cases and uses a~number of tricks for ensuring injectivity.
\item We prove that the transition semigroup meeting this bound is unique for every $n \ge 8$.
\item We refine the witness DFA meeting the bound by reducing the size of the alphabet to $(n-2)^{n-3} + (n-3)2^{n-3} - 1$, and we show that it cannot be any smaller.
\item Using a~dedicated algorithm, we verify by computation that two semigroups $\Vbf$ and $\Wbf$ (defined below) are the unique largest transition semigroups of a~minimal DFA of a~bifix-free language, respectively for $n=5$ and $n=6,7$ (whereas they coincide for $n=3,4$).
\end{enumerate}
In summary, for every $n$ we have determined the syntactic complexity, the unique largest semigroups, and the minimal sizes of the alphabets required; this completely solves the problem for bifix-free languages.

These results have been announced in~\cite{SzWi17SyntacticComplexityOfBifixFree} with only proof ideas.

\section{Preliminaries}

Let $\Sigma$ be a~non-empty finite alphabet, and let $L \subseteq \Sigma^*$ be a~language.
If $w \in \Sigma^*$ is a~word, $L.w$ denotes the \emph{left quotient} or simply quotient of $L$ by $w$, which is defined by $L.w = \{u \mid wu \in L\}$.
The number of quotients of $L$ is its \emph{quotient complexity}~\cite{Brz10} $\kappa(L)$. 
From the Myhill-Nerode Theorem, a~language is regular if and only if the set of all quotients of the language is finite.
We denote the set of quotients of regular $L$ by $K=\{K_0,\dots,K_{n-1}\}$, where $K_0=L=L.\varepsilon$ by convention.

A~\emph{deterministic finite automaton (DFA)} is a~tuple $\cD=(Q, \Sigma, \delta, q_0,F)$, where $Q$ is a~finite non-empty set of \emph{states}, $\Sigma$ is a~finite non-empty \emph{alphabet}, $\delta\colon Q\times \Sigma\to Q$ is the \emph{transition function}, $q_0\in Q$ is the \emph{initial} state, and $F\subseteq Q$ is the set of \emph{final} states.
We extend $\delta$ to a~function $\delta\colon Q\times \Sig^*\to Q$ as usual.

The \emph{quotient DFA} of a~regular language $L$ with $n$ quotients is defined by
$\cD=(K, \Sigma, \delta_\cD, K_0,F_\cD)$, where $\delta_\cD(K_i,w)=K_j$ if and only if $K_i.w=K_j$, and $F_\cD=\{K_i\mid \eps \in K_i\}$.
Without loss of generality, we assume that $Q=\{0,\dots,n-1\}$.
Then $\cD=(Q, \Sigma, \delta, 0,F)$, where $\delta(i,w)=j$ if $\delta_\cD(K_i,w)=K_j$, and $F$ is the set of subscripts of quotients in $F_\cD$.
A~state $q \in Q$ is \emph{empty} if its quotient $K_q$ is empty.
The quotient DFA of $L$ is isomorphic to each complete minimal DFA of $L$.
The number of states in the quotient DFA of $L$ (the quotient complexity of $L$) is therefore equal to the state complexity of $L$.

In any DFA $\cD$, each letter $a\in \Sigma$ induces a~transformation on the set $Q$ of $n$ states.
We let $\cT_n$ denote the set of all $n^n$ transformations of $Q$; then $\cT_n$ is a~monoid under composition. 
The \emph{image} of $q\in Q$ under transformation $t$ is denoted by $qt$, and the \emph{image} of a~subset $S \subseteq Q$ is $St = \{qt \mid q \in S\}$.
If $s,t \in \cT_n$ are transformations, their composition is denoted by $st$ and defined by $q(st)=(qs)t$.
The identity transformation is denoted by $\mathbf{1}$, and we have $q\mathbf{1} = q$ for all $q \in Q$.
By $(S \to q)$, where $S \subseteq Q$ and $q \in Q$, we denote a~\emph{semiconstant} transformation that maps all the
states from $S$ to $q$ and behaves as the identity function for the states in $Q \setminus S$.
A~\emph{constant} transformation is the semiconstant transformation $(Q \to q)$, where $q \in Q$.
A~\emph{unitary} transformation is $(\{p\} \to q)$, for some distinct $p,q \in Q$; this is denoted by $(p \to q)$ for simplicity.

The \emph{transition semigroup} of $\cD$ is the semigroup of all transformations generated by the transformations induced by $\Sigma$.
Since the transition semigroup of a~minimal DFA of a~language $L$ is isomorphic to the syntactic semigroup of $L$~\cite{Pin97}, the syntactic complexity of $L$ is equal to the cardinality of the transition semigroup of $\cD$.

The \emph{underlying digraph} of a~transformation $t \in \cT_n$ is the digraph $(Q,E)$, where $E = \{(q,qt) \mid q \in Q\}$.
We identify a~transformation with its underlying digraph and use usual graph terminology for transformations:
The \emph{in-degree} of a~state $q \in Q$ is the cardinality $|\{p \in Q \mid pt = q\}|$.
A~\emph{cycle} in $t$ is a~cycle in its underlying digraph of length at least 2.
A~\emph{fixed point} in $t$ is a~self-loop in its underlying digraph.
The \emph{orbit} of a~state $q \in Q$ in $t$ is a~connected component containing $q$ in its underlying digraph, that is, the set $\{p \in Q \mid pt^i = qt^j\text{ for some }i,j \ge 0\}$.
Note that every orbit contains either exactly one cycle or one fixed point.
The \emph{distance} in $t$ from a~state $p \in Q$ to a~state $q \in Q$ is the length of the path in the underlying digraph of $t$ from $p$ to $q$, that is, $\min\{i \in \mathbb{N} \mid pt^i = q\}$, and is undefined if no such path exists.
If a~state $q$ does not lie in a~cycle, then the \emph{tree} of $q$ is the underlying digraph of $t$ restricted to the states $p$ such that there is a~path from $p$ to $q$.

\subsection{Bifix-free languages and semigroups}

Let $\cD_n=(Q, \Sigma, \delta, 0,F)$, where $Q = \{0,\ldots,n-1\}$, be a~minimal DFA accepting a~bifix-free language $L$, and let $T(\cD_n)$ be its transition semigroup. We also define $Q_M = \{1,\ldots,n-3\}$ (the set of the ``middle'' non-special states).

The following properties of bifix-free languages, slightly adapted to our terminology, are well known~\cite{BLY12}:
\begin{lemma}\label{lem:bifix-free}
A~minimal DFA $\cD_n=(Q, \Sigma, \delta, 0,F)$ of a~bifix-free languages $L$ satisfies the following properties:
\begin{enumerate}
\item There is an~empty state, which is $n-1$ by convention.
\item There exists exactly one final quotient, which is $\{\varepsilon\}$, and whose state is $n-2$ by convention, so $F=\{n-2\}$.
\item For $u,v\in \Sigma^+$, if $L.v\neq \emptyset$, then $L.v\neq L.uv$.
\item In the underlying digraph of every transformation of $T(\cD_n)$, there is a~path starting at $0$ and ending at $n-1$.
\end{enumerate}
\end{lemma}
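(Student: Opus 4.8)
The plan is to argue directly from the definitions of prefix- and suffix-freeness, translated into the language of quotients via the identity $L.(xy)=(L.x).y$. The single observation driving everything is this: if $w\in L$, then prefix-freeness forbids $w$ from being a proper prefix of any word of $L$, so the only $u$ with $wu\in L$ is $u=\varepsilon$; hence $L.w=\{\varepsilon\}$. Appending one further letter $a$ then gives $L.wa=\{u\mid wau\in L\}=\emptyset$, again because $w$ would otherwise be a proper prefix of $wau\in L$.

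From this I would read off (1) and (2) at once. Taking any $w\in L$ (the language is nonempty in the nondegenerate case) together with one further letter produces the empty quotient, which establishes (1); labelling it $n-1$ is merely a naming convention. For (2), a quotient $K_i$ is final exactly when $\varepsilon\in K_i$, i.e. when $K_i=L.w$ for some $w\in L$; but every such quotient equals $\{\varepsilon\}$ by the observation above, so there is precisely one final quotient and it is $\{\varepsilon\}$, which we label $n-2$.

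Property (3) I would obtain as a direct restatement of suffix-freeness. Suppose toward a contradiction that $u,v\in\Sigma^+$ satisfy $L.v\neq\emptyset$ yet $L.v=L.uv$. Pick $x\in L.v$; then $vx\in L$, and since $x\in L.uv$ as well we get $uvx\in L$. Because $u\neq\varepsilon$, the word $vx$ is a proper suffix of $uvx$, and both lie in $L$, contradicting suffix-freeness.

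The last property (4) is where the real work lies, and I expect it to be the main obstacle. A transformation $t\in T(n)$ is induced by some word $w\in\Sigma^+$, and following the underlying digraph from $0$ traces the states corresponding to $L,L.w,L.w^2,\dots$, since $0t^k$ carries the quotient $L.w^k$; I must show the empty state $n-1$ occurs in this sequence. Arguing by contradiction, assume $L.w^k\neq\emptyset$ for every $k\ge 1$. Applying (3) with $v=w^i$ and $u=w^{j-i}$ for any $1\le i<j$, and using $w^{j-i}w^i=w^j$, yields $L.w^i\neq L.w^j$, so the quotients $L.w,L.w^2,\dots$ are pairwise distinct. This is impossible because $L$ has only finitely many quotients; hence some $L.w^k$ is empty and the orbit of $0$ under $t$ reaches $n-1$. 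The delicate point is setting up the indices so that (3) genuinely applies, in particular keeping $v\in\Sigma^+$ to exclude the degenerate index $i=0$, and confirming that the path obtained lies inside the single transformation $t$ (it does, being $0,0t,0t^2,\dots$) rather than mixing distinct letters.
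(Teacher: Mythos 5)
Your proof is correct and follows exactly the route the paper intends: the paper does not prove this lemma but cites it as well known from [BLY12], remarking only that (1)--(2) follow from prefix-freeness and (3)--(4) from suffix-freeness, which is precisely how you derive them (the pigeonhole argument for (4) via iterating (3) on the powers $L.w^k$ is the standard one). No gaps; the only implicit hypothesis is $L\neq\emptyset$ for items (1)--(2), which you correctly flag as the nondegenerate case.
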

Items~(1) and~(2) are sufficient and necessary conditions for a~prefix-free language, items~(1) and~(3) are sufficient and necessary conditions for a~suffix-free, and item~(4) follows from item~(3).
Following \cite{BrSz15SyntacticComplexityOfSuffixFree}, we say that an (unordered) pair $\{p,q\}$ of distinct states in $Q_M$ is \emph{colliding} (or $p$ \emph{collides} with $q$) in $T(\cD_n)$ if there is a~transformation $t \in T(\cD_n)$ such that $0t = p$ and $rt = q$ for some $r \in Q_M$.
A~pair of states is \emph{focused by} a~transformation $u \in \cT(n)$ if $u$ maps both states of the pair to a~single state $r \in Q_M \cup \{n-2\}$.
We then say that $\{p,q\}$ is \emph{focused to the state $r$}.
By Lemma~\ref{lem:bifix-free}(3), it follows that if $\{p,q\}$ is colliding in $T(\cD_n)$, then there is no transformation $u \in T(\cD_n)$ that focuses $\{p,q\}$.
Hence, in the case of bifix-free languages, colliding states can be mapped to a~single state only if this state is $n-1$.
In contrast to suffix-free languages, we do not consider the pairs from $Q_M \times \{n-2\}$ being colliding, as they cannot be focused.

For $n \ge 2$ we define the set of transformations
\begin{eqnarray*}
\Bbf(n) = \{t \in \cT_n & \mid & 0 \notin Qt\text{, }(n-1)t=n-1\text{, }(n-2)t=n-1\text{, and}\\
& & \text{for all }j\ge 1, 0t^j = n-1\text{ or }\forall q \in Q\setminus \{0,n-1\}\;0t^j \neq qt^j\}.
\end{eqnarray*}
In~\cite{BLY12} it was shown that the transition semigroup $T(\cD_n)$ of a~minimal DFA of a~bifix-free
language must be contained in $\Bbf(n)$.
It contains all transformations $t$ which fix $n-1$, map $n-2$ to $n-1$, and do not focus any
pair which is colliding from $t$.
The condition of fixing the empty state $n-1$ is obvious. State $n-2$ must be always mapped to $n-1$ because the language must be prefix-free, and focusing state $0$ with any other state is forbidden because the language must be suffix-free.
For $n \ge 5$, $\Bbf(n)$ is not a~semigroup, because compositions of its transformations may violate the condition about focusing state $0$.
For example, the transformations $t_1 = (0 \to 1)(1 \to 2)(Q \setminus \{0,1\} \to n-1)$ and $t_2 = (1 \to 2)(Q \setminus \{1,2\} \to n-1)$ are in $\Bbf(n)$, but their composition $t_1 t_2 = (0 \to 2)(1 \to 2)(Q \setminus \{0,1\})$ focuses the pair $\{0,1\}$, thus is not in $\Bbf(n)$.

Since $\Bbf(n)$ is not a~semigroup, no transition semigroup of a~minimal DFA of a~bifix-free language can contain all transformations from $\Bbf(n)$.
Therefore, its cardinality is not a~tight upper bound on the syntactic complexity of bifix-free languages.
A~lower bound on the syntactic complexity was established in~\cite{BLY12}.
We study the following two semigroups that play an~important role for bifix-free languages.

\subsubsection{Semigroup $\Wbf(n)$.}

For $n \ge 3$ we define the semigroup:
\begin{eqnarray*}
\Wbf(n) & = & \{t \in \Bbf(n) \mid 0t \in \{n-2,n-1\}\text{, or}\\
& & 0t \in Q_M\text{ and for all $q \in Q_M$ we have} qt \in \{n-2,n-1\}\}.
\end{eqnarray*}
The name of this semigroup follows from \cite{BLY12}, and the superscript denotes that it is, as we will show, the largest syntactic semigroup of a~bifix-free language when $n \ge 6$.

The following remark summarizes the transformations of $\Wbf(n)$ (illustrated in Fig.~\ref{fig:Wbf_transformations}):
\begin{remark}\label{rem:Wbf_transformations}
$\Wbf(n)$ contains all transformations that:
\begin{enumerate}[align=left,leftmargin=*]
\item[\bf(Type~1)] map $\{0,n-2,n-1\}$ to $n-1$, and $Q_M$ into $Q \setminus \{0\}$, or
\item[\bf(Type~2)] map $0$ to $n-2$, $\{n-2,n-1\}$ to $n-1$, and $Q_M$ into $Q \setminus \{0,n-2\}$, or
\item[\bf(Type~3)] map $0$ to a~state $q \in Q_M$, and $Q_M$ into $\{n-2,n-1\}$.\qedb
\end{enumerate}
\end{remark}
They are the only transformations in $\Wbf(n)$, and we will be referring to these three types of transformations.
\begin{figure}[ht]
\unitlength 8pt\scriptsize
\gasset{Nh=2.5,Nw=2.5,Nmr=1.25,ELdist=0.3,loopdiam=1.5}
\begin{center}\begin{picture}(14,16)(0,-4)
\node[Nframe=n](name)(1,10){(1):}
\node(0)(2,4){0}\imark(0)
\node(1)(6,8){$1$}
\node[Nframe=n,Nw=2,Nh=2](dots)(6,4){$\dots$}
\node[Nw=3.5,Nh=11.5,Nmr=1.25,dash={.5 .25}{.25}](QM)(6,4){}
\node(n-3)(6,0){$n$-$3$}
\node(n-2)(10,4){$n$-$2$}\rmark(n-2)
\node(n-1)(10,0){$n$-$1$}
\drawedge[curvedepth=-6,sxo=-.2,exo=.2](0,n-1){}
\drawedge(n-2,n-1){}
\drawloop[loopangle=270](n-1){}
\drawloop[loopangle=0,syo=3](QM){}
\drawedge(QM,n-2){}
\drawedge(QM,n-1){}
\end{picture}\begin{picture}(14,14)(0,-4)
\node[Nframe=n](name)(1,10){(2):}
\node(0)(2,4){0}\imark(0)
\node(1)(6,8){$1$}
\node[Nframe=n,Nw=2,Nh=2](dots)(6,4){$\dots$}
\node[Nw=3.5,Nh=11.5,Nmr=1.25,dash={.5 .25}{.25}](QM)(6,4){}
\node(n-3)(6,0){$n$-$3$}
\node(n-2)(10,4){$n$-$2$}\rmark(n-2)
\node(n-1)(10,0){$n$-$1$}
\drawedge[curvedepth=8,sxo=-.5,exo=.5](0,n-2){}
\drawedge(n-2,n-1){}
\drawloop[loopangle=270](n-1){}
\drawloop[loopangle=0,syo=3](QM){}
\drawedge(QM,n-1){}
\end{picture}\begin{picture}(14,14)(0,-4)
\node[Nframe=n](name)(1,10){(3):}
\node(0)(2,4){0}\imark(0)
\node(1)(6,8){$1$}
\node[Nframe=n,Nw=2,Nh=2](dots)(6,4){$\dots$}
\node[Nw=3.5,Nh=11.5,Nmr=1.25,dash={.5 .25}{.25}](QM)(6,4){}
\node(n-3)(6,0){$n$-$3$}
\node(n-2)(10,4){$n$-$2$}\rmark(n-2)
\node(n-1)(10,0){$n$-$1$}
\drawedge(0,dots){}
\drawedge(n-2,n-1){}
\drawloop[loopangle=270](n-1){}
\drawedge(QM,n-2){}
\drawedge(QM,n-1){}
\end{picture}\end{center}
\caption{The three types of transformations in $\Wbf(n)$ from Remark~\ref{rem:Wbf_transformations}.}\label{fig:Wbf_transformations}
\end{figure}
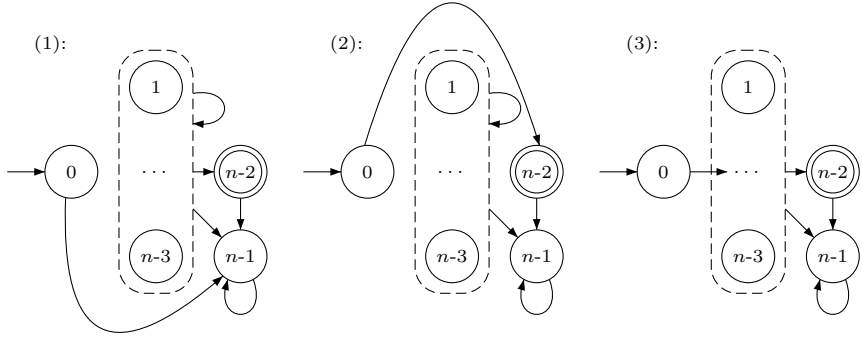

The cardinality of $\Wbf(n)$ is $(n-1)^{n-3}+(n-2)^{n-3}+(n-3)2^{n-3}$.

\begin{proposition}\label{pro:Wbf_unique}
$\Wbf(n)$ is the unique maximal transition semigroup of a~minimal DFA $\cD_n$ of a~bifix-free language in which there are no colliding pairs of states.
\end{proposition}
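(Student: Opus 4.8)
The plan is to prove the two halves that together make $\Wbf(n)$ the maximum of the family of transition semigroups of minimal DFAs of bifix-free languages having no colliding pairs, ordered by inclusion; a maximum is automatically the unique maximal element. The first half is that $\Wbf(n)$ itself belongs to this family. That it is a transition semigroup of a minimal bifix-free DFA I would obtain by exhibiting a generating alphabet and checking minimality and bifix-freeness directly (its cardinality already matches the lower-bound witness of~\cite{BLY12}); that it has no colliding pairs is immediate from Remark~\ref{rem:Wbf_transformations}, since a colliding pair must be witnessed by a $t$ with $0t \in Q_M$ and $qt \in Q_M$ for some $q \in Q_M$, whereas every transformation of $\Wbf(n)$ with $0t \in Q_M$ is of type~(3) and maps all of $Q_M$ into $\{n-2,n-1\}$.

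The second, substantive half is that any such semigroup $T(n)$ satisfies $T(n) \subseteq \Wbf(n)$. The first step is to pin down the complement: since $0 \notin Qt$ for every $t \in \Bbf(n)$, we have $0t \in Q_M \cup \{n-2,n-1\}$, and a transformation $t \in \Bbf(n)$ fails to lie in $\Wbf(n)$ exactly when $0t = p \in Q_M$ while $qt = r \in Q_M$ for some $q \in Q_M$. So I would fix a hypothetical $t \in T(n) \setminus \Wbf(n)$ together with a word $w \in \Sigma^+$ inducing it, and split on whether $r \neq p$ or $r = p$.

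When $r \neq p$, the transformation $t$ itself witnesses (via the middle state $q$) that $\{p,r\}$ is a colliding pair, contradicting the hypothesis on $T(n)$. When $r = p$, the no-colliding hypothesis gives nothing, and I would instead invoke suffix-freeness through Lemma~\ref{lem:bifix-free}(3). By minimality, the middle state $q$ is reachable from $0$ by some nonempty word $x$, so $0\cdot x = q$ and $0\cdot xw = qt = p = 0\cdot w$; in quotient terms $L.w = K_p = L.xw$ with $x,w \in \Sigma^+$ and $K_p \neq \emptyset$, contradicting Lemma~\ref{lem:bifix-free}(3). Hence $T(n)\setminus\Wbf(n)=\emptyset$.

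The step I expect to be the main obstacle is recognizing the right case split and seeing why it is exhaustive. The transformations that merge $0$ and a middle state onto a single middle state $p$ do \emph{not} create a colliding pair, so they escape the no-colliding hypothesis entirely; the point is that such transformations are already excluded in \emph{every} bifix-free transition semigroup by the intrinsic suffix-free constraint of Lemma~\ref{lem:bifix-free}(3). Getting both mechanisms to cover $\Bbf(n)\setminus\Wbf(n)$ — the colliding-pair hypothesis for $r\neq p$ and suffix-freeness for $r=p$ — is the crux; the closure of $\Wbf(n)$ under composition and the absence of colliding pairs in it are routine consequences of the three transformation types in Remark~\ref{rem:Wbf_transformations}.
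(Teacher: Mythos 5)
Your proof is correct and follows essentially the same route as the paper: the heart of both arguments is that a transformation with $0t \in Q_M$ and $qt \in Q_M$ for some $q \in Q_M$ (with $qt \neq 0t$) would witness a colliding pair, so every $t$ in a collision-free transition semigroup must be of one of the three types constituting $\Wbf(n)$. The only divergences are cosmetic: the paper shows $\Wbf(n)$ itself is collision-free by noting that every pair of states of $Q_M$ is focused by some transformation of $\Wbf(n)$ (and a focused pair cannot collide), whereas you argue directly that no type-3 transformation can witness a collision; and the paper disposes of your $r=p$ case implicitly via the containment $T(n) \subseteq \Bbf(n)$, while you re-derive it from suffix-freeness (Lemma~\ref{lem:bifix-free}(3)) --- both are valid.
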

\begin{proof}
Since for any pair $p,q \in Q_M$ there is the transformation $(0 \to n-1)(\{p,q\} \to n-2)(n-2 \to n-1)$ in the semigroup, the pair $\{p,q\}$ cannot be colliding.
Therefore, there are no colliding pairs in $\Wbf(n)$.

Let $T(\cD_n)$ be a~transition semigroup in which there are no colliding pairs of states.
Consider $t \in T(\cD_n)$.
If $0t = n-1$ then $t \in \Wbf(n)$ as is a~transformation of Type~1.
If $0t = n-2$ then $t \in \Wbf(n)$ as is a~transformation of Type~2.
If $0t \in Q_M$, then $qt \in \{n-2,n-1\}$, as otherwise $\{0t,qt\}$ would be a~colliding pair, so $t$ is a~transformation of Type~3 from.
Therefore, $T(\cD_n)$ is a~subsemigroup of $\Wbf(n)$, and so $\Wbf(n)$ is unique maximal.
\end{proof}

In~\cite{BLY12} it was shown that for $n \ge 5$, there exists a~witness DFA of a~bifix-free language whose transition semigroup is $\Wbf(n)$ over an~alphabet of size $(n-2)^{n-3}+(n-3)2^{n-3}+2$ (and 18 if $n=5$).
Now we slightly refine the witness from \cite[Proposition~31]{BLY12} by reducing the size of the alphabet to $(n-2)^{n-3} + (n-3)2^{n-3} - 1$, and then we show that it cannot be any smaller.
\begin{definition}[Bifix-free witness]
For $n \ge 4$, let $\cW(n) = (Q,\Sigma,\delta,0,\{n-2\})$, where $Q = \{0,\ldots,n-1\}$ and $\Sigma$ contains the following letters:
\begin{enumerate}
\item $b_i$, for $1 \le i \le n-3$, inducing the transformations $(0 \to n-1)(i \to n-2)(n-2 \to n-1)$,
\item $c_i$, for every transformation of Type~(2) that is different from $(0 \to n-2)(Q_M \to n-1)(n-2 \to n-1)$,
\item $d_i$, for every transformation of Type~(3) that is different from $(0 \to q)(Q_M \to n-1)(n-2 \to n-1)$ for some state $q \in Q_M$.
\end{enumerate}
Altogether, we have $|\Sigma| = (n-3) + ((n-2)^{n-3}-1) + (n-3)(2^{n-3}-1) = (n-2)^{n-3} + (n-3)2^{n-3} - 1$.
For $n=4$ two letters suffice, since the transformation of $b_1$ is induced by $c_i d_i$, where $c_i\colon (0 \to 2)(2 \to 3)$ and $d_i\colon (0 \to 1)(1 \to 2)(2 \to 3)$.
\end{definition}

\begin{proposition}
The transition semigroup of $\cW(n)$ is $\Wbf(n)$.
\end{proposition}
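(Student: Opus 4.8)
The plan is to prove the two required containments, namely that the transition semigroup of $\cW(n)$ equals $\Wbf(n)$. The easier direction is $\subseteq$: every generator $b_i$, $c_i$, $d_i$ is by definition a transformation of Type~1, 2, or~3 from Remark~\ref{rem:Wbf_transformations}, hence lies in $\Wbf(n)$; since $\Wbf(n)$ is a semigroup, every composition of generators stays inside it, so the generated semigroup is contained in $\Wbf(n)$. For the reverse inclusion $\supseteq$, I must exhibit every transformation of $\Wbf(n)$ as a product of the chosen generators. The letters $c_i$ and $d_i$ already supply all Type~2 and Type~3 transformations \emph{except} the two explicitly excluded ones, and the letters $b_i$ supply the $n-3$ specific Type~1 transformations $(0 \to n-1)(i \to n-2)(n-2 \to n-1)$. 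So the real work is to generate all remaining Type~1 transformations together with the two excluded transformations.

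Concretely, I would first recover the two excluded generators as products. The excluded Type~2 transformation $(0 \to n-2)(Q_M \to n-1)(n-2 \to n-1)$ and the excluded Type~3 transformations $(0 \to q)(Q_M \to n-1)(n-2 \to n-1)$ should be expressible by composing a surviving $c_i$ or $d_i$ with a suitable $b_i$ (or a short product thereof), exploiting the fact that composing behind a transformation that sends $Q_M$ into $\{n-2,n-1\}$ collapses the middle states appropriately. Next, and this is the core of the argument, I would generate an arbitrary Type~1 transformation $t$, which maps $\{0,n-2,n-1\}$ to $n-1$ and sends $Q_M$ into $Q\setminus\{0\}$. The idea is to write $t$ as $d \cdot s$ (or $b \cdot s$) where the first factor is a Type~3 letter that already forces $0 \mapsto n-1$ (after one more step) and fixes or controls the images of $Q_M$, and the second factor $s$ is built from the $b_i$ to adjust the individual images of the middle states one at a time. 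Because each $b_i$ relocates a single middle state $i$ to $n-2$ while acting as identity on the other middle states, a product of $b_i$'s can route middle states into the target configuration, and a final letter sending everything appropriately to $n-1$ closes the construction.

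The main obstacle will be the last step: realizing an \emph{arbitrary} map of $Q_M$ into $Q\setminus\{0\}$ using only the restricted generator set, since the $b_i$ individually only hit the single target $n-2$ and must be combined with the $c_i$, $d_i$ to reach arbitrary middle-state images without violating the bifix-free constraints (in particular without ever mapping $0$ to an image it should not have, and without focusing colliding pairs). I expect the cleanest route is to fix the order of states in $Q_M$ and argue by a counting or surjectivity argument: show that the set of products of generators already contains enough distinct Type~1 transformations, and then invoke $|\Wbf(n)| = (n-1)^{n-3}+(n-2)^{n-3}+(n-3)2^{n-3}$ together with the already-proven inclusion $\subseteq$ to conclude equality by cardinality. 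That is, rather than explicitly factoring every element, I would prove that the generated semigroup has size at least $|\Wbf(n)|$ by producing an injective correspondence between target transformations and distinct products, and combine this with the $\subseteq$ direction to force equality.
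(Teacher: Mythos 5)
There is a genuine gap in the core of your argument, namely the generation of an arbitrary Type~1 transformation. You propose to write such a $t$ as $d\cdot s$ with $d$ a Type~3 letter, or as $b\cdot s$ with the adjustment $s$ built from the $b_i$. Neither can work. A Type~3 letter maps all of $Q_M$ into $\{n-2,n-1\}$, and every transformation in $\Bbf(n)$ sends both $n-2$ and $n-1$ to $n-1$; hence any product beginning with a Type~3 letter maps all of $Q_M$ to $n-1$ and can only ever produce one very special Type~1 transformation. Likewise, the letters $b_i$ act on $Q_M$ only by sending individual states to $n-2$ and fixing the rest, so no product of $b_i$'s can ever move a middle state to a \emph{different middle state}; a Type~1 transformation with, say, $1t=2$ is unreachable by your scheme. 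The missing idea is that the Type~2 letters $c_i$ must carry the bulk of the work: they realize (almost) every map of $Q_M$ into $Q\setminus\{0,n-2\}$, and the paper factors a Type~1 transformation $t$ as $x\,b_q$ where $x$ is a single $c_i$ that routes the set $S=\{r\in Q_M \mid rt=n-2\}$ to an auxiliary state $q\notin Q_M t$ and agrees with $t$ elsewhere on $Q_M$, and the single trailing $b_q$ then redirects $q$ to $n-2$ (with the case $S=\emptyset$ handled by a product of two $c_i$'s). Your decomposition has the roles of the letter types reversed.

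Your fallback — a cardinality argument producing $|\Wbf(n)|$ distinct products and invoking the already-established inclusion in $\Wbf(n)$ — is sound in principle, but you do not carry it out, and doing so would in any case require exhibiting essentially the same factorizations. Your treatment of the two excluded generators (compose a surviving $d_i$ with something already generated) is in the right spirit and matches the paper, but it depends on first having the Type~1 transformations available, which is exactly the step that is broken.
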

\begin{proof}
Consider a~transformation $t$ of Type~1.
Let $S \subseteq Q_M$ be the states that are mapped to $n-2$ by $t$.
If $S = \emptyset$, then $t = s x$, where $s = (0 \to n-2)(n-2 \to n-1)$ is the transformation induced by some $c_i$, and $x$ is the transformation induced by some $c_j$ that maps $Q_M$ in the same way as $t$.
If $S \neq \emptyset$, then let $q \in Q_M$ be the state such that $q \notin Q_M t$.
Let $x$ be the transformation induced by $c_i$ that maps the states from $S$ to $q$ and $Q_M \setminus S$ in the same way as $t$.
Then $t = x b_q$, since $0 x b_q = n-1$, $S x b_q = q b_q = n-2$, and for $p \in (Q_M \setminus S)$ we know that $p x b_q = p t$.
Hence, we have all transformations of Type~1 in $\Wbf(n)$.

It remains to show how to generate the two missing transformations of Type~2 and Type~3 that do not have the corresponding generators $c_i$ and $d_i$, respectively.
Let $u = (0 \to q)(Q_M \to n-2)(n-2 \to n-1)$, which is induced by a~$d_i$.
Consider the transformation $t = (0 \to q)(Q_M \to n-1)(n-2 \to n-1)$.
Then $t = u v$, where $v = (0 \to n-1)(n-2 \to n-1)$ is of Type~1.
Consider the transformation $t = (0 \to n-2)(Q_M \to n-1)(n-2 \to n-1)$.
Then $t = u v$, where $v = (0 \to n-1)(q \to n-2)(n-2 \to n-1)$ is of Type~1.
\end{proof}

\begin{proposition}\label{pro:Wbf_alphabet_lower_bound}
For $n \ge 5$, at least $(n-2)^{n-3} + (n-3)2^{n-3} - 1$ generators are necessary to generate $\Wbf(n)$.
\end{proposition}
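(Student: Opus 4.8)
The plan is to bound the number of generators from below by exhibiting a large family of elements that must lie in \emph{every} generating set, and to split this family into three pairwise disjoint parts whose sizes add up to exactly $(n-2)^{n-3}+(n-3)2^{n-3}-1$. Recall the general principle for a finite semigroup: an element must belong to every generating set as soon as it cannot be written as a product $yz$ of two elements both different from it; in particular this holds for any element that is not a product $yz$ at all.

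First I would dispose of types~2 and~3. I claim that if $t$ is of type~2 or type~3 (Remark~\ref{rem:Wbf_transformations}) and $t=yz$ in $\Wbf(n)$, then $qt=n-1$ for all $q\in Q_M$. Indeed, $0t=(0y)z$ and $0y\neq 0$ because $0\notin Qy$ for every $y\in\Wbf(n)$; since $(n-2)z=(n-1)z=n-1$ while $0t\neq n-1$ (being of type~2 or~3), we must have $0y\in Q_M$, i.e.\ $y$ is of type~3. Then $Q_M y\subseteq\{n-2,n-1\}$, so $Q_M t=(Q_M y)z\subseteq\{n-1\}$. Consequently every type-2 or type-3 transformation with $Q_M t\neq\{n-1\}$ is not a product and so must be a generator; there are $(n-2)^{n-3}-1$ of these of type~2 (all but $(0\to n-2)(Q_M\to n-1)(n-2\to n-1)$) and $(n-3)(2^{n-3}-1)$ of type~3 (all but the $n-3$ maps $(0\to q)(Q_M\to n-1)(n-2\to n-1)$).

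The remaining $n-3$ generators must come from type~1, and this is the delicate point: \emph{no} type-1 transformation is indecomposable (for instance each $b_m$ factors as a product of two type-1 maps obtained by inserting a non-trivial permutation of $Q_M$), so a naive count of indecomposables falls short by exactly $n-3$. Instead I would argue with images. For $m\in Q_M$ consider $b_m=(0\to n-1)(m\to n-2)(n-2\to n-1)$; its image $Qb_m$ has size $n-2$ (the maximum for type~1), it fixes the $n-4\ge 1$ middle states other than $m$, and $Q_M b_m=(Q_M\setminus\{m\})\cup\{n-2\}$. Write $b_m=g_1\cdots g_k$ with all $g_i$ in the generating set and put $p=g_1\cdots g_{k-1}$ (if $k=1$ then $b_m=g_1$ is already a generator of the required form). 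The last factor $g=g_k$ cannot be of type~2, since $g$ must realise $m\,b_m=n-2$ as $(mp)g=n-2$ while a type-2 map attains the value $n-2$ only at $0\notin Qp$; and $g$ cannot be of type~3, since $g$ must fix a middle state (it realises $j\,b_m=j\in Q_M$ for some $j\neq m$) whereas a type-3 map sends all of $Q_M$ into $\{n-2,n-1\}$ and outputs a middle value only from $0$. Hence $g$ is of type~1, and as $|Qg|\ge|Qb_m|=n-2$ it has image of size exactly $n-2$. A short computation with images then forces $Qp=Q_M\cup\{n-1\}$ with $p$ restricting to a permutation of $Q_M$ (since $g$ sends $\{0,n-2,n-1\}$ to the single state $n-1$ and is injective on $Q_M$, the set $Qp$ must meet each fibre of $g$ exactly once, avoid $0$, and contain $n-1$ but not $n-2$). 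Therefore $Q_M b_m=(Q_M p)g=Q_M g$, so $g$ is a type-1 transformation of maximal image with $Q_M g=(Q_M\setminus\{m\})\cup\{n-2\}$. For distinct $m$ these images differ, so the generating set contains at least $n-3$ distinct type-1 transformations, one for each $m\in Q_M$.

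Finally, the three families---type-2 generators, type-3 generators, and the $n-3$ type-1 transformations---are pairwise disjoint, so any generating set has at least $\bigl((n-2)^{n-3}-1\bigr)+(n-3)(2^{n-3}-1)+(n-3)=(n-2)^{n-3}+(n-3)2^{n-3}-1$ elements, matching the witness $\cW(n)$. The main obstacle is the type-1 argument of the third paragraph: unlike types~2 and~3, those transformations are individually decomposable, and the bound is recovered only by treating the image $Q_M t$ as an invariant that can be \emph{created} solely by a generator carrying that same image. The hypothesis $n\ge 5$ is used precisely to guarantee that $b_m$ fixes at least one middle state ($n-4\ge 1$), which is what excludes a type-3 last factor and makes the argument uniform in $n$.
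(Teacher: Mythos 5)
Your proof is correct and follows essentially the same route as the paper's: the type-2 and type-3 transformations with $Q_M t\neq\{n-1\}$ are shown indecomposable because any factorization $t=yz$ with $0t\neq n-1$ forces $y$ to be of type~3 and hence $Q_M t=\{n-1\}$, and each $b_m$ forces a distinct type-1 generator as the last factor of any factorization, with the three families disjoint and summing to the stated bound. The only blemish is the parenthetical claim that $Qp=Q_M\cup\{n-1\}$ and meets each fibre of $g$ exactly once (if $p$ is of type~2 then $n-2=0p\in Qp$), but this is harmless: all you use is that $p$ restricts to a permutation of $Q_M$, which holds because $p$ is injective on $Q_M$ and no state of $Q_M$ may be sent by $p$ into $\{n-2,n-1\}$, as $g$ would then send it to $n-1$ while $qb_m\neq n-1$ for $q\in Q_M$.
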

\begin{proof}
Consider a~transformation $t \in \Wbf(n)$ of Type~(2) that is different from $(0 \to n-2)(Q_M \to n-1)(n-2 \to n-1)$.
If $t$ were the composition of two transformations from $\Wbf(n)$, then either $t$ maps $0$ to $n-1$, or $t$ maps $Q_M$ into $\{n-2,n-1\}$.
Since neither is the case, $t$ must be a~generator.
There are $(n-2)^{n-3} - 1$ such generators.

Consider a~transformation $t \in \Wbf(n)$ of Type~(3) that is different from $(0 \to q)(Q_M \to n-1)(n-2 \to n-1)$ for some $q \in Q_M$.
Note that to generate $t$ a~transformation of Type~(3) must be used, but the composition of such a~transformation with any other transformation from $\Wbf(n)$ maps every state from $Q_M$ to $n-1$.
Hence, $t$ must be used as a~generator, and there are $(n-3)(2^{n-3}-1)$ such generators.

Consider a~transformation $t \in \Wbf(n)$ of Type~(1) of the form $(0 \to n-1)(q \to n-2)(n-2 \to n-1)$ for some $q \in Q_M$.
Note that to generate $t$, transformations of Type~(3) cannot be used because $Q_M$ is not mapped into $\{n-2,n-1\}$ if $|Q_M| \ge 3$.
Let $t = g_1 \dots g_k$, where $g_i$ are generators.
Since a~transformation of Type~(2) does not map $q$ to $n-2$, $g_k$ cannot be of Type~(2), and so must be of Type~(1).
Moreover $Q_M g_1 \dots g_{k-1} = Q_M$, as otherwise $t$ would map a~state $p \in Q_M$ to $n-1$.
Hence, $Q_M g_k = Q_M \setminus \{q\}$, and for every selection of $q$ there exists a~different $g_k$.
There are $(n-3)$ such generators.
\end{proof}

\subsubsection{Semigroup $\Vbf(n)$.}

For $n \ge 3$ we define the semigroup
\begin{eqnarray*}
\Vbf(n) = \{t \in \Bbf(n) & \mid & \text{for all }p,q \in Q_M\text{ where }p \neq q, pt = qt = n-1\text{ or }pt \neq qt\}.
\end{eqnarray*}
The name of this semigroup follows from \cite{BLY12}, and the superscript denotes that it is, as we will show, the largest syntactic semigroup of a~bifix-free language when $n \le 5$.
In this semigroup, there are all transformations from $\Bbf(n)$ that do not focus any pair.
By taking only such transformations, we are allowed to have all pairs of states possibly colliding.

\begin{proposition}\label{pro:Vbf_unique}
$\Vbf(n)$ is the unique maximal transition semigroup of a~minimal DFA $\cD_n$ of a~bifix-free language in which all pairs of states from $Q_M$ are colliding.
\end{proposition}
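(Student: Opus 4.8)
The plan is to mirror the proof of Proposition~\ref{pro:Wbf_unique} and establish two facts whose conjunction yields the claim. First, that every pair of states from $Q_M$ is colliding in $\Vbf(n)$ itself; second, that every transition semigroup $T(n)$ of a minimal DFA of a bifix-free language in which all such pairs are colliding is contained in $\Vbf(n)$. Since $\Vbf(n)$ is a semigroup inside $\Bbf(n)$ that has the required property and also dominates every competitor with that property, these two facts together make it the unique maximal such semigroup. Throughout I assume $n \ge 5$, so that $|Q_M| = n-3 \ge 2$ and there is at least one pair to consider; for $n \le 4$ the set $Q_M$ has at most one state, the condition on pairs is vacuous, and the statement is degenerate.

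For the first fact, fix distinct $p, q \in Q_M$ and choose any $r \in Q_M \setminus \{p\}$, which exists since $|Q_M| \ge 2$. I would exhibit the transformation $t$ defined by $0t = p$, $rt = q$, and $st = n-1$ for every $s \in Q \setminus \{0, r\}$, and verify that $t \in \Vbf(n)$ and that it witnesses the collision of $\{p,q\}$. The collision is immediate, since $0t = p$ and $rt = q$ with $r \in Q_M$. For membership in $\Vbf(n)$, note that the image is $Qt = \{p, q, n-1\}$, so $0 \notin Qt$; both $n-2$ and $n-1$ map to $n-1$; and among the states of $Q_M$ only $r$ has an image different from $n-1$, namely $q$, so $t$ is injective on $Q_M$ modulo $n-1$. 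The one delicate point is the orbit clause of $\Bbf(n)$: since $p \neq r$ we have $pt = n-1$, so the orbit of $0$ is $0 \to p \to n-1$ and $0t^j = n-1$ for all $j \ge 2$; for $j = 1$ the images of $\{1, \ldots, n-2\}$ are exactly $\{q, n-1\}$, and $p \notin \{q, n-1\}$ because $p \neq q$ and $p \neq n-1$. Hence the clause holds and $t \in \Vbf(n)$.

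For the second fact, let $T(n)$ be any transition semigroup of a minimal DFA of a bifix-free language in which all pairs of states from $Q_M$ are colliding; recall $T(n) \subseteq \Bbf(n)$. Take $t \in T(n)$ and suppose, for contradiction, that $t \notin \Vbf(n)$. Then there are distinct $p, q \in Q_M$ with $pt = qt = s$ and $s \neq n-1$. Since $t \in \Bbf(n)$ gives $0 \notin Qt$, we have $s \neq 0$, so $s \in Q_M \cup \{n-2\}$; that is, $t$ focuses $\{p,q\}$ to a state in $Q_M \cup \{n-2\}$. But $\{p,q\}$ is colliding by hypothesis, and by Lemma~\ref{lem:bifix-free} no transformation in $T(n)$ can focus a colliding pair, a contradiction. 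Therefore $t \in \Vbf(n)$, and so $T(n) \subseteq \Vbf(n)$.

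The two facts together give the result. The second is essentially immediate from Lemma~\ref{lem:bifix-free}, so the only real work lies in the first, and the main obstacle there is checking that the chosen witness satisfies the orbit clause of $\Bbf(n)$; routing the orbit of $0$ directly to $n-1$ in two steps is exactly what keeps this verification clean and keeps the witness inside $\Vbf(n)$.
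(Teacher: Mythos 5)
Your proof is correct and follows essentially the same two-step strategy as the paper: exhibit a transformation of $\Vbf(n)$ witnessing that each pair from $Q_M$ is colliding, then observe that any $t$ violating the defining condition of $\Vbf(n)$ would focus a colliding pair, contradicting Lemma~\ref{lem:bifix-free}. The only difference is cosmetic — the paper uses the witness $(0 \to p)(p \to n-1)(n-2 \to n-1)$, which fixes $Q_M \setminus \{p\}$, while yours sends everything outside $\{0,r\}$ to $n-1$; both verifications go through.
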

\begin{proof}
Let $p,q \in Q_M$ be two distinct states.
Then $\{p,q\}$ is colliding because of the transformation $(0 \to p)(p \to n-1)(n-2 \to n-1) \in \Vbf(n)$.
Therefore, all pairs of states from $Q_M$ are colliding.

Let $T(\cD_n)$ be a~transition semigroup with all colliding pairs of states.
Consider a~transformation $t \in T(\cD_n)$.
Then for every distinct $p,q \in Q_M$, we have $pt \neq qt$ or $pt = qt = n-1$, as otherwise $\{p,q\}$ would be focused.
By definition of $\Vbf(n)$, there are all such transformations $t$ in $\Vbf(n)$.
Therefore, $\Vbf(n)$ is unique maximal.
\end{proof}

In~\cite{BLY12} it was shown that for $n \ge 2$ there exists a~DFA for a~bifix-free language
whose transition semigroup is $\Vbf(n)$ over an~alphabet of size $(n-2)!$.
We prove that this is an~alphabet of minimal size that generates this transition semigroup.
\begin{proposition}\label{pro:Vbf_alphabet_lower_bound}
To generate $\Vbf(n)$ at least $(n-2)!$ generators must be used.
\end{proposition}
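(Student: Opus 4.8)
The plan is to identify a large family of transformations in $\Vbf(n)$ that cannot be written as a nontrivial product, so that each such transformation must appear among the generators. I would look for transformations $t$ that are ``irreducible'' in the sense that whenever $t = uv$ with $u,v \in \Vbf(n)$, one of $u,v$ must be the identity (which is not in the semigroup) — or more realistically, transformations whose every factorization forces a factor outside $\Vbf(n)$. The target count $(n-2)!$ strongly suggests the relevant family consists of the transformations that act as a \emph{permutation} on $Q_M$: a permutation of the $n-3$ middle states, combined with a fixed behavior on the remaining states ($0 \mapsto$ some fixed point, $n-2 \mapsto n-1$, $n-1 \mapsto n-1$). Note $(n-2)! = (n-3)! \cdot (n-2)$, so the factor $(n-2)$ likely accounts for the choice of where $0$ is mapped or, more naturally, a permutation of a set of size $n-2$; I would first pin down exactly which set of transformations has cardinality $(n-2)!$ and verify each lies in $\Vbf(n)$ by checking the defining injectivity-on-$Q_M$ condition of the semigroup.

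\textbf{Key steps in order.} First I would single out the candidate set $P$ of transformations realizing a permutation structure on a specified $(n-2)$-element domain, and confirm $P \subseteq \Vbf(n)$ and $|P| = (n-2)!$. Second, and this is the crux, I would show each $t \in P$ is irreducible: suppose $t = uv$ with $u,v \in \Vbf(n)$. The key observation is that a permutation-type transformation has full rank on its active set, so any factorization $t = uv$ forces \emph{both} $u$ and $v$ to preserve that rank, i.e. to be injective on the relevant states. Combined with the constraints of $\Bbf(n)$ — every transformation maps $n-2$ and $n-1$ to $n-1$, and $0 \notin Qt$ — I would argue that the only transformations in $\Vbf(n)$ that are injective enough to participate in such a factorization are permutation-type themselves, and that composing two of them either drops the rank (contradiction) or can only reproduce $t$ if one factor is effectively the identity on the active part, which $\Vbf(n)$ cannot supply without already equaling $t$. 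Third, I would conclude that every element of $P$ must itself be a generator, giving at least $|P| = (n-2)!$ generators, matching the known upper bound from \cite{BLY12}.

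\textbf{The main obstacle} I anticipate is making precise the irreducibility argument in step two, because $\Vbf(n)$ is not a group and the state $0$ plays an asymmetric role: $0$ is never in any image, so $0$ behaves as a ``source'' and its target under $t$ is determined separately from the permutation on $Q_M$. I must carefully analyze how $0$ is routed through a product $uv$ and rule out the possibility that a lower-rank transformation $u$ followed by some $v$ accidentally reconstitutes a full-rank permutation on $Q_M$; the subtlety is that $v$ could in principle ``repair'' collisions introduced by $u$ only by sending colliding states to $n-1$, but the defining condition of $\Vbf(n)$ forbids $v$ from focusing any pair of $Q_M$-states to anything other than $n-1$, so any rank drop in $u$ is irreparable if $t$ has full rank on $Q_M$. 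Pinning down this rank-monotonicity rigorously — that the rank of $t$ restricted to $Q_M$ cannot exceed $\min$ of the ranks of its factors, together with the fact that $\Vbf(n)$ contains no identity — is where the real work lies, after which the counting in step three is routine.
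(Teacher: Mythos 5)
Your overall strategy is the paper's: single out $(n-2)!$ transformations of $\Vbf(n)$ that cannot arise as a product of two elements of the semigroup, and conclude that each must be a generator. The family you are circling around is the right one once you commit to your second reading of the count: the transformations mapping the $(n-2)$-element set $Q_M\cup\{0\}$ \emph{bijectively onto} $Q_M\cup\{n-2\}$. These are not permutations of $Q_M$ (that reading gives the wrong count, since a transformation permuting $Q_M$ is forced by the collision condition to send $0$ into $\{n-2,n-1\}$); and membership in $\Bbf(n)$ does need a short check, but follows because injectivity on $Q_M\cup\{0\}$ together with $0\notin Qt$ keeps the forward orbit of $0$ disjoint from the forward orbits of the states of $Q_M$ until it is absorbed at $n-1$.

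The gap is at the crux of your step two. The disjunction ``composing two permutation-type elements either drops the rank or can only reproduce $t$ if one factor is effectively the identity'' is asserted rather than proved, and its second branch never actually occurs; what is missing is the one observation that closes the case where the first factor $u$ has full rank. If $t=uv$ is to be a bijection of $Q_M\cup\{0\}$ onto $Q_M\cup\{n-2\}$, then $u$ must be injective on $Q_M\cup\{0\}$ and its image there must avoid both $0$ (never in any image) and $n-1$ (absorbing, so hitting it would put $n-1$ in the image of $t$); by counting, that image is therefore \emph{exactly} $Q_M\cup\{n-2\}$, so some $q_0\in Q_M\cup\{0\}$ has $q_0u=n-2$, whence $q_0uv=(n-2)v=n-1$ and $uv$ cannot be one of the distinguished transformations. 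This pigeonhole through $n-2$ is the entire content of the paper's proof, which phrases it as: every product of two elements of $\Vbf(n)$ sends some state of $Q_M\cup\{0\}$ to $n-1$. It also subsumes your rank-repair discussion for the case where $u$ already identifies two states. Your plan would likely uncover this observation when executed, but as written the decisive step is not justified.
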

\begin{proof}
First we show that the composition of any two transformations $t,t' \in \Vbf(n)$ maps a~state different from $n-1$ to state $n-1$.
Suppose that $t$ does not map any state to $n-1$.
If $0t = n-2$, then $0tt' = n-1$.
If $0t \in Q_M$, then some state $q \in Q_M$ must be mapped either to $n-2$ or to $n-1$, and again $qtt' = n-1$.

Consider all transformations $t \in \Vbf(n)$ that map $Q_M \cup \{0\}$ onto $Q_M \cup \{n-2\}$, hence they must be bijections between these states.
There are $(n-2)!$ such transformations, and since they cannot be generated by compositions, they must be generators.
\end{proof}

\section{Upper bound on the syntactic complexity of bifix-free languages}

Our main result shows that the lower bound $(n-1)^{n-3}+(n-2)^{n-3}+(n-3)2^{n-3}$ on the syntactic complexity of bifix-free languages is also an~upper bound for $n \ge 8$.

We consider a~minimal DFA $\cD_n=(Q,\Sigma,\delta,0,\{n-2\})$, where $n-2$ is the only final state, and $n-1$ is the empty state, recognizing an~arbitrary bifix-free language.
Let $T(\cD_n)$ be the transition semigroup of $\cD_n$.
We will show that $T(\cD_n)$ is not larger than $\Wbf(n)$.

Note that the semigroups $T(\cD_n)$ and $\Wbf(n)$ share the set $Q$, and in both of them $0$, $n-2$, and $n-1$ play the role of the initial, final, and empty state, respectively.
When we say that a~pair of states from $Q$ is \emph{colliding} we always mean that it is colliding in $T(\cD_n)$.

First, we state the following lemma, which generalizes some arguments that we use frequently in the proof of the main theorem.
\begin{lemma}\label{lem:orbits}
Let $t,\e{t} \in T(\cD_n)$ and $s \in \Wbf(n)$ be transformations.
Suppose that:
\begin{enumerate}
\item All states from $Q_M$ whose mapping is different in $t$ and $s$ belong to $C$, where $C$ is either an~orbit in $s$ or is the tree of a~state in $s$.
\item All states from $Q_M$ whose mapping is different in $\e{t}$ and $s$ belong to $\e{C}$, where $\e{C}$ is either an~orbit in $s$ or is the tree of a~state in $s$.
\item The transformation $s^i t^j$, for some $i,j \ge 0$, focuses a~colliding pair whose states are in $C$.
\end{enumerate}
Then either $C \subseteq \e{C}$ or $\e{C} \subseteq C$.
In particular, if $C$ and $\e{C}$ are both orbits or both trees rooted in a~state mapped by $s$ to $n-1$, then $C = \e{C}$.
\end{lemma}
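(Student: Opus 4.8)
The plan is to argue by contradiction. I would assume that neither $C \subseteq \e{C}$ nor $\e{C} \subseteq C$ holds and then manufacture a transformation in $T(n)$ that focuses the colliding pair supplied by hypothesis~(3); this contradicts the consequence of Lemma~\ref{lem:bifix-free} that a colliding pair of $T(n)$ can never be focused by a transformation of $T(n)$.

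The first step is to reduce the negation of the two inclusions to \emph{disjointness}. Since $s$ is a single transformation, its underlying digraph is functional, and I would use the standard facts that two orbits are either equal or disjoint, that two trees of states are either nested or disjoint, and that the tree of a state lies inside a single orbit. Together these give that if $C$ and $\e{C}$ are incomparable under inclusion, then $C \cap \e{C} = \emptyset$; I would carry this disjointness through the rest of the argument.

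The core of the proof is the substitution $s \rightsquigarrow \e{t}$. Writing $\{a,b\} \subseteq C$ for the colliding pair, hypothesis~(3) gives $a s^i t^j = b s^i t^j \in Q_M \cup \{n-2\}$. Because $T(n)$ is closed under composition, $\e{t}^i t^j \in T(n)$, and I would claim that it also focuses $\{a,b\}$, which is the contradiction sought. The key observation is that $\e{t}$ and $s$ agree on every state of $C$ that can occur in a forward trajectory: by hypothesis~(2) they can differ on a state of $Q_M$ only inside $\e{C}$, and $C \cap \e{C} = \emptyset$; the state $0$ has in-degree $0$ in every element of $\Bbf(n)$, so it never appears inside a forward trajectory; and the states $n-2,n-1$ are handled identically by all of $\Bbf(n)$ via $n-2 \mapsto n-1 \mapsto n-1$. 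Hence, as long as the forward $s$-trajectories of $a$ and $b$ remain in $C$, they coincide with the $\e{t}$-trajectories, so $a\e{t}^i = as^i$ and $b\e{t}^i = bs^i$, and therefore $a\e{t}^i t^j = b\e{t}^i t^j$.

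What remains is to confine the trajectories to $C$, which I would do by the shape of $C$. If $C$ is an orbit of $s$, it is closed under $s$, so both trajectories stay in $C$ and the claim is immediate. If $C$ is the tree of a state $v$, the trajectories run through the tree to the root $v$ and then leave $C$; here I would use that $\{a,b\}$ is \emph{focused}, so its common image lies in $Q_M \cup \{n-2\}$ and not at the sink $n-1$, to argue that $a$ and $b$ must already have merged at a state of $C$ before the trajectory exits. This tree case is exactly the step needing the most care: one must exclude the possibility that $a$ and $b$ meet only at $n-1$ (which is not a focusing) and confirm the meet occurs within $i$ steps; the cleanest instance, which is also what the ``in particular'' clause needs, is $vs = n-1$, where leaving the tree lands immediately on the fixed sink and $\e{t}$ reproduces the transition $v \mapsto n-1$ because $v \in C$. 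The resulting contradiction establishes that one of the two inclusions holds. Finally, for the ``in particular'' statement: two orbits related by inclusion are equal; and if $C$ and $\e{C}$ are trees rooted at states that $s$ maps to $n-1$, then $C \subseteq \e{C}$ forces the root of $C$ to reach the root of $\e{C}$, but from a root $v$ with $vs = n-1$ the only reachable states are $v$ and $n-1$, and a tree root is never the cyclic state $n-1$, so the roots coincide and $C = \e{C}$.
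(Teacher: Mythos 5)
Your overall strategy matches the paper's: assume $C$ and $\e{C}$ are incomparable, reduce to $C \cap \e{C} = \emptyset$, and replace $s$ by $\e{t}$ inside $s^i t^j$ to produce a transformation of $T(n)$ that focuses the colliding pair. The reduction to disjointness and the treatment of $0$, $n-2$, $n-1$ are fine. The gap is in how you control the trajectories. You try to keep the $s$-trajectories of $a$ and $b$ inside $C$, and when $C$ is a tree you attempt to patch the exit through the root by arguing that $a$ and $b$ ``must already have merged at a state of $C$ before the trajectory exits.'' That claim is neither needed nor true in general: the pair is focused by $s^i t^j$, so the merge may occur only during the $t^j$ phase, or outside $C$ entirely (for instance at a fixed point of $Q_M$ or on a cycle lying beyond the root $v$ when $vs \in Q_M$). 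You explicitly resolve only the special instance $vs = n-1$, so the general tree case of your argument is left open.

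What you actually need is not that the trajectories stay in $C$, but that they never enter $\e{C}$ --- and this is automatic. Since $\e{C}$ is an orbit or a tree of $s$, it contains every state from which it is reachable in $s$; hence $(Q_M \cup \{n-2,n-1\}) \setminus \e{C}$ is closed under $s$. As $a,b \in C$ lie outside $\e{C}$ by disjointness, a one-line induction gives $a s^m = a \e{t}^m$ and $b s^m = b \e{t}^m$ for all $m \ge 0$ (the two maps agree outside $\e{C}$ by hypothesis~(2)), so $a s^i t^j = a \e{t}^i t^j$ and likewise for $b$; thus $\e{t}^i t^j \in T(n)$ focuses the colliding pair, the desired contradiction, with no case analysis on the shape of $C$ at all. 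This is exactly the paper's argument; with this replacement your proof closes. Your derivation of the ``in particular'' clause from the inclusion is fine.
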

\begin{proof}
First observe that if $q \in (Q_M \cup \{n-2,n-1\}) \setminus C$ then $qs = qt$, since by~(1) state $q$ is mapped in the same way by $t$ as by $s$.
Also, $qs \in (Q_M \cup \{n-2,n-1\}) \setminus C$, since if $qs$ would be in $C$, then $q \in C$, because $C$ is an~orbit or a~tree and $qs$ is reachable from $q$.
Hence, for any $g=g_1 \dots g_k$, where $g_i = t$ or $g_i = s$, by a~simple induction we obtain $q g = q s^k = q t^k \in (Q_M \cup \{n-2,n-1\}) \setminus C$.
The same claim holds symmetrically for $\e{C}$.

Let $\{p_1,p_2\}$ be the colliding pair that is focused by $s^i t^j$ from~(3).
Suppose that $C \cap \e{C} = \emptyset$.
Since $p_1,p_2 \in C$, we know that $p_1,p_2 \in (Q_M \cup \{n-2,n-1\}) \setminus \e{C}$.
By the claim above for $\e{C}$, $p_1 s^i t^j = p_1 \e{t}^i t^j$, and $p_2 s^i t^j = p_2 \e{t}^i t^j$.
But this means that $\e{t}^i t^j$ focuses $\{p_1,p_2\}$, hence $t$ and $\e{t}$ cannot be both present in $T(\cD_n)$.

So it must be that $C \cap \e{C} \neq \emptyset$, since they are orbits or trees we have either $C \subseteq \e{C}$ or $\e{C} \subseteq C$.
\end{proof}

\begin{theorem}\label{thm:bifix-free_upper_bound}
For $n\ge 8$, the syntactic complexity of the class of bifix-free languages with $n$ quotients is $(n-1)^{n-3}+(n-2)^{n-3}+(n-3)2^{n-3}$.
\end{theorem}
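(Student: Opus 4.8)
Since the matching lower bound is already established in \cite{BLY12}, it suffices to prove the upper bound $|T(n)| \le |\Wbf(n)| = (n-1)^{n-3}+(n-2)^{n-3}+(n-3)2^{n-3}$. The plan is to apply the method of injective function: I would construct an injection $\varphi\colon T(n) \to \Wbf(n)$, which immediately yields the bound. The first step is to dispose of the easy transformations. If $t \in T(n)$ has $0t = n-1$, then the defining conditions of $\Bbf(n)$ hold automatically and $t$ is already of type~(1) from Remark~\ref{rem:Wbf_transformations}; if $0t = n-2$, then the case $j=1$ of the orbit condition in $\Bbf(n)$ forces $qt \neq n-2$ for every $q \in Q_M$, so together with $0 \notin Qt$ the transformation $t$ is of type~(2); and if $0t \in Q_M$ with $Q_M t \subseteq \{n-2,n-1\}$, then $t$ is of type~(3). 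For all of these I would set $\varphi(t) = t$. Thus the entire difficulty is concentrated in the set $E$ of transformations with $0t = p \in Q_M$ and $qt \in Q_M$ for some $q \in Q_M$; these lie in $\Bbf(n) \setminus \Wbf(n)$ and must be injected into the slack left in the type~(1) and type~(2) part of $\Wbf(n)$.

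The key structural fact I would exploit is that the case $j=1$ of the orbit condition of $\Bbf(n)$ gives $p = 0t \notin Q_M t$, so in the underlying digraph of $t$ the state $p$ receives an edge only from $0$, while the presence of $q \in Q_M$ with $qt \in Q_M$ forces the distinct pair $\{p, qt\}$ to be colliding. For $t \in E$ I would define $\varphi(t)$ by a local surgery: keep the action on $Q_M$ essentially as in $t$, reroute $0$ to $n-1$ or $n-2$ so as to land in type~(1) or type~(2), and record the lost datum $p = 0t$ in the structure around $p$, which is available precisely because $p \notin Q_M t$. The modification is confined to a single orbit or tree $C$ of the target transformation $s = \varphi(t)$, and I would verify that rerouting inside $C$ preserves the colliding conditions, so that $s$ genuinely satisfies the $\Bbf(n)$ constraints and lies in $\Wbf(n)$.

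Injectivity is where Lemma~\ref{lem:orbits} does the work. Suppose $\varphi(t) = \varphi(\e{t}) = s$ with $t,\e{t} \in T(n)$. By construction $t$ and $s$ differ on $Q_M$ only inside an orbit or tree $C$, and $\e{t}$ and $s$ only inside $\e{C}$, which supplies hypotheses~(1) and~(2) of the lemma. The colliding pair exploited by the surgery on $t$ is focused by some $s^i t^j$, so hypothesis~(3) is met and we obtain $C = \e{C}$; reading $p = 0t$ back off the recorded structure inside this common region then forces $t = \e{t}$. The same argument rules out clashes between the image of an $E$-transformation and a genuine type~(1) or type~(2) transformation of $T(n)$, since any such coincidence would, via the lemma, focus a colliding pair and thus be impossible in $T(n)$.

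The main obstacle will be carrying out the surgery uniformly across all configurations of $t \in E$ while keeping $\varphi$ both well defined into $\Wbf(n)$ and invertible. The difficulty is that the orbit of $0$ can interact with the rich action on $Q_M$ in several ways: the state $p$ may lie on a cycle or in a tree, several states of $Q_M$ may map into $Q_M$, and the various images may coincide. In each case one must choose the reroute target and the encoding of $p$ so that the resulting $s$ avoids the images of the simpler classes and still meets the hypotheses of Lemma~\ref{lem:orbits}. It is precisely this case analysis, together with the tricks needed to guarantee a recoverable encoding in the degenerate configurations, that makes the construction substantially more involved than in the prefix-free or suffix-free settings.
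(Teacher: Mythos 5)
Your overall framework is exactly the paper's: reduce to the upper bound, build an injection $\varphi\colon T(n)\to\Wbf(n)$, send $T(n)\cap\Wbf(n)$ to itself (your analysis of the three types, including the observation that the $j=1$ orbit condition of $\Bbf(n)$ forces $qt\neq n-2$ for $q\in Q_M$ when $0t=n-2$, is correct and matches Supercase~1), and observe that the remaining set $E$ consists precisely of the $t$ with $0t=p\in Q_M$ and $qt\in Q_M$ for some $q\in Q_M$, for which $\{p,qt\}$ is a colliding pair. The intended use of Lemma~\ref{lem:orbits} and the remark that an image focusing a colliding pair cannot coincide with an element of $T(n)\cap\Wbf(n)$ are also the paper's arguments.

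The genuine gap is that $\varphi$ is never actually defined on $E$, and that definition is the entire content of the theorem: the paper needs $22$ further (sub)cases, split first according to whether the chain from $0$ ends in $n-1$ or in $n-2$, and then according to the presence of cycles, the length $k$ of the chain, fixed points and their in-degrees, etc. Your description of the surgery (``reroute $0$, record the lost datum $p$ in the structure around $p$'') glosses over the points where the construction is forced to be delicate. Concretely: (i) when $pt^{k+1}=n-2$ the image must be of type~(2), so \emph{no} state of $Q_M$ may be mapped to $n-2$ in $s$; all the states $q_i$ with $q_it=n-2$ must be re-encoded elsewhere (the paper rotates them into a cycle or funnels them through $p$), which is far from ``keeping the action on $Q_M$ essentially as in $t$''. (ii) Lemma~\ref{lem:orbits} only yields $C=\e{C}$; recovering $0t$ and the rest of $t$ from $s$ inside that common region requires case-specific devices (the minimality of the chosen cycle state $r$, in-degree counts, cycle lengths, which state the unique focused colliding pair is sent to), and the external injectivity between the images of different cases is not a consequence of the lemma alone. (iii) Several constructions must modify much more than the images of $0$ and $p$ (reversing the whole chain $p,pt,\dots,pt^k$, turning two fixed points into a $2$-cycle, rotating the $q_i$), precisely to create a recoverable and collision-free encoding. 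So while the plan is the right one, as written it is a proof outline rather than a proof; none of the steps that could actually fail have been carried out or verified.
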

\begin{proof}
We construct an~injective mapping $\varphi \colon T(\cD_n) \to \Wbf(n)$.
Since $\varphi$ will be injective, this will prove that $|T(\cD_n)| \le |\Wbf(n)| = (n-1)^{n-3}+(n-2)^{n-3}+(n-3)2^{n-3}$.

The mapping $\varphi$ is defined by 23 (sub)cases covering all possibilities for a~transformation $t \in T(\cD_n)$.
Let $t$ be a~transformation of $T(\cD_n)$, and $s$ be the assigned transformation $\varphi(t)$.
In every (sub)case we prove \emph{external injectivity}, which is that there is no other
transformation $\e{t}$ that fits in one of the previous (sub)cases and results in the same $s$, and
we prove \emph{internal injectivity}, which is that no other transformation $\e{t}$ that fits  the same (sub)case results in the same $s$.
All states and variables related to $\e{t}$ are always marked by a~hat.

In every (sub)case we observe some properties of the defined transformations $s$:
Property~(a) always says that a~colliding pair is focused by a~transformation of the form $s^i t^j$.
Property~(b) describes the orbits and trees of states which are mapped differently by $t$ and $s$; this is often for a~use of Lemma~\ref{lem:orbits}.
Property~(c) concerns the existence of cycles in $s$.

See the appendix for a~list and a~map of all (sub)cases.

\textbf{Supercase~1}: $t \in \Wbf(n)$.\\
We take $s = t$.
The internal and external injectivities are obvious.

\noindent For all the remaining cases let $p = 0t$.
Note that all $t$ with $p \in \{n-2,n-1\}$ fit in Supercase~1, and by the property of suffix-freeness, $pt^i \in \{n-2,n-1\}$ for some $i$.
Let $k \ge 0$ be the largest integer such that $pt^k \notin \{n-2,n-1\}$.

Then $pt^{k+1}$ is either $n-1$ or $n-2$, and we have two supercases covering these situations.

\textbf{Supercase~2}: $t \notin \Wbf(n)$ and $pt^{k+1} = n-1$.\\
Here we have the chain
$$0 \stackrel{t}{\rightarrow} p \stackrel{t}{\rightarrow} pt \stackrel{t}{\rightarrow} \dots \stackrel{t}{\rightarrow} pt^k \stackrel{t}{\rightarrow} n-1.$$
Within this supercase, we will always assign transformations $s$ focusing a~colliding pair.
Because the transformations assigned in Supercase~1 belong to $T(n)$, they do not focus any colliding pair, which makes them different from the transformations assigned now in Supercase~2.
Also, we will always have $0 s = n-1$.
We have the following cases covering all possibilities for $t$:

\textbf{Case~2.1}: $t$ has a~cycle.\\
Let $r$ be the minimal state among the states that appear in cycles of $t$, that is,
$$r = \min\{q\in Q \mid q\text{ is in a~cycle of }t\}.$$
Let $s$ be the transformation illustrated in Fig.~\ref{fig:case2.1} and defined by:
\begin{center}
  $0 s = n-1$, $p s = r$,\\
  $(p t^i) s = pt^{i-1}$ for $1\le i\le k$,\\
  $q s = q t$ for the other states $q\in Q$.
\end{center}
\begin{figure}[ht]
\unitlength 10pt\small
\gasset{Nh=2.5,Nw=2.5,Nmr=1.25,ELdist=0.3,loopdiam=1.5}
\begin{center}\begin{picture}(28,13)(0,-4)
\node[Nframe=n](name)(0,7){\normalsize$t\colon$}
\node(0)(2,0){0}\imark(0)
\node(p)(8,0){$p$}
\node[Nframe=n](pdots)(14,0){$\dots$}
\node(pt^k)(20,0){$pt^k$}
\node(n-1)(26,0){$n$-$1$}
\node(n-2)(26,4){$n$-$2$}\rmark(n-2)
\node(z)(12,4){$z$}
\node(r)(14,7){$r$}
\node[Nframe=n](rdots)(16,4){$\dots$}
\drawedge(0,p){}
\drawedge(p,pdots){}
\drawedge(pdots,pt^k){}
\drawedge(pt^k,n-1){}
\drawedge(n-2,n-1){}
\drawloop[loopangle=270](n-1){}
\drawedge[curvedepth=1](z,r){}
\drawedge[curvedepth=1](r,rdots){}
\drawedge[curvedepth=1](rdots,z){}
\end{picture}
\begin{picture}(28,13)(0,-4)
\node[Nframe=n](name)(0,7){\normalsize$s\colon$}
\node(0')(2,0){0}\imark(0')
\node(p')(8,0){$p$}
\node[Nframe=n](pdots')(14,0){$\dots$}
\node(pt^k')(20,0){$pt^k$}
\node(n-1')(26,0){$n$-$1$}
\node(n-2')(26,4){$n$-$2$}\rmark(n-2')
\node(z')(12,4){$z$}
\node(r')(14,7){$r$}
\node[Nframe=n](rdots')(16,4){$\dots$}
\drawedge[linecolor=red,dash={.5 .25}{.25},curvedepth=-3](0',n-1'){}
\drawedge[linecolor=red,dash={.5 .25}{.25}](pdots',p'){}
\drawedge[linecolor=red,dash={.5 .25}{.25}](pt^k',pdots'){}
\drawedge[linecolor=red,dash={.5 .25}{.25},curvedepth=3.5](p',r'){}
\drawedge[curvedepth=1](z',r'){}
\drawedge[curvedepth=1](r',rdots'){}
\drawedge[curvedepth=1](rdots',z'){}
\drawloop[loopangle=270](n-1'){}
\drawedge(n-2',n-1'){}
\end{picture}\end{center}
\caption{Case~2.1.}\label{fig:case2.1}
\end{figure}
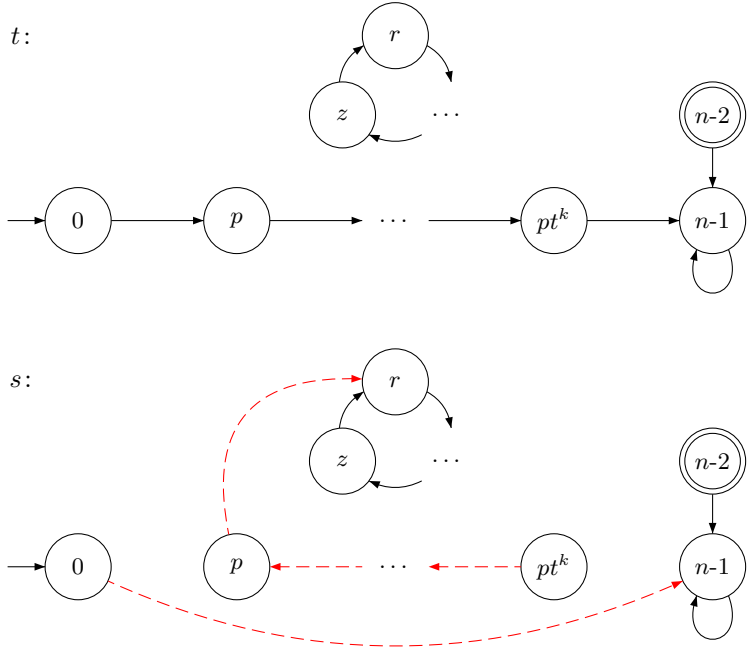

Let $z$ be the state from the cycle of $t$ such that $zt = r$. 
We observe the following properties:
\begin{enumerate}
\item[(a)] Pair $\{p,z\}$ is a~colliding pair focused by $s$ to state $r$ in the cycle, which is the smallest state of all states in cycles. 
This is the only colliding pair which is focused by $s$ to a~state in a~cycle.

\noindent\textit{Proof}: Note that $p$ collides with any state in a~cycle of $t$, in particular, with $z$.
The property follows because $s$ differs from $t$ only in the mapping of states $pt^i$ ($0 \le i \le k$) and $0$, and the only state mapped to a~cycle is $p$.

\item[(b)] All states from $Q_M$ whose mapping is different in $t$ and $s$ belong to the same orbit in $s$ of a~cycle.
Hence, all colliding pairs that are focused by $s$ consist only of states from this orbit.

\item[(c)] $s$ has a~cycle.

\item[(d)] For each $i$ with $1 \le i < k$, there is precisely one state $q$ colliding with $pt^{i-1}$ and mapped by $s$ to $pt^i$, and that state is $q=pt^{i+1}$.

\noindent\textit{Proof}: Clearly $q=pt^{i+1}$ satisfies this condition. Suppose that $q \neq pt^{i+1}$. Since $pt^{i+1}$ is the only state mapped to $pt^i$ by $s$ and not by $t$, it follows that $qt = qs = pt^i$. So $q$ and $pt^{i-1}$ are focused to $pt^i$ by $t$; since they collide, this is a~contradiction.
\end{enumerate}

\textit{Internal injectivity}:
Let $\e{t}$ be any transformation that fits in this case and results in the same $s$; we will show that $\e{t}=t$.
From~(a), there is the unique colliding pair $\{p,z\}$ focused to a~state in a~cycle, hence $\{\e{p},\e{z}\} = \{p,z\}$.
Moreover, $p$ and $\e{p}$ are not in this cycle, so $\e{p}=p$ and $\e{z}=z$, which means that $0t = 0\e{t} = p$.
Since there is no state $q \neq 0$ such that $qt=p$, the only state mapped to $p$ by $s$ is $pt$, hence $p\e{t} = pt$.
From~(d) for $i=1,\ldots,k-1$, state $pt^{i+1}$ is uniquely determined, hence $p\e{t}^{i+1} = pt^{i+1}$.
Finally, for $i=k$ there is no state colliding with $pt^{k-1}$ and mapped to $pt^k$, hence $p\e{t}^{k+1} = pt^{k+1} = n-1$.
Since the other transitions in $s$ are defined exactly as in $t$ and $\e{t}$, we have $\e{t}=t$.

\textbf{Case~2.2}: $t$ has no cycles, but $k \ge 1$.\\
Let $s$ be the transformation illustrated in Fig.~\ref{fig:case2.2} and defined by:
\begin{center}
  $0 s = n-1$, $p s = p$,\\
  $(p t^i) s = p t^{i-1}$ for $1\le i\le k$,\\
  $q s = q t$ for the other states $q\in Q$.
\end{center}
\begin{figure}[ht]
\unitlength 10pt\small
\gasset{Nh=2.5,Nw=2.5,Nmr=1.25,ELdist=0.3,loopdiam=1.5}
\begin{center}\begin{picture}(28,10)(0,-4)
\node[Nframe=n](name)(0,4){\normalsize$t\colon$}
\node(0)(2,0){0}\imark(0)
\node(p)(8,0){$p$}
\node[Nframe=n](pdots)(14,0){$\dots$}
\node(pt^k)(20,0){$pt^k$}
\node(n-1)(26,0){$n$-$1$}
\node(n-2)(26,4){$n$-$2$}\rmark(n-2)
\drawedge(0,p){}
\drawedge(p,pdots){}
\drawedge(pdots,pt^k){}
\drawedge(pt^k,n-1){}
\drawedge(n-2,n-1){}
\drawloop[loopangle=270](n-1){}
\end{picture}
\begin{picture}(28,10)(0,-4)
\node[Nframe=n](name)(0,4){\normalsize$s\colon$}
\node(0')(2,0){0}\imark(0')
\node(p')(8,0){$p$}
\node[Nframe=n](pdots')(14,0){$\dots$}
\node(pt^k')(20,0){$pt^k$}
\node(n-1')(26,0){$n$-$1$}
\node(n-2')(26,4){$n$-$2$}\rmark(n-2')
\drawedge[curvedepth=-3,linecolor=red,dash={.5 .25}{.25}](0',n-1'){}
\drawedge[linecolor=red,dash={.5 .25}{.25}](pdots',p'){}
\drawedge[linecolor=red,dash={.5 .25}{.25}](pt^k',pdots'){}
\drawloop(p'){}
\drawedge(n-2',n-1'){}
\drawloop[loopangle=270](n-1'){}
\end{picture}\end{center}
\caption{Case~2.2.}\label{fig:case2.2}
\end{figure}
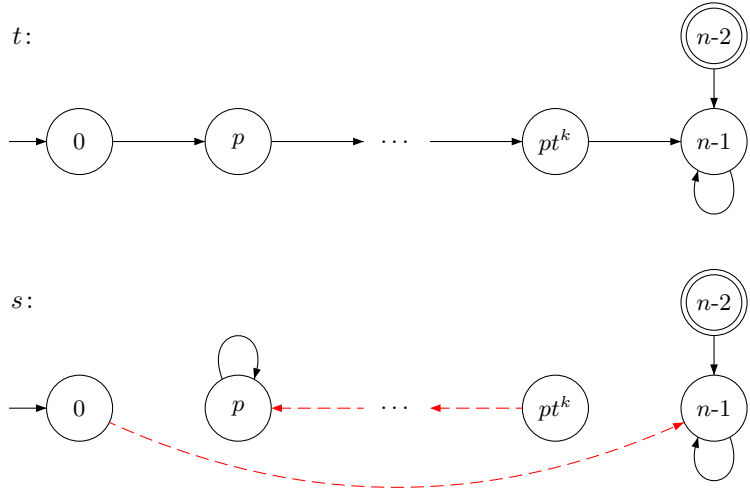

We observe the following properties:
\begin{enumerate}
\item[(a)] $\{p,pt\}$ is a~colliding pair focused by $s$ to a~fixed point of in-degree 2.
This is the only pair among all colliding pairs focused to a~fixed point.

\noindent\textit{Proof}: This follows from the definition of $s$, since any colliding pair focused by $s$ contains $pt^i$ ($0 \le i \le k$), and only $pt$ is mapped to $p$, which is a~fixed point.
Also, no state except $0$ can be mapped to $p$ by $t$ because this would violate suffix-freeness; so only $p$ and $pt$ are mapped by $s$ to $p$, and $p$ has in-degree 2.

\item[(b)] All states from $Q_M$ whose mapping is different in $t$ and $s$ belong to the same orbit in $s$ of a~fixed point.

\item[(c)] $s$ does not have any cycles, but has a~fixed point $p \neq n-1$ with in-degree $2$.

\item[(d)] For each $i$ with $1 \le i < k$, there is precisely one state $q$ colliding with $pt^{i-1}$ and mapped to $pt^i$, and that state is $q=pt^{i+1}$.

This follows exactly like Property~(d) from Case~2.1.
\end{enumerate}

\textit{External injectivity}:
Here $s$ does not have a~cycle in contrast to the transformations of Case~2.1.

\textit{Internal injectivity}:
Let $\e{t}$ be any transformation that fits in this case and results in the same $s$.
From~(a) there is the unique colliding pair $\{p,pt\}$ focused to the fixed point $p$, hence $\e{p} = p$ and $p\e{t} = pt$.
Then, from~(d), for $i=1,\ldots,k-1$ state $pt^{i+1}$ is uniquely defined, hence $p\e{t}^{i+1} = pt^i$.
Since the other transitions in $s$ are defined exactly as in $t$ and $\e{t}$, we have $\e{t} = t$.

\textbf{Case~2.3}: $t$ does not fit in any of the previous cases, but there exist at least two fixed points of in-degree~1.\\
Let the two smallest valued fixed points of in-degree 1 be the states $f_1$ and $f_2$, that is,
$$f_1 = \min\{q\in Q \mid q t = q, \forall_{q'\in Q \setminus \{q\}}\ q' t \neq q\},$$
$$f_2 = \min\{q\in Q\setminus\{f_1\} \mid q t = q, \forall_{q'\in Q \setminus \{q\}}\ q' t \neq q\}.$$
Let $s$ be the transformation illustrated in Fig.~\ref{fig:case2.3} and defined by
\begin{center}
  $0 s = n-1$, $f_1 s = f_2$, $f_2 s = f_1$, $p s = f_2$,\\
  $q s = q t$ for the other states $q\in Q$.
\end{center}
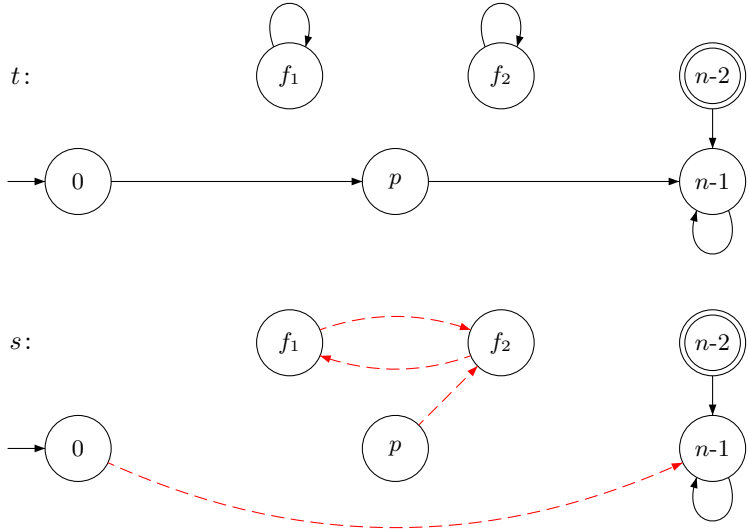
\begin{figure}[ht]
\unitlength 10pt\small
\gasset{Nh=2.5,Nw=2.5,Nmr=1.25,ELdist=0.3,loopdiam=1.5}
\begin{center}\begin{picture}(28,10)(0,-4)
\node[Nframe=n](name)(0,4){\normalsize$t\colon$}
\node(0)(2,0){0}\imark(0)
\node(p)(14,0){$p$}
\node(n-1)(26,0){$n$-$1$}
\node(n-2)(26,4){$n$-$2$}\rmark(n-2)
\node(f1)(10,4){$f_1$}
\node(f2)(18,4){$f_2$}
\drawedge(0,p){}
\drawedge(p,n-1){}
\drawedge(n-2,n-1){}
\drawloop[loopangle=270](n-1){}
\drawloop(f1){}
\drawloop(f2){}
\end{picture}
\begin{picture}(28,10)(0,-4)
\node[Nframe=n](name)(0,4){\normalsize$s\colon$}
\node(0')(2,0){0}\imark(0')
\node(p')(14,0){$p$}
\node(n-1')(26,0){$n$-$1$}
\node(n-2')(26,4){$n$-$2$}\rmark(n-2')
\node(f1')(10,4){$f_1$}
\node(f2')(18,4){$f_2$}
\drawedge[curvedepth=-3,linecolor=red,dash={.5 .25}{.25}](0',n-1'){}
\drawedge(n-2',n-1'){}
\drawloop[loopangle=270](n-1'){}
\drawedge[curvedepth=1,linecolor=red,dash={.5 .25}{.25}](f1',f2'){}
\drawedge[curvedepth=1,linecolor=red,dash={.5 .25}{.25}](f2',f1'){}
\drawedge[linecolor=red,dash={.5 .25}{.25}](p',f2'){}
\end{picture}\end{center}
\caption{Case~2.3.}\label{fig:case2.3}
\end{figure}

We observe the following properties:
\begin{enumerate}
\item[(a)] $\{p,f_2\}$ is a~colliding pair focused by $s$ to $f_2$.
This is the only pair among all colliding pairs that are focused.

\item[(b)] All states from $Q_M$ whose mapping is different in $t$ and $s$ belong to the same orbit of a~cycle in $s$.

\item[(c)] $s$ has exactly one cycle, namely $(f_1,f_2)$, and it is of length 2.
Moreover, one state in the cycle, which is $f_1$, has in-degree~1, and the other one, which is $f_2$, has in-degree~2.
\end{enumerate}

\textit{External injectivity}:
To see that $s$ is distinct from the transformations of Case~2.1, observe that in $s$ the only colliding pair is focused to $f_2$, which lies in a~cycle but is not the smallest state of the states of cycles.
On the other hand, from~(a) of Case~2.1 the transformations of that case have only one colliding pair focused to a~state in a~cycle, and this is the smallest state from the states of cycles.

Since $s$ has a~cycle, it is different from the transformations of Case~2.2.

\textit{Internal injectivity}:
Let $\e{t}$ be any transformation that fits in this case and results in the same $s$.
From~(c), there is a~single state in the unique cycle that has in-degree 2 and this is $f_1$. Hence $\e{f}_1 = f_1$, and so $\e{f}_2 = f_2$.
From~(a), the unique focused colliding pair is $\{p,f_2\}$, so $\{\e{p},\e{f}_2\}=\{p,f_2\}$ and $\e{p} = p$.
Hence $0\e{t} = 0t$, $p\e{t} = pt = n-1$, $f_1 t = f_1 \e{t} = f_1$, and $f_2 t = f_2 \e{t} = f_2$.
Since the other transitions in $s$ are defined exactly as in $t$ and $\e{t}$, we have $\e{t} = t$.

\textbf{Case~2.4}: $t$ does not fit in any of the previous cases, but there exists $x \in Q\setminus \{0\}$ of in-degree $0$ such that $xt \notin \{x,n-2,n-1\}$.\\
Let $x$ be the smallest state among the states satisfying the conditions and with the largest $\ell\ge 1$ such that $xt^\ell \notin \{xt^{\ell-1},n-2,n-1\}$.
Since $xt \notin \{x,n-2,n-1\}$ and $t$ does not have a~cycle, $x$ and $\ell$ are well defined.
We know that $xt^{\ell+1} \in \{xt^\ell,n-2,n-1\}$, and $x$ has in-degree $0$.
Within this case we have the following subcases covering all possibilities for $t$:

\textbf{Subcase~2.4.1}: $\ell\ge 2$ and $xt^{\ell+1} = n-1$.\\
Let $s$ be the transformation illustrated in Fig.~\ref{fig:subcase2.4.1} and defined by
\begin{center}
  $0 s = n-1$, $p s = xt^\ell$,\\
  $q s = q t$ for the other states $q\in Q$.
\end{center}
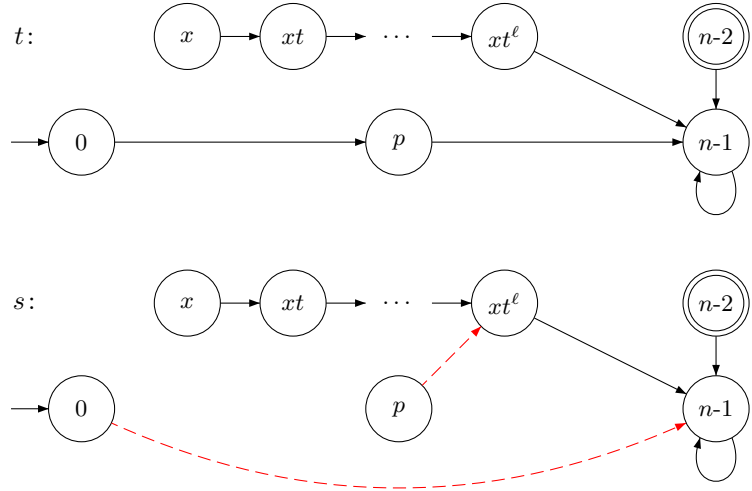
\begin{figure}[ht]
\unitlength 10pt\small
\gasset{Nh=2.5,Nw=2.5,Nmr=1.25,ELdist=0.3,loopdiam=1.5}
\begin{center}\begin{picture}(28,10)(0,-4)
\node[Nframe=n](name)(0,4){\normalsize$t\colon$}
\node(0)(2,0){0}\imark(0)
\node(p)(14,0){$p$}
\node(n-1)(26,0){$n$-$1$}
\node(n-2)(26,4){$n$-$2$}\rmark(n-2)
\node(x)(6,4){$x$}
\node(xt)(10,4){$xt$}
\node[Nframe=n](xdots)(14,4){$\dots$}
\node(xt^ell)(18,4){$xt^\ell$}
\drawedge(0,p){}
\drawedge(p,n-1){}
\drawedge(n-2,n-1){}
\drawloop[loopangle=270](n-1){}
\drawedge(x,xt){}
\drawedge(xt,xdots){}
\drawedge(xdots,xt^ell){}
\drawedge(xt^ell,n-1){}
\end{picture}
\begin{picture}(28,10)(0,-4)
\node[Nframe=n](name)(0,4){\normalsize$s\colon$}
\node(0')(2,0){0}\imark(0')
\node(p')(14,0){$p$}
\node(n-1')(26,0){$n$-$1$}
\node(n-2')(26,4){$n$-$2$}\rmark(n-2')
\node(x')(6,4){$x$}
\node(xt')(10,4){$xt$}
\node[Nframe=n](xdots')(14,4){$\dots$}
\node(xt^ell')(18,4){$xt^\ell$}
\drawedge[curvedepth=-3,linecolor=red,dash={.5 .25}{.25}](0',n-1'){}
\drawedge(n-2',n-1'){}
\drawloop[loopangle=270](n-1'){}
\drawedge[linecolor=red,dash={.5 .25}{.25}](p,xt^ell'){}
\drawedge(x',xt'){}
\drawedge(xt',xdots'){}
\drawedge(xdots',xt^ell'){}
\drawedge(xt^ell',n-1'){}
\end{picture}\end{center}
\caption{Subcase~2.4.1.}\label{fig:subcase2.4.1}
\end{figure}

We observe the following properties:
\begin{enumerate}
\item[(a)] $\{xt^{\ell-1}, p\}$ is a~colliding pair focused by $s$ to $xt^\ell$.

\item[(b)] $p$ is the only state from $Q_M$ whose mapping is different in $t$ and $s$, and $p$ is mapped to a~state mapped to $n-1$.

\item[(c)] $s$ does not have any cycles.
\end{enumerate}

\textit{External injectivity}:
Since $s$ does not have any cycles, $s$ is different from the transformations of Case~2.1 and Case~2.3.

From~(a), we have a~focused colliding pair in the orbit of $n-1$. Thus, $s$ is different from the transformations of Case~2.2, where all states in focused colliding pairs are in the orbit of a~fixed point different from $n-1$ (Property~(b) of Case~2.2).

\textit{Internal injectivity}:
Let $\e{t}$ be any transformation that fits in this subcase and results in the same $s$.
From~(b), all colliding pairs that are focused contain $p$.
If there are at least two such pairs, then $p$ is uniquely determined as the unique common state.
If there is only one such pair, then by~(a) it is $\{xt^{\ell-1}, p\}$, and $p$ is determined as the state of in-degree $0$, since $xt^{\ell-1}$ has in-degree $\ge 1$.
Hence, $\e{p} = p$, and since the other transitions in $s$ are defined exactly as in $t$ and $\e{t}$, we have $\e{t} = t$.

\textbf{Subcase~2.4.2}: $\ell=1$, $xt^2 = n-1$, and $xt$ has in-degree $>1$.\\
Let $y$ be the smallest state different from $x$ and such that $yt = xt$.
Note that $y$ has in-degree $0$, as otherwise, it would contradict the choice of $x$ since there would be a~state satisfying the conditions for $x$ with a~larger $\ell$.
Also, $x < y$, as otherwise we would choose $y$ as $x$.
Let $s$ be the transformation illustrated in Fig.~\ref{fig:subcase2.4.2} and defined by
\begin{center}
  $0 s = n-1$, $p s = y$,\\
  $(xt) s = x$, $x s = y$,\\
  $q s = q t$ for the other states $q\in Q$.
\end{center}
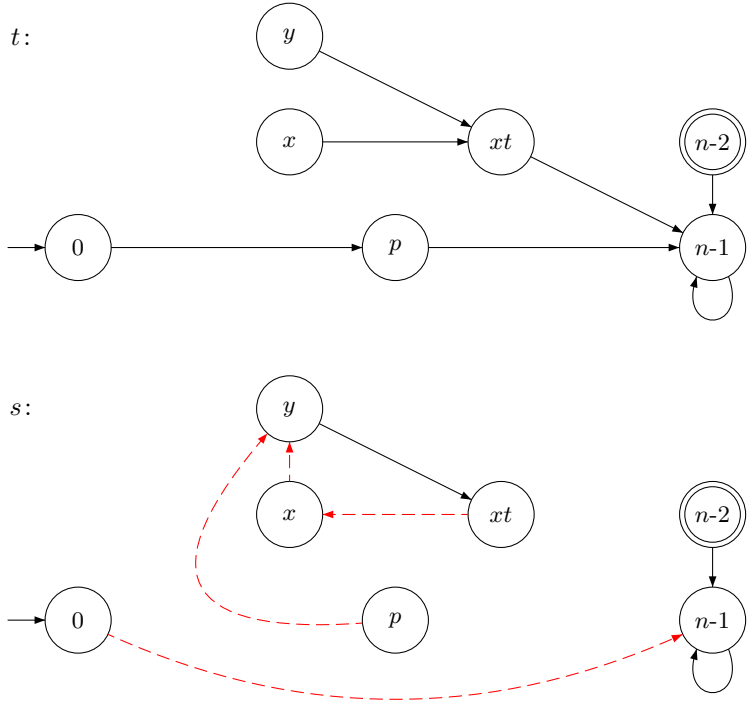
\begin{figure}[ht]
\unitlength 10pt\small
\gasset{Nh=2.5,Nw=2.5,Nmr=1.25,ELdist=0.3,loopdiam=1.5}
\begin{center}\begin{picture}(28,14)(0,-4)
\node[Nframe=n](name)(0,8){\normalsize$t\colon$}
\node(0)(2,0){0}\imark(0)
\node(p)(14,0){$p$}
\node(n-1)(26,0){$n$-$1$}
\node(n-2)(26,4){$n$-$2$}\rmark(n-2)
\node(x)(10,4){$x$}
\node(xt)(18,4){$xt$}
\node(y)(10,8){$y$}
\drawedge(0,p){}
\drawedge(p,n-1){}
\drawedge(n-2,n-1){}
\drawloop[loopangle=270](n-1){}
\drawedge(x,xt){}
\drawedge(xt,n-1){}
\drawedge(y,xt){}
\end{picture}
\begin{picture}(28,14)(0,-4)
\node[Nframe=n](name)(0,8){\normalsize$s\colon$}
\node(0')(2,0){0}\imark(0')
\node(p')(14,0){$p$}
\node(n-1')(26,0){$n$-$1$}
\node(n-2')(26,4){$n$-$2$}\rmark(n-2')
\node(x')(10,4){$x$}
\node(xt')(18,4){$xt$}
\node(y')(10,8){$y$}
\drawedge[curvedepth=-3,linecolor=red,dash={.5 .25}{.25}](0',n-1'){}
\drawedge(n-2',n-1'){}
\drawloop[loopangle=270](n-1'){}
\drawedge[curvedepth=6,linecolor=red,dash={.5 .25}{.25}](p',y'){}
\drawedge[linecolor=red,dash={.5 .25}{.25}](xt',x'){}
\drawedge[linecolor=red,dash={.5 .25}{.25}](x',y'){}
\drawedge(y',xt'){}
\end{picture}\end{center}
\caption{Subcase~2.4.2.}\label{fig:subcase2.4.2}
\end{figure}

We observe the following properties.
\begin{enumerate}
\item[(a)] $\{p,x\}$ is a~colliding pair focused by $s$ to $y$.

\item[(b)] All states from $Q_M$ whose mapping is different in $t$ and $s$ belong to the same orbit of a~cycle of length $3$ in $s$.

\item[(c)] $s$ contains exactly one cycle, namely $(x,y,xt)$.
Furthermore, $y$ has in-degree $2$ and is preceded in this cycle by $x$ of in-degree $1$.
\end{enumerate}

\textit{External injectivity}:
To see that $s$ is different from the transformations of Case~2.1, observe that by~(a) we have a~colliding pair focused to $y$, which is from a~cycle, but is not the smallest state from the states in cycles since $x < y$.

On the other hand, in Case~2.1 all colliding pairs focused to a~state in a~cycle are focused to the smallest state of all states in cycles (Property~(a) of Case~2.1). In Case~2.3, the transformation has a~cycle, but this cycle has length $2$.

Since $s$ has a~cycle, it is different from the transformations of Case~2.2 and Subcase~2.4.1 (recall that fixed points have not been defined as cycles).

\textit{Internal injectivity}:
Let $\e{t}$ be any transformation that fits in this subcase and results in the same $s$.
From~(c), in $s$ we have a~unique cycle of length 3, and this cycle is $(x,y,xt)$.
Since $y$ is uniquely determined as the state of in-degree 2 preceded in the cycle by the state of in-degree 1, we have $\e{y} = y$.
Then also $\e{x} = x$ and $x\e{t} = xt$. State $p$ is the only state outside the cycle mapped to $y$, hence $\e{p} = p$.
We have $0t = 0\e{t} = p$, $pt = p\e{t} = n-1$, and $xt^2 = x{\e{t}}^2 = n-1$.
Since the other transitions in $s$ are defined exactly as in $t$ and $\e{t}$, we have $\e{t} = t$.

\textbf{Subcase~2.4.3}: $\ell=1$, $xt^2 = n-1$, and $xt$ has in-degree $1$.\\
We split the subcase into two subsubcases: (i) $p < xt$ and (ii) $p > xt$.
Let $s$ be the transformation illustrated in Fig.~\ref{fig:subcase2.4.3} and defined by
\begin{center}
  $0 s = n-1$, $p s = x$,\\
  $(xt) s = x$,\\
  $x s = n-2$ (i), $x s = n-1$ (ii),\\
  $q s = q t$ for the other states $q\in Q$.
\end{center}
\begin{figure}[ht]
\unitlength 10pt\small
\gasset{Nh=2.5,Nw=2.5,Nmr=1.25,ELdist=0.3,loopdiam=1.5}
\begin{center}\begin{picture}(28,10)(0,-4)
\node[Nframe=n](name)(0,4){\normalsize$t\colon$}
\node(0)(2,0){0}\imark(0)
\node(p)(14,0){$p$}
\node(n-1)(26,0){$n$-$1$}
\node(n-2)(26,4){$n$-$2$}\rmark(n-2)
\node(x)(10,4){$x$}
\node(xt)(18,4){$xt$}
\drawedge(0,p){}
\drawedge(p,n-1){}
\drawedge(n-2,n-1){}
\drawloop[loopangle=270](n-1){}
\drawedge(x,xt){}
\drawedge(xt,n-1){}
\end{picture}
\begin{picture}(28,11)(0,-4)
\node[Nframe=n](name)(0,4){\normalsize$s\colon$}
\node(0')(2,0){0}\imark(0')
\node(p')(14,0){$p$}
\node(n-1')(26,0){$n$-$1$}
\node(n-2')(26,4){$n$-$2$}\rmark(n-2')
\node(x')(10,4){$x$}
\node(xt')(18,4){$xt$}
\drawedge[curvedepth=-3,linecolor=red,dash={.5 .25}{.25}](0',n-1'){}
\drawedge(n-2',n-1'){}
\drawloop[loopangle=270](n-1'){}
\drawedge[linecolor=red,dash={.5 .25}{.25}](p',x'){}
\drawedge[linecolor=red,dash={.5 .25}{.25}](xt',x'){}
\drawedge[curvedepth=2,linecolor=red,dash={.1 .1}{.1}](x',n-2'){(i)}
\drawedge[curvedepth=-.5,linecolor=red,dash={.1 .1}{.1},ELside=r](x',n-1'){(ii)}
\end{picture}\end{center}
\caption{Subcase~2.4.3.}\label{fig:subcase2.4.3}
\end{figure}
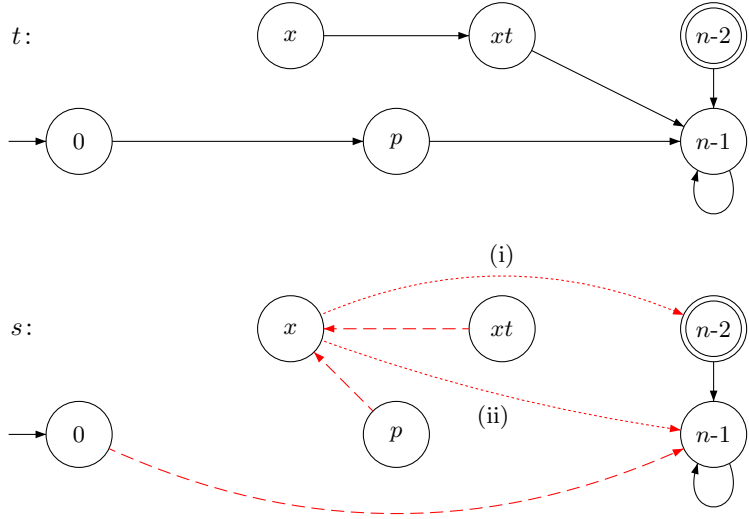

We observe the following properties:
\begin{enumerate}
\item[(a)] $\{p,xt\}$ is a~colliding pair focused by $s$ to $x$.
Both states from this pair have in-degree 0.

\item[(b)] All states from $Q_M$ whose mapping is different in $t$ and $s$ are from the orbit of $n-1$,
and $p$ and $xt$ are the only such states that are not mapped to $n-2$ nor to $n-1$.

\item[(c)] $s$ does not have any cycles.
\end{enumerate}

\textit{External injectivity}:
Since $s$ does not have any cycles, it is different from the transformations of Case~2.1, Case~2.3, and Subcase~2.4.2.

By~(b) all colliding pairs that are focused have states from the orbit of $n-1$, whereas the transformations of Case~2.2 focus a~colliding pair to a~fixed point.

Let $\e{t}$ be a~transformation that fits in Subcase~2.4.1 and results in the same $s$.
By Lemma~\ref{lem:orbits}, the orbits from Properties~(b) for both $t$ and $\e{t}$ must be the same, so $x = \e{x}\e{t}^{\e{\ell}}$.
But in $s$, to $x$ only states of in-degree 0 are mapped, whereas to $\e{x}\e{t}^{\e{\ell}}$ state $\e{x}\e{t}^{\e{\ell-1}}$ is mapped, which has in-degree at least 1.

\textit{Internal injectivity}:
Let $\e{t}$ be any transformation that fits in this subcase and results in the same $s$.
From~(a) and~(b), $\{p,xt\}$ is the unique colliding pair focused to a~state different from $n-2$; hence $\{p,xt\} = \{\e{p},\e{x}\e{t}\}$.
The pair is focused to $x$, hence $\e{x} = x$.
If $x$ is mapped to $n-2$, then we have subsubcase~(i) and $p$ is the smaller state in the colliding pair.
If $x$ is mapped to $n-1$, then we have subsubcase~(ii) and $p$ is the larger state in the colliding pair.
Hence $p = \e{p}$ and $xt = x\e{t}$. We have $0t = 0\e{t} = p$ and $(xt)t = (xt)\e{t} = n-1$.
Since the other transitions in $s$ are defined exactly as in $t$ and $\e{t}$, we have $\e{t} = t$.

\textbf{Subcase 2.4.4}: $xt^{\ell+1} = n-2$.\\
Let $s$ be the transformation illustrated in Fig.~\ref{fig:subcase2.4.4} and defined by
\begin{center}
  $0 s = n-1$, $p s = n-2$,\\
  $q s = q t$ for the other states $q\in Q$.
\end{center}
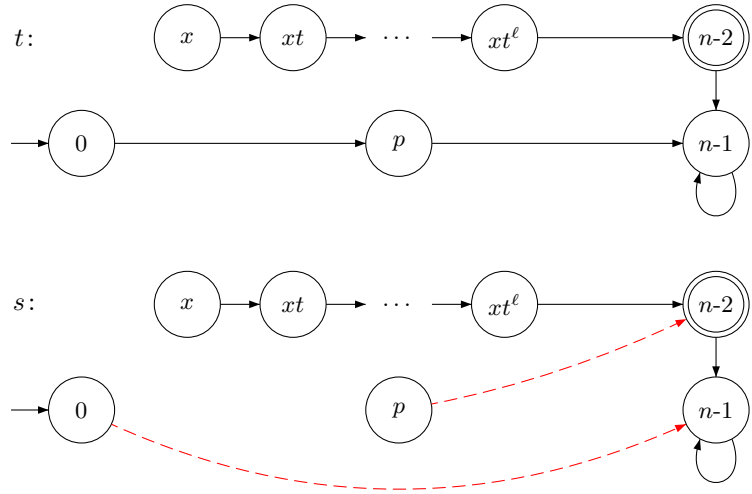
\begin{figure}[ht]
\unitlength 10pt\small
\gasset{Nh=2.5,Nw=2.5,Nmr=1.25,ELdist=0.3,loopdiam=1.5}
\begin{center}\begin{picture}(28,10)(0,-4)
\node[Nframe=n](name)(0,4){\normalsize$t\colon$}
\node(0)(2,0){0}\imark(0)
\node(p)(14,0){$p$}
\node(n-1)(26,0){$n$-$1$}
\node(n-2)(26,4){$n$-$2$}\rmark(n-2)
\node(x)(6,4){$x$}
\node(xt)(10,4){$xt$}
\node[Nframe=n](xdots)(14,4){$\dots$}
\node(xt^ell)(18,4){$xt^\ell$}
\drawedge(0,p){}
\drawedge(p,n-1){}
\drawedge(n-2,n-1){}
\drawloop[loopangle=270](n-1){}
\drawedge(x,xt){}
\drawedge(xt,xdots){}
\drawedge(xdots,xt^ell){}
\drawedge(xt^ell,n-2){}
\end{picture}
\begin{picture}(28,10)(0,-4)
\node[Nframe=n](name)(0,4){\normalsize$s\colon$}
\node(0')(2,0){0}\imark(0')
\node(p')(14,0){$p$}
\node(n-1')(26,0){$n$-$1$}
\node(n-2')(26,4){$n$-$2$}\rmark(n-2')
\node(x')(6,4){$x$}
\node(xt')(10,4){$xt$}
\node[Nframe=n](xdots')(14,4){$\dots$}
\node(xt^ell')(18,4){$xt^\ell$}
\drawedge[curvedepth=-3,linecolor=red,dash={.5 .25}{.25}](0',n-1'){}
\drawedge(n-2',n-1'){}
\drawloop[loopangle=270](n-1'){}
\drawedge(x',xt'){}
\drawedge(xt',xdots'){}
\drawedge(xdots',xt^ell'){}
\drawedge(xt^ell',n-2'){}
\drawedge[curvedepth=-.5,linecolor=red,dash={.5 .25}{.25}](p',n-2'){}
\end{picture}\end{center}
\caption{Subcase~2.4.4.}\label{fig:subcase2.4.4}
\end{figure}

We observe the following properties:
\begin{enumerate}
\item[(a)] $\{xt^\ell, p\}$ is a~colliding pair focused by $s$ to $n-2$.

\item[(b)] $p$ is the only state from $Q_M$ whose mapping is different in $t$ and $s$.

\item[(c)] $s$ does not contain any cycles.
\end{enumerate}

\textit{External injectivity}:
Since $s$ does not contain any cycles, it is different from the transformations of Case~2.1, Case~2.3, and Subcase~2.4.2.

From~(b), all focused colliding pairs contain $p$ and so are mapped to $n-2$ in $s$.
Hence, $s$ is different from the transformations of Case~2.2, Subcase~2.4.1, and Subcase~2.4.3.

\textit{Internal injectivity}:
Let $\e{t}$ be any transformation that fits in this subcase and results in the same $s$.
If there are two focused colliding pairs, then $p$ is uniquely determined as the common state in these pairs.
If there is only one such pair, then $p$ is the state of in-degree $0$, as the other state is $xt^\ell$, which has in-degree $\ge 1$.
Hence, $\e{p} = p$. We have $0t = 0\e{t} = p$ and $pt = p\e{t} = n-1$.
Since the other transitions in $s$ are defined exactly as in $t$ and $\e{t}$, we have $\e{t} = t$.

\textbf{Subcase 2.4.5}: $xt^{\ell+1} = xt^\ell$.\\
Let $s$ be the transformation illustrated in Fig.~\ref{fig:subcase2.4.5} and defined by
\begin{center}
  $0 s = n-1$, $p s = xt^\ell$,\\
  $q s = q t$ for the other states $q\in Q$.
\end{center}
\begin{figure}[ht]
\unitlength 10pt\small
\gasset{Nh=2.5,Nw=2.5,Nmr=1.25,ELdist=0.3,loopdiam=1.5}
\begin{center}\begin{picture}(28,11)(0,-4)
\node[Nframe=n](name)(0,6){\normalsize$t\colon$}
\node(0)(2,0){0}\imark(0)
\node(p)(14,0){$p$}
\node(n-1)(26,0){$n$-$1$}
\node(n-2)(26,4){$n$-$2$}\rmark(n-2)
\node(x)(6,4){$x$}
\node(xt)(10,4){$xt$}
\node[Nframe=n](xdots)(14,4){$\dots$}
\node(xt^ell)(18,4){$xt^\ell$}
\drawedge(0,p){}
\drawedge(p,n-1){}
\drawedge(n-2,n-1){}
\drawloop[loopangle=270](n-1){}
\drawedge(x,xt){}
\drawedge(xt,xdots){}
\drawedge(xdots,xt^ell){}
\drawloop(xt^ell){}
\end{picture}
\begin{picture}(28,11)(0,-4)
\node[Nframe=n](name)(0,6){\normalsize$s\colon$}
\node(0')(2,0){0}\imark(0')
\node(p')(14,0){$p$}
\node(n-1')(26,0){$n$-$1$}
\node(n-2')(26,4){$n$-$2$}\rmark(n-2')
\node(x')(6,4){$x$}
\node(xt')(10,4){$xt$}
\node[Nframe=n](xdots')(14,4){$\dots$}
\node(xt^ell')(18,4){$xt^\ell$}
\drawedge[curvedepth=-3,linecolor=red,dash={.5 .25}{.25}](0',n-1'){}
\drawedge(n-2',n-1'){}
\drawloop[loopangle=270](n-1'){}
\drawedge(x',xt'){}
\drawedge(xt',xdots'){}
\drawedge(xdots',xt^ell'){}
\drawloop(xt^ell'){}
\drawedge[linecolor=red,dash={.5 .25}{.25}](p',xt^ell'){}
\end{picture}\end{center}
\caption{Subcase~2.4.5.}\label{fig:subcase2.4.5}
\end{figure}
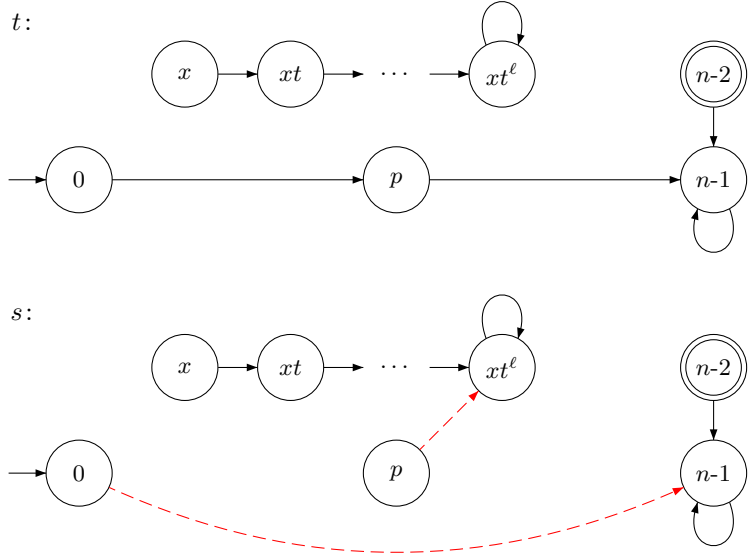

We observe the following properties:
\begin{enumerate}
\item[(a)] $\{p, xt^\ell\}$ is a~colliding pair focused by $s$ to the fixed point $xt^\ell$, which has in-degree at least $3$.

\item[(b)] $p$ is the only state from $Q_M$ whose mapping is different in $t$ and $s$.

\item[(c)] $s$ does not contain any cycles.
\end{enumerate}

\textit{External injectivity}:
Since $s$ does not contain any cycles, it is different from the transformations of Case~2.1, Case~2.3, and Subcase~2.4.2.

Let $\e{t}$ be a~transformation that fits in Case~2.2 and results in the same $s$.
By Lemma~\ref{lem:orbits}, the orbits from Properties~(b) for both $t$ and $\e{t}$ must be the same, so $xt^\ell = \e{p}$.
But $xt^\ell$ has in-degree at least 3, whereas $\e{p}$ has in-degree 2, which yields a~contradiction.

Since the orbits from Properties~(b) of the transformations of Subcase~2.4.1, Subcase~2.4.3, and Subcase~2.4.4 contain $n-1$, by Lemma~\ref{lem:orbits} they are different from $s$.

\textit{Internal injectivity}:
Let $\e{t}$ be any transformation that fits in this subcase and results in the same $s$.
By Lemma~\ref{lem:orbits}, the orbits from Properties~(b) for both $t$ and $\e{t}$ must be the same, so we obtain that $xt^\ell = \e{x}\e{t}^{\e{\ell}}$.
If $t \neq \e{t}$, then by~(b) $p \neq \e{p}$, and also $p\e{t} = \e{p}t = xt^\ell$, as otherwise $t$ and $\e{t}$ would not result in the same $s$.
Then, $\{\e{p},xt^\ell\}$ is a~colliding pair because of $\e{t}$.
But $\e{p}t = (xt^{\ell})t = xt^\ell$, so this colliding pair is focused by $t$.
Hence, it must be that $t = \e{t}$.

\textbf{Case 2.5}: $t$ does not fit in any of the previous cases.\\
First we observe that there exists exactly one fixed point $f \neq n-1$, and every state $q \in Q \setminus \{0,f\}$ is mapped either to $n-2$ or to $n-1$:
All transformations that fit in Supercase~2 and have a~cycle or with $p t \neq n-1$ are covered in Case~2.1 or~2.2.
If there are two fixed points of in-degree $1$ then $t$ is covered in Case~2.3.
If there is a~state $x \in Q\setminus \{0\}$ such that $xt \notin \{x,n-2,n-1\}$, then, since there are no cycles, there exists such a~state of in-degree $0$, thus $t$ is covered in Case~2.5.
Hence, every state $q\in Q\setminus \{0\}$ must either be a~fixed point or be mapped to $n-2$ or $n-1$, and there can be at most one fixed point.
If there is no fixed point, then $t \in \Wbf(n)$ (transformation of Type~3) and so it falls into Supercase~1.

\textbf{Subcase~2.5.1}: There are at least two states from all $r_1,r_2,\ldots,r_u \in Q \setminus \{p\}$ such that $r_i t = n-1$ for all $i$.\\
Assume that $r_1 < r_2 < \dots < r_u$.
Let $s$ be the transformation illustrated in Fig.~\ref{fig:subcase2.5.1} and defined by
\begin{center}
  $0 s = n-1$, $p s = f$,\\
  $r_i s = r_{i+1}$ for $1\le i\le u-1$,\\
  $r_u s = r_1$,\\
  $q s = q t$ for the other states $q\in Q$.
\end{center}
\begin{figure}[ht]
\unitlength 10pt\small
\gasset{Nh=2.5,Nw=2.5,Nmr=1.25,ELdist=0.3,loopdiam=1.5}
\begin{center}\begin{picture}(28,11)(0,-4)
\node[Nframe=n](name)(0,6){\normalsize$t\colon$}
\node(0)(2,0){0}\imark(0)
\node(p)(14,0){$p$}
\node(n-1)(26,0){$n$-$1$}
\node(n-2)(26,4){$n$-$2$}\rmark(n-2)
\node(f)(8,4){$f$}
\node(r1)(14,4){$r_1$}
\node[Nframe=n](rdots)(18,4){$\dots$}
\node(ru)(22,4){$r_u$}
\drawedge(0,p){}
\drawedge(p,n-1){}
\drawedge(n-2,n-1){}
\drawloop[loopangle=270](n-1){}
\drawloop(f){}
\drawedge[curvedepth=-.2](r1,n-1){}
\drawedge[curvedepth=0,exo=.2](rdots,n-1){}
\drawedge[curvedepth=0,exo=.5](ru,n-1){}
\end{picture}
\begin{picture}(28,11)(0,-4)
\node[Nframe=n](name)(0,6){\normalsize$s\colon$}
\node(0')(2,0){0}\imark(0')
\node(p')(14,0){$p$}
\node(n-1')(26,0){$n$-$1$}
\node(n-2')(26,4){$n$-$2$}\rmark(n-2')
\node(f')(8,4){$f$}
\node(r1')(14,4){$r_1$}
\node[Nframe=n](rdots')(18,4){$\dots$}
\node(ru')(22,4){$r_u$}
\drawedge[curvedepth=-3,linecolor=red,dash={.5 .25}{.25}](0',n-1'){}
\drawedge(n-2',n-1'){}
\drawloop[loopangle=270](n-1'){}
\drawloop(f'){}
\drawedge[linecolor=red,dash={.5 .25}{.25}](p',f'){}
\drawedge[linecolor=red,dash={.5 .25}{.25}](r1',rdots'){}
\drawedge[linecolor=red,dash={.5 .25}{.25}](rdots',ru'){}
\drawedge[curvedepth=-2,linecolor=red,dash={.5 .25}{.25}](ru',r1'){}
\end{picture}\end{center}
\caption{Subcase~2.5.1.}\label{fig:subcase2.5.1}
\end{figure}
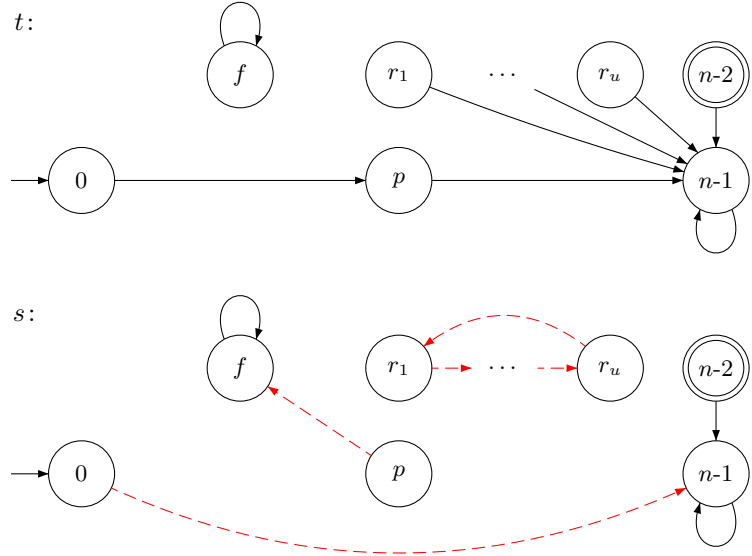

We observe the following properties:
\begin{enumerate}
\item[(a)] $\{p,f\}$ is a~colliding pair focused by $s$ to the fixed point $f$.
This is the only colliding pair that is focused by $s$.

\item[(c)] $s$ contains exactly one cycle.
\end{enumerate}

\textit{External injectivity}:
Since $s$ has a~cycle, it is different from the transformations of Case~2.2, Subcase~2.4.1, Subcase~2.4.3, Subcase~2.4.4, and Subcase~2.4.5.

From~(a) and~(c), $s$ has a~cycle and focuses a~colliding pair to a~state whose orbit is not the orbit of a~cycle.
Hence, $s$ is different from the transformations of Case~2.1, Case~2.3, and of Subcase~2.4.2,
where all colliding pairs that are focused by these transformations have states from the orbit of a~cycle (Properties~(b) of these (sub)cases).

\textit{Internal injectivity}:
Let $\e{t}$ be any transformation that fits in this subcase and results in the same $s$.
By~(a), $\{p,f\}$ is the unique colliding pair that is focused to the fixed point $f$, so $\e{p}=p$ and $\e{f}=f$.
Also, there is exactly one cycle formed by the states $r_i$, so $(r_1,r_2,\ldots,r_u) = (\e{r_1},\e{r_2},\ldots,\e{r_u})$.
It follows that $0t = 0\e{t} = p$, $ft = f\e{t} = f$, and $r_i t = \e{r_i} \e{t} = n-1$ for all $i$.
Since the other transitions in $s$ are defined exactly as in $t$ and $\e{t}$, we have $\e{t} = t$.

\textbf{Subcase~2.5.2}: $t$ does not fit in Subcase~2.5.1.\\
Because $n \ge 8$, we know that $Q \setminus \{0,p,f,n-2,n-1\}$ contains at least three states.
Since $t$ does not fit in Subcase~2.5.1, we have at least two states from all $q_1,q_2,\ldots,q_v \in Q_M \setminus \{p\}$ such that $q_i t = n-2$.
Assume that $q_1 < q_2 < \ldots < q_v$.
Let $s$ be the transformation illustrated in Fig.~\ref{fig:subcase2.5.2} and defined by
\begin{center}
  $0 s = n-1$, $p s = f$,\\
  $q_i s = q_{i-1}$ for $2\le i\le v$,\\
  $q_1 s = q_v$,\\
  $q s = q t$ for the other states $q\in Q$.
\end{center}
\begin{figure}[ht]
\unitlength 10pt\small
\gasset{Nh=2.5,Nw=2.5,Nmr=1.25,ELdist=0.3,loopdiam=1.5}
\begin{center}\begin{picture}(28,11)(0,-4)
\node[Nframe=n](name)(0,6){\normalsize$t\colon$}
\node(0)(2,0){0}\imark(0)
\node(p)(14,0){$p$}
\node(n-1)(26,0){$n$-$1$}
\node(n-2)(26,4){$n$-$2$}\rmark(n-2)
\node(f)(8,4){$f$}
\node(q1)(14,4){$q_1$}
\node[Nframe=n](qdots)(18,4){$\dots$}
\node(qv)(22,4){$q_v$}
\drawedge(0,p){}
\drawedge(p,n-1){}
\drawedge(n-2,n-1){}
\drawloop[loopangle=270](n-1){}
\drawloop(f){}
\drawedge[curvedepth=-3,exo=1](q1,n-2){}
\drawedge[curvedepth=-2](qdots,n-2){}
\drawedge[curvedepth=0](qv,n-2){}
\end{picture}
\begin{picture}(28,11)(0,-4)
\node[Nframe=n](name)(0,6){\normalsize$s\colon$}
\node(0')(2,0){0}\imark(0')
\node(p')(14,0){$p$}
\node(n-1')(26,0){$n$-$1$}
\node(n-2')(26,4){$n$-$2$}\rmark(n-2')
\node(f')(8,4){$f$}
\node(q1')(14,4){$q_1$}
\node[Nframe=n](qdots')(18,4){$\dots$}
\node(qv')(22,4){$q_v$}
\drawedge[curvedepth=-3,linecolor=red,dash={.5 .25}{.25}](0',n-1'){}
\drawedge(n-2',n-1'){}
\drawloop[loopangle=270](n-1'){}
\drawloop(f'){}
\drawedge[linecolor=red,dash={.5 .25}{.25}](p',f'){}
\drawedge[curvedepth=2,linecolor=red,dash={.5 .25}{.25}](q1',qv'){}
\drawedge[linecolor=red,dash={.5 .25}{.25}](qdots',q1'){}
\drawedge[linecolor=red,dash={.5 .25}{.25}](qv',qdots'){}
\end{picture}\end{center}
\caption{Subcase~2.5.2.}\label{fig:subcase2.5.2}
\end{figure}
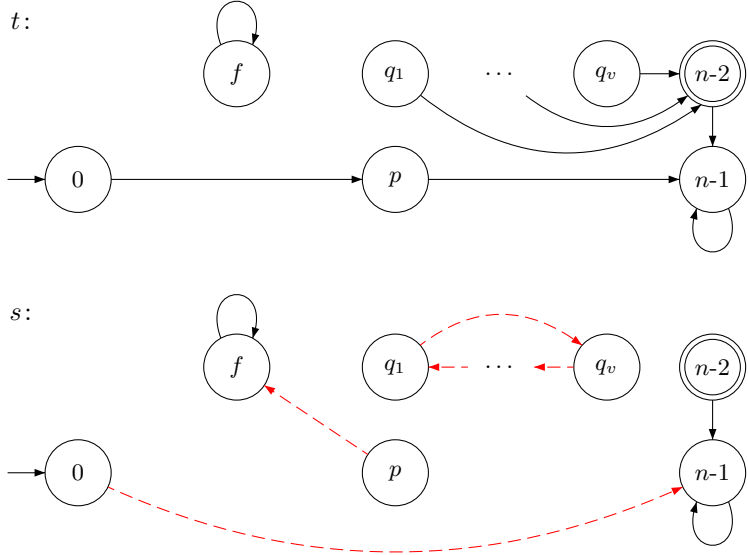

We observe the following properties:
\begin{enumerate}
\item[(a)] $\{p, f\}$ is a~colliding pair focused by $s$ to the fixed point $f$.
This is the only colliding pair that is focused by $s$.

\item[(c)] $s$ contains exactly one cycle.
\end{enumerate}

\textit{External injectivity}:
In the same way as in Subcase~2.5.1, $s$ is different from the transformations of Cases~2.1--2.4.

Now suppose that the same transformation $s$ is obtained in Subcase~2.5.1.
Since the unique cycles in both subcases go in opposite directions w.r.t. the ordering of the states, if they are equal then they must be of length $2$.
But then, since $n \ge 8$, we have at least one state in $Q_M$ being mapped to $n-1$ in $t$, and also in $s$.
But since $s$ is also obtained in Subcase~2.5.1, there are no such states besides $0$, $n-2$, and $n-1$, which yields a~contradiction.

\textit{Internal injectivity}:
Let $\e{t}$ be any transformation that fits in this subcase and results in the same $s$.
It follows in the same way as in~Subcase~2.5.1, that we have $0t = 0\e{t} = p$, $ft = f\e{t} = f$, and $q_i t = \e{q_i} \e{t} = n-2$ for all $i$.
Since the other transitions in $s$ are defined exactly as in $t$ and $\e{t}$, we have $\e{t} = t$.

\textbf{Supercase 3:} $t \notin \Wbf(n)$ and $pt^{k+1} = n-2$.\\
Here we have the chain
$$0 \stackrel{t}{\rightarrow} p \stackrel{t}{\rightarrow} pt \stackrel{t}{\rightarrow} \dots \stackrel{t}{\rightarrow} pt^k \stackrel{t}{\rightarrow} n-2 \stackrel{t}{\rightarrow} n-1.$$
We will always assign transformations $s$ such that $s$ together with $t$ generate a~transformation that focuses a~colliding pair, which distinguishes such transformations $s$ from those of Supercase~1.
Moreover, we will always have $0 s = n-2$, to distinguish $s$ from the transformations of Supercase~2.

For all the cases of Supercase~3, let $q_1,q_2,\ldots,q_v \in Q_M \setminus \{p\}$ be all the states such that $q_i t = n-2$, for all $i$.
Without loss of generality, we assume that $q_1 < q_2 < \dots < q_v$.

In contrast to Supercase~2, we have an~additional difficulty in constructions of $s$, which is that no state can be mapped to $n-2$ except state $0$.
On the other hand, the chains going through a~state $q_i$ and ending in $n-2$ are of length at most $k+1$ (i.e.\ they contain at most $k+2$ states including $n-2$).
Otherwise, if there is such a~chain of length at least $k+2$, then there would exist a~state $q \in Q_M$ such that $qt^{k+1}=n-2$, which means that the pair $\{p,qt\}$ is colliding because of $t$ and focused by $t^k$ to $n-1$, contradicting suffix-freeness.
This fact will allow give us more knowledge about the transformation $t$, helping to construct a~suitable $s$.
In particular, when $k=0$, all states $q_i$ have in-degree $0$.

We have the following cases covering all possibilities for $t$:

\textbf{Case 3.1}: $k=0$ and $t$ has a~cycle.\\
Let $r$ be the minimal among the states that appear in cycles of $t$, that is,
$$r = \min\{q\in Q \mid \text{q is in a~cycle of } t\}.$$
Let $s$ be the transformation illustrated in Fig.~\ref{fig:case3.1} and defined by
\begin{center}
  $0 s = n-2$, $p s = r$,\\
  $q_i s = p$ for $1\le i\le v$,\\
  $qs = qt$ for the other states $q\in Q$.
\end{center}
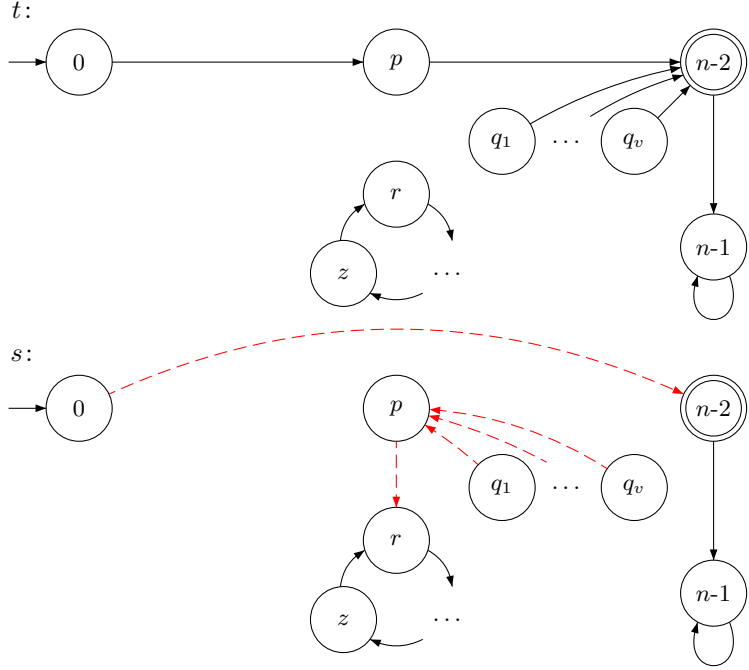
\begin{figure}[ht]
\unitlength 10pt\small
\gasset{Nh=2.5,Nw=2.5,Nmr=1.25,ELdist=0.3,loopdiam=1.5}
\begin{center}\begin{picture}(28,12)(0,-2)
\node[Nframe=n](name)(0,9){\normalsize$t\colon$}
\node(0)(2,7){0}\imark(0)
\node(p)(14,7){$p$}
\node(n-1)(26,0){$n$-$1$}
\node(n-2)(26,7){$n$-$2$}\rmark(n-2)
\node(z)(12,-1){$z$}
\node(r)(14,2){$r$}
\node[Nframe=n](rdots)(16,-1){$\dots$}
\node(q1)(18,4){$q_1$}
\node[Nframe=n](qdots)(20.5,4){$\dots$}
\node(qv)(23,4){$q_v$}
\drawedge(0,p){}
\drawedge(p,n-2){}
\drawedge(n-2,n-1){}
\drawloop[loopangle=270](n-1){}
\drawedge[curvedepth=.5](q1,n-2){}
\drawedge[curvedepth=.6,sxo=-.5,exo=1.5](qdots,n-2){}
\drawedge[curvedepth=0](qv,n-2){}
\drawedge[curvedepth=1](z,r){}
\drawedge[curvedepth=1](r,rdots){}
\drawedge[curvedepth=1](rdots,z){}
\end{picture}
\begin{picture}(28,12)(0,-1)
\node[Nframe=n](name)(0,9){\normalsize$s\colon$}
\node(0')(2,7){0}\imark(0')
\node(p')(14,7){$p$}
\node(n-1')(26,0){$n$-$1$}
\node(n-2')(26,7){$n$-$2$}\rmark(n-2')
\node(z')(12,-1){$z$}
\node(r')(14,2){$r$}
\node[Nframe=n](rdots')(16,-1){$\dots$}
\node(q1')(18,4){$q_1$}
\node[Nframe=n](qdots')(20.5,4){$\dots$}
\node(qv')(23,4){$q_v$}
\drawedge[curvedepth=3,linecolor=red,dash={.5 .25}{.25}](0',n-2'){}
\drawedge[linecolor=red,dash={.5 .25}{.25}](p',r'){}
\drawedge(n-2',n-1'){}
\drawloop[loopangle=270](n-1'){}
\drawedge[curvedepth=-.2,linecolor=red,dash={.5 .25}{.25}](q1',p'){}
\drawedge[curvedepth=-.3,syo=.5,linecolor=red,dash={.5 .25}{.25}](qdots',p'){}
\drawedge[curvedepth=-.8,linecolor=red,dash={.5 .25}{.25}](qv',p'){}
\drawedge[curvedepth=1](z',r'){}
\drawedge[curvedepth=1](r',rdots'){}
\drawedge[curvedepth=1](rdots',z'){}
\end{picture}\end{center}
\caption{Case~3.1.}\label{fig:case3.1}
\end{figure}

Let $z$ be the state from the cycle of $t$ such that $zt = r$. 
We observe the following properties:
\begin{enumerate}
\item[(a)] $\{p,z\}$ is a~colliding pair focused by $s$ to state $r$ in the cycle, which is the smallest state in a~cycle. 
This is the only colliding pair which is focused to a~state in a~cycle.

\item[(b)] All states from $Q_M$ whose mapping is different in $s$ and $t$ belong to the tree of $t$, and so to the orbit of a~cycle.

\item[(c)] $s$ has a~cycle.
\end{enumerate}

\textit{Internal injectivity}:
Let $\e{t}$ be any transformation that fits in this case and results in the same $s$; we will show that $\e{t}=t$.
From~(a), there is the unique colliding pair $\{p,z\}$ focused to a~state in a~cycle, hence $\{\e{p},\e{z}\} = \{p,z\}$.
Moreover, $p$ and $\e{p}$ are not in the cycle, whereas $z$ and $\e{z}$ are, so $\e{p}=p$ and $\e{z}=z$.
Since there is no state $q \neq 0$ such that $qt=p$, the only states mapped to $p$ by $s$ are $q_i$, hence $q_i = \e{q_i}$ for all $i$.
We know that $0t = 0\e{t} = p$, and $q_i t = q_i \e{t} = n-2$ for all $i$.
Since the other transitions in $s$ are defined exactly as in $t$ and $\e{t}$, we know that $\e{t}=t$.

\textbf{Case~3.2}: $t$ does not fit into any of the previous cases, $k=0$, and there exists a~state $x \in Q \setminus \{0\}$ such that $xt \notin \{x,n-2,n-1\}$.\\
Let $x$ be the smallest state among the states satisfying the conditions and with the largest $\ell\ge 1$ such that $xt^\ell \notin \{xt^{\ell-1},n-2,n-1\}$.
By the conditions of the case and since $t$ does not have a~cycle, $x$ is well-defined, and $\ell \ge 1$ and it is finite.

Note that $xt^{\ell+1} \neq n-2$, because $xt^\ell$ collides with $p$.
We have $xt^{\ell+1} \in \{xt^\ell,n-1\}$, and $x$ has in-degree 0.
Also note that, since $k=0$, all $q_i$ are of in-degree 0, because otherwise $pt=q_it=n-2$ would violate suffix-freeness.
We have the following subcases in this case that cover all possibilities for $t$:

\textbf{Subcase~3.2.1}: $\ell \ge 2$ and $xt^{\ell+1} = n-1$.\\
We have the following two subsubcases: (i) there exists $i$ such that $q_i < x$, and (ii) there is no such $i$.
Let $s$ be the transformation illustrated in Fig.~\ref{fig:subcase3.2.1} and defined by
\begin{center}
  $0 s = n-2$, $p s = xt^\ell$,\\
  $(xt^\ell) s = xt^\ell$ (i), $(xt^\ell) s = n-1$ (ii),\\
  $q_i s = p$ for $1\le i\le v$,\\
  $q s = q t$ for the other states $q\in Q$.
\end{center}
\begin{figure}[ht]
\unitlength 10pt\small
\gasset{Nh=2.5,Nw=2.5,Nmr=1.25,ELdist=0.3,loopdiam=1.5}
\begin{center}\begin{picture}(28,12)(0,-2)
\node[Nframe=n](name)(0,9){\normalsize$t\colon$}
\node(0)(2,7){0}\imark(0)
\node(p)(14,7){$p$}
\node(n-1)(26,0){$n$-$1$}
\node(n-2)(26,7){$n$-$2$}\rmark(n-2)
\node(x)(2,0){$x$}
\node(xt)(6,0){$xt$}
\node[Nframe=n](xdots)(10,0){$\dots$}
\node(xtl)(14,0){$xt^\ell$}
\node(q1)(18,4){$q_1$}
\node[Nframe=n](qdots)(20.5,4){$\dots$}
\node(qv)(23,4){$q_v$}
\drawedge(0,p){}
\drawedge(p,n-2){}
\drawedge(n-2,n-1){}
\drawloop[loopangle=270](n-1){}
\drawedge[curvedepth=.5](q1,n-2){}
\drawedge[curvedepth=.6,sxo=-.5,exo=1.5](qdots,n-2){}
\drawedge[curvedepth=0](qv,n-2){}
\drawedge(x,xt){}
\drawedge(xt,xdots){}
\drawedge(xdots,xtl){}
\drawedge(xtl,n-1){}
\end{picture}
\begin{picture}(28,12)(0,-1)
\node[Nframe=n](name)(0,9){\normalsize$s\colon$}
\node(0')(2,7){0}\imark(0')
\node(p')(14,7){$p$}
\node(n-1')(26,0){$n$-$1$}
\node(n-2')(26,7){$n$-$2$}\rmark(n-2')
\node(q1')(18,4){$q_1$}
\node[Nframe=n](qdots')(20.5,4){$\dots$}
\node(qv')(23,4){$q_v$}
\node(x')(2,0){$x$}
\node(xt')(6,0){$xt$}
\node[Nframe=n](xdots')(10,0){$\dots$}
\node(xtl')(14,0){$xt^\ell$}
\drawedge[curvedepth=3,linecolor=red,dash={.5 .25}{.25}](0',n-2'){}
\drawedge(n-2',n-1'){}
\drawloop[loopangle=270](n-1'){}
\drawedge[curvedepth=-.2,linecolor=red,dash={.5 .25}{.25}](q1',p'){}
\drawedge[curvedepth=-.3,syo=.5,linecolor=red,dash={.5 .25}{.25}](qdots',p'){}
\drawedge[curvedepth=-.8,linecolor=red,dash={.5 .25}{.25}](qv',p'){}
\drawedge[curvedepth=-3,linecolor=red,dash={.5 .25}{.25}](p',xtl'){}
\drawedge(x',xt'){}
\drawedge(xt',xdots'){}
\drawedge(xdots',xtl'){}
\drawloop[ELpos=80,linecolor=red,dash={.1 .1}{.1}](xtl'){(i)}
\drawedge[linecolor=red,dash={.1 .1}{.1}](xtl',n-1'){(ii)}
\end{picture}\end{center}
\caption{Subcase~3.2.1.}\label{fig:subcase3.2.1}
\end{figure}

We observe the following properties:
\begin{enumerate}
\item[(a)] $\{p, xt^{\ell-1}\}$ is a~colliding pair focused by $s$ to $xt^\ell$.

\item[(b)] All states from $Q_M$ whose mapping is different in $s$ and $t$ belong to the tree of $xt^\ell$,
which is either a~fixed point (i) or a~state mapped to $n-1$ (ii).

\item[(c)] $s$ does not contain any cycles.
\end{enumerate}

\textit{External injectivity}:
Since $s$ does not have any cycles, it is different from the transformations of Case~3.1.

\textit{Internal injectivity}:
Let $\e{t}$ be any transformation that fits in this subcase and results in the same $s$; we will show that $\e{t}=t$.
By Lemma~\ref{lem:orbits} the trees from~(b) of $t$ and $\e{t}$ must be the same, so $xt^\ell = \e{x}\e{t}^{\e{\ell}}$.
Also, the subsubcase is determined by $xt^\ell s$ and thus it is the same for both $t$ and $\e{t}$.

Consider all colliding pairs focused by $s$ to $xt^\ell$ that do not contain $xt^\ell$.
All of them contain $p$, so if there are two or more such pairs, then $\e{p} = p$.
Suppose that there is only one such pair $\{p,x^{\ell-1}\} = \{\e{p},\e{x}\e{t}^{\e{\ell}-1}\}$.
Note that $\ell = \e{\ell}$, as this is the length of a~longest path ending at $xt^{\ell} = \e{x}\e{t}^{\e{\ell}}$.
Also, only states $q_i$ are mapped to state $p$, and they all have in-degree 0.
If $\ell > 2$, then $p$ is distinguished from $xt^{\ell-1}$, since to $xt^{\ell-1}$ there is mapped $xt^{\ell-2}$ of in-degree $>0$; hence $p = \e{p}$.
Consider $\ell = 2$.
Let $U$ be the set of states that are mapped either to $p$ or to $xt^{\ell-1}$; then $\e{U} = U$.
The smallest state in $U$ is either a~state $q_i$ or $x$ (by the choice of $x$).
If the subsubcase is (i), then the smallest state in $U$ is $q_i$ and so is mapped to $p$, while in subsubcase (ii) it is $x$ mapped to $xt^{\ell}$.
Hence, the smallest state distinguishes $p$ from $xt^{\ell}$, and we have $p = \e{p}$ and $xt^{\ell-1} = \e{x}\e{t}^{\ell-1}$.
Then also $q_i = \e{q_i}$ for all $i$, since these are precisely the states mapped to $p = \e{p}$.
Summarizing, we know that $0t = 0\e{t} = p$, $pt = p\e{t} = n-2$, $(xt^\ell)t = (\e{x}\e{t}^{\ell})\e{t} = n-1$, and $q_i t = q_i \e{t} = n-2$.
Since the other transitions in $s$ are defined exactly as in $t$ and $\e{t}$, we have $\e{t}=t$.

\textbf{Subcase~3.2.2}: $\ell=1$, $xt^2 = n-1$, and $xt$ has in-degree at least $2$.\\
Let $y$ be the smallest state such that $yt = xt$ and $y \neq x$.
Note that $x < y$ and $y$ has in-degree 0.
Let $s$ be the transformation illustrated in Fig.~\ref{fig:subcase3.2.2} and defined by
\begin{center}
  $0 s = n-2$, $p s = y$,\\
  $(xt) s = x$, $x s = y$,\\
  $q_i s = p$ for all $i$,\\
  $q s = q t$ for the other states $q\in Q$.
\end{center}
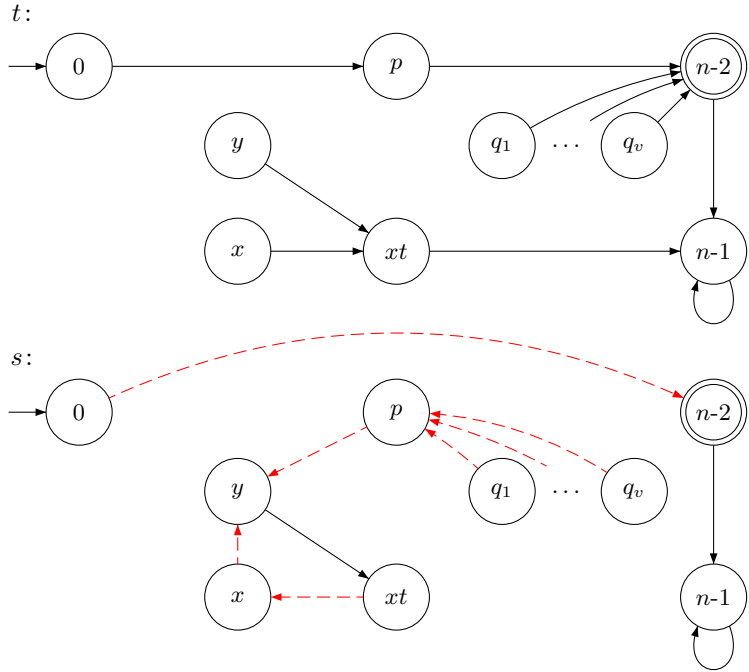
\begin{figure}[ht]
\unitlength 10pt\small
\gasset{Nh=2.5,Nw=2.5,Nmr=1.25,ELdist=0.3,loopdiam=1.5}
\begin{center}\begin{picture}(28,12)(0,-2)
\node[Nframe=n](name)(0,9){\normalsize$t\colon$}
\node(0)(2,7){0}\imark(0)
\node(p)(14,7){$p$}
\node(n-1)(26,0){$n$-$1$}
\node(n-2)(26,7){$n$-$2$}\rmark(n-2)
\node(x)(8,0){$x$}
\node(xt)(14,0){$xt$}
\node(y)(8,4){$y$}
\node(q1)(18,4){$q_1$}
\node[Nframe=n](qdots)(20.5,4){$\dots$}
\node(qv)(23,4){$q_v$}
\drawedge(0,p){}
\drawedge(p,n-2){}
\drawedge(n-2,n-1){}
\drawloop[loopangle=270](n-1){}
\drawedge[curvedepth=.5](q1,n-2){}
\drawedge[curvedepth=.6,sxo=-.5,exo=1.5](qdots,n-2){}
\drawedge[curvedepth=0](qv,n-2){}
\drawedge(x,xt){}
\drawedge(xt,n-1){}
\drawedge(y,xt){}
\end{picture}
\begin{picture}(28,12)(0,-1)
\node[Nframe=n](name)(0,9){\normalsize$s\colon$}
\node(0')(2,7){0}\imark(0')
\node(p')(14,7){$p$}
\node(n-1')(26,0){$n$-$1$}
\node(n-2')(26,7){$n$-$2$}\rmark(n-2')
\node(q1')(18,4){$q_1$}
\node[Nframe=n](qdots')(20.5,4){$\dots$}
\node(qv')(23,4){$q_v$}
\node(x')(8,0){$x$}
\node(xt')(14,0){$xt$}
\node(y')(8,4){$y$}
\drawedge[curvedepth=3,linecolor=red,dash={.5 .25}{.25}](0',n-2'){}
\drawedge(n-2',n-1'){}
\drawloop[loopangle=270](n-1'){}
\drawedge[curvedepth=-.2,linecolor=red,dash={.5 .25}{.25}](q1',p'){}
\drawedge[curvedepth=-.3,syo=.5,linecolor=red,dash={.5 .25}{.25}](qdots',p'){}
\drawedge[curvedepth=-.8,linecolor=red,dash={.5 .25}{.25}](qv',p'){}
\drawedge(y',xt'){}
\drawedge[linecolor=red,dash={.5 .25}{.25}](p',y'){}
\drawedge[linecolor=red,dash={.5 .25}{.25}](xt',x'){}
\drawedge[linecolor=red,dash={.5 .25}{.25}](x',y'){}
\end{picture}\end{center}
\caption{Subcase~3.2.2.}\label{fig:subcase3.2.2}
\end{figure}

We observe the following properties:
\begin{enumerate}
\item[(a)] $\{p,xt\}$ is a~colliding pair focused by $st$ to $xt$.
Note that in contrast to the previous cases, the focusing transformation here is $st$ instead of $s$.

\item[(b)] All states from $Q_M$ whose mapping is different in $s$ and $t$ belong to the same orbit of a~cycle.

\item[(c)] $s$ contains exactly one cycle, namely $(xt,x,y)$.
\end{enumerate}

\textit{External injectivity}:
Since all colliding pairs focused by $s$ must belong to the orbit from~(b), and the smallest state in the cycle of the orbit from~(b) is $x$ of in-degree 1, $s$ does not map a~colliding pair to it and thus it is different from the transformations of Case~3.1.

Since $s$ has a~cycle, it is different from the transformations of Subcase~3.3.1.

\textit{Internal injectivity}:
Let $\e{t}$ be any transformation that fits in this subcase and results in the same $s$; we will show that $\e{t}=t$.
All colliding pairs that are focused have states from the orbit of the cycle from Property~(b), hence $(xt,x,y) = (\e{x}\e{t},\e{x},\e{y})$.
Since $x$ and $\e{x}$ are the smallest states in the cycle, we have $x = \e{x}$, $y = \e{y}$, and $xt = \e{x}\e{t}$.
Since $y$ has in-degree 0 in $t$, $p$ is the only state outside the cycle that is mapped to $y$ in $s$; hence $p = \e{p}$.
Also, all states mapped to $p$ by $s$ are precisely the states $q_i$; hence $q_i = \e{q_i}$ for all $i$.
We know that $0t = 0\e{t} = p$, $pt = p\e{t} = n-2$, $xt = x\e{t}$, $(xt)t = (xt)\e{t} = n-1$, and $q_i = \e{q_i} = n-2$.
Since the other transitions in $s$ are defined exactly as in $t$ and $\e{t}$, we have $\e{t}=t$.

\textbf{Subcase~3.2.3}: $\ell=1$, $xt^2 = n-1$, and $xt$ has in-degree $1$.\\
We split the subcase into the following two subsubcases: (i) $v \ge 1$ or $p < xt$; (ii) $v=0$ and $p > xt$.
Let $s$ be the transformation illustrated in Fig.~\ref{fig:subcase3.2.3} and defined by
\begin{center}
  $0 s = n-2$, $p s = x$,\\
  $xt s = x$,\\
  $x s = x$ (i), $x s = n-1$ (ii),\\
  $q_i s = p$ for all $i$,\\
  $q s = q t$ for the other states $q\in Q$.
\end{center}
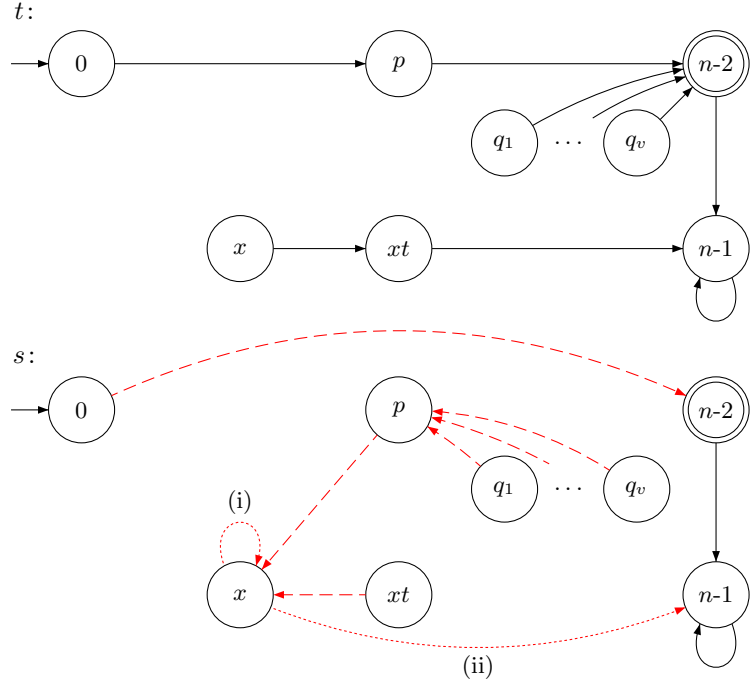
\begin{figure}[ht]
\unitlength 10pt\small
\gasset{Nh=2.5,Nw=2.5,Nmr=1.25,ELdist=0.3,loopdiam=1.5}
\begin{center}\begin{picture}(28,12)(0,-2)
\node[Nframe=n](name)(0,9){\normalsize$t\colon$}
\node(0)(2,7){0}\imark(0)
\node(p)(14,7){$p$}
\node(n-1)(26,0){$n$-$1$}
\node(n-2)(26,7){$n$-$2$}\rmark(n-2)
\node(x)(8,0){$x$}
\node(xt)(14,0){$xt$}
\node(q1)(18,4){$q_1$}
\node[Nframe=n](qdots)(20.5,4){$\dots$}
\node(qv)(23,4){$q_v$}
\drawedge(0,p){}
\drawedge(p,n-2){}
\drawedge(n-2,n-1){}
\drawloop[loopangle=270](n-1){}
\drawedge[curvedepth=.5](q1,n-2){}
\drawedge[curvedepth=.6,sxo=-.5,exo=1.5](qdots,n-2){}
\drawedge[curvedepth=0](qv,n-2){}
\drawedge(x,xt){}
\drawedge(xt,n-1){}
\end{picture}
\begin{picture}(28,13)(0,-2)
\node[Nframe=n](name)(0,9){\normalsize$s\colon$}
\node(0')(2,7){0}\imark(0')
\node(p')(14,7){$p$}
\node(n-1')(26,0){$n$-$1$}
\node(n-2')(26,7){$n$-$2$}\rmark(n-2')
\node(q1')(18,4){$q_1$}
\node[Nframe=n](qdots')(20.5,4){$\dots$}
\node(qv')(23,4){$q_v$}
\node(x')(8,0){$x$}
\node(xt')(14,0){$xt$}
\drawedge[curvedepth=3,linecolor=red,dash={.5 .25}{.25}](0',n-2'){}
\drawedge(n-2',n-1'){}
\drawloop[loopangle=270](n-1'){}
\drawedge[curvedepth=-.2,linecolor=red,dash={.5 .25}{.25}](q1',p'){}
\drawedge[curvedepth=-.3,syo=.5,linecolor=red,dash={.5 .25}{.25}](qdots',p'){}
\drawedge[curvedepth=-.8,linecolor=red,dash={.5 .25}{.25}](qv',p'){}
\drawedge[linecolor=red,dash={.5 .25}{.25}](p',x'){}
\drawedge[linecolor=red,dash={.5 .25}{.25}](xt',x'){}
\drawloop[linecolor=red,dash={.1 .1}{.1}](x'){(i)}
\drawedge[linecolor=red,dash={.1 .1}{.1},ELside=r,curvedepth=-2](x',n-1){(ii)}
\end{picture}\end{center}
\caption{Subcase~3.2.3.}\label{fig:subcase3.2.3}
\end{figure}

We observe the following properties:
\begin{enumerate}
\item[(a)] $\{p, xt\}$ is a~colliding pair focused by $s$ to $x$.

\item[(b)] All states from $Q_M$ whose mapping is different in $s$ and $t$ belong to the same tree of $x$,
which is either a~fixed point (i) or a~state mapped to $n-1$ (ii).

\item[(c)] $s$ does not contain any cycles.
\end{enumerate}

\textit{External injectivity}:
Since $s$ does not have any cycles, it is different from the transformations of Case~3.1 and Subcase~3.2.2.

Let $\e{t}$ be a~transformation that fits in Subcase~3.2.1 and results in the same $s$.
By Lemma~\ref{lem:orbits}, the trees from~(b) of both $t$ and $\e{t}$ must be the same, so $xt^\ell = \e{x}\e{t}^{\e{\ell}}$.
It follows that the subsubcases, which are determined by $xs$, are the same for both $t$ and $\e{t}$.
Note that $x$ has in-degree 2 in $s$, one of the states from this pair (i.e.\ $xt$) has in-degree $0$, and the other one ($\e{x}\e{t}^{\e{\ell}-1}$) has in-degree at least 1.
If the subsubcase is~(i), then $\e{p}$ has in-degree at least 1, and so both the states have in-degree at least 1, which yields a~contradiction.
If the subsubcase is~(ii), then $p$ has in-degree 0, and so both the states have in-degree 0, which yields a~contradiction.

\textit{Internal injectivity}:
Let $\e{t}$ be any transformation that fits in this subcase and results in the same $s$; we will show that $\e{t}=t$.
By Lemma~\ref{lem:orbits} we know that $x = \e{x}$, and so also $\{p,xt\} = \{\e{p},x\e{t}\}$.
The subsubcase for both $t$ and $\e{t}$ is determined by $xs$ and so must be the same.
If the subsubcase is~(i), then $p$ has in-degree $\ge 1$ or it is smaller than $xt$; hence $p = \e{p}$ and $xt = x\e{t}$.
If the subsubcase is~(ii), then both $p$ and $xt$ have in-degree $0$ and $p$ is larger than $xt$; hence again $p = \e{p}$ and $xt = x\e{t}$.
Also, $q_i = \e{q_i}$ as these are precisely all the states mapped to $p$ by $s$.
We know that $0t = 0\e{t} = p$, $pt = p\e{t} = n-2$, $(xt)t = (xt)\e{t} = n-1$, and $q_i t = q_i \e{t} = n-2$ for all $i$.
Since the other transitions in $s$ are defined exactly as in $t$ and $\e{t}$, we have $\e{t}=t$.

\textbf{Subcase~3.2.4}: $xt^{\ell} = xt^{\ell+1}$.\\
Let $s$ be the transformation illustrated in Fig.~\ref{fig:subcase3.2.4} and defined by
\begin{center}
  $0 s = n-2$, $p s = xt^\ell$,\\
  $(x t^i) s = x t^{i-1}$ for $1\le i\le \ell$,\\
  $x s = p$,\\
  $q_i s = x$ for $1\le i\le v$,\\
  $q s = q t$ for the other states $q\in Q$.
\end{center}
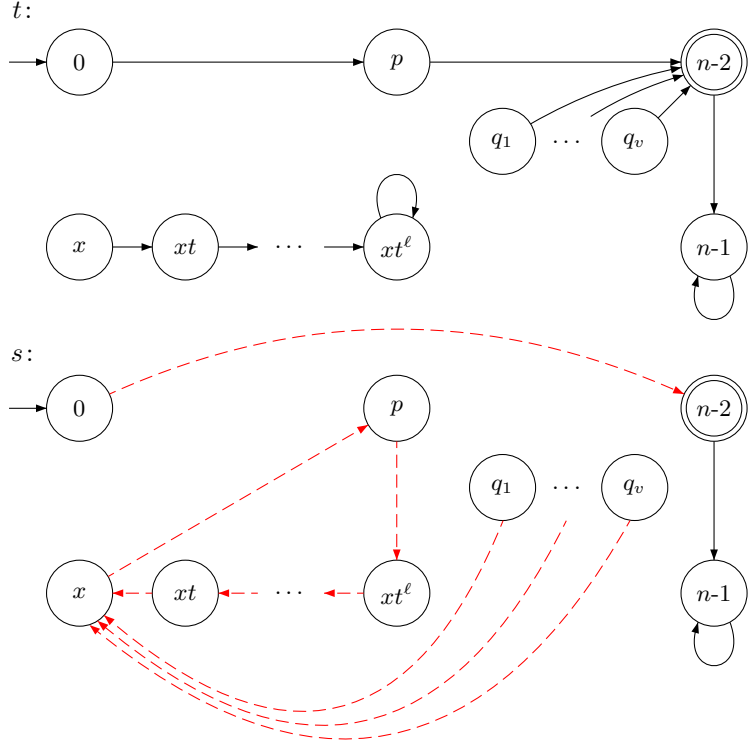
\begin{figure}[ht]
\unitlength 10pt\small
\gasset{Nh=2.5,Nw=2.5,Nmr=1.25,ELdist=0.3,loopdiam=1.5}
\begin{center}\begin{picture}(28,12)(0,-2)
\node[Nframe=n](name)(0,9){\normalsize$t\colon$}
\node(0)(2,7){0}\imark(0)
\node(p)(14,7){$p$}
\node(n-1)(26,0){$n$-$1$}
\node(n-2)(26,7){$n$-$2$}\rmark(n-2)
\node(x)(2,0){$x$}
\node(xt)(6,0){$xt$}
\node[Nframe=n](xdots)(10,0){$\dots$}
\node(xtl)(14,0){$xt^\ell$}
\node(q1)(18,4){$q_1$}
\node[Nframe=n](qdots)(20.5,4){$\dots$}
\node(qv)(23,4){$q_v$}
\drawedge(0,p){}
\drawedge(p,n-2){}
\drawedge(n-2,n-1){}
\drawloop[loopangle=270](n-1){}
\drawedge[curvedepth=.5](q1,n-2){}
\drawedge[curvedepth=.6,sxo=-.5,exo=1.5](qdots,n-2){}
\drawedge[curvedepth=0](qv,n-2){}
\drawedge(x,xt){}
\drawedge(xt,xdots){}
\drawedge(xdots,xtl){}
\drawloop(xtl){}
\end{picture}
\begin{picture}(28,16)(0,-5)
\node[Nframe=n](name)(0,9){\normalsize$s\colon$}
\node(0')(2,7){0}\imark(0')
\node(p')(14,7){$p$}
\node(n-1')(26,0){$n$-$1$}
\node(n-2')(26,7){$n$-$2$}\rmark(n-2')
\node(q1')(18,4){$q_1$}
\node[Nframe=n](qdots')(20.5,4){$\dots$}
\node(qv')(23,4){$q_v$}
\node(x')(2,0){$x$}
\node(xt')(6,0){$xt$}
\node[Nframe=n](xdots')(10,0){$\dots$}
\node(xtl')(14,0){$xt^\ell$}
\drawedge[curvedepth=3,linecolor=red,dash={.5 .25}{.25}](0',n-2'){}
\drawedge(n-2',n-1'){}
\drawloop[loopangle=270](n-1'){}
\drawedge[curvedepth=6.5,sxo=.5,linecolor=red,dash={.5 .25}{.25}](q1',x'){}
\drawedge[curvedepth=7,sxo=.5,exo=-.5,linecolor=red,dash={.5 .25}{.25}](qdots',x'){}
\drawedge[curvedepth=7.5,sxo=.5,exo=-1,linecolor=red,dash={.5 .25}{.25}](qv',x'){}
\drawedge[linecolor=red,dash={.5 .25}{.25}](p',xtl'){}
\drawedge[linecolor=red,dash={.5 .25}{.25}](xtl',xdots'){}
\drawedge[linecolor=red,dash={.5 .25}{.25}](xdots',xt'){}
\drawedge[linecolor=red,dash={.5 .25}{.25}](xt',x'){}
\drawedge[linecolor=red,dash={.5 .25}{.25}](x',p'){}
\end{picture}\end{center}
\caption{Subcase~3.2.4.}\label{fig:subcase3.2.4}
\end{figure}

We observe the following properties:
\begin{enumerate}
\item[(a)] $\{p,xt^\ell\}$ is a~colliding pair focused by $st$ to $xt^\ell$.

\item[(b)] All states from $Q_M$ whose mapping is different in $s$ and $t$ belong to the same orbit of a~cycle.

\item[(c)] $s$ contains exactly one cycle, namely $(p,xt^\ell,xt^{\ell-1},\ldots,x)$. 
\end{enumerate}

\textit{External injectivity}:
Let $\e{t}$ be a~transformation that fits in Case~3.1 and results in the same $s$.
Then $\e{t}$ must have the cycle $(p,xt^\ell,xt^{\ell-1},\ldots,x)$, since it exists in $s$ and the construction of Case~3.1 does not introduce any new cycles.
But then $0t\e{t}t = xt^{\ell}$ and $(xt^{\ell})t\e{t}t = xt^{\ell}$. Since $p$ collides with $xt^{\ell}$, $t$ and $\e{t}$ cannot be both in $T(\cD_n)$.

Since $s$ has a~cycle, it is different from the transformations of Subcase~3.2.1 and Subcase~3.2.3.

Now let $\e{t}$ be a~transformation that fits in Subcase~3.2.2 and results in the same $s$.
Since $s$ contains exactly one cycle, it must be that $\ell=1$ and $(p,xt,x) = (\e{x},\e{y},\e{x}\e{t})$.
We have the following three possibilities:
If $p = \e{x}$, $xt = \e{y}$, and $x = \e{x}\e{t}$, then $\e{t}$ focuses the colliding pair $\{p,xt\} = \{\e{x},\e{y}\}$; hence $t$ and $\e{t}$ cannot be both in $T(\cD_n)$.
If $p = \e{y}$, then we have a~contradiction with that $p$ has in-degree 1 and $\e{y}$ has in-degree 2.
Finally, suppose that $p = \e{x}\e{t}$, $xt = \e{x}$, and $x = \e{y}$.
Then $x = \e{y}$ must have in-degree $2$, and there is $q_1 = \e{p}$ (and $v = 1$).
But $\{\e{p},\e{x}\e{t}\} = \{q_1,p\}$ is a~colliding pair because of $\e{t}$, and it is focused to $n-2$ by $t$; hence $t$ and $\e{t}$ cannot be both in $T(\cD_n)$.

\textit{Internal injectivity}:
Let $\e{t}$ be any transformation that fits in this subcase and results in the same $s$; we will show that $\e{t}=t$.
By~(c), we know that $\ell = \e{\ell}$ and $(p,xt^\ell,xt^{\ell-1},\ldots,x) = (\e{p},\e{x}\e{t}^{\ell},\e{x}\e{t}^{\ell-1},\ldots,\e{x})$.

First suppose that $p = \e{p}$.
Then also $x = \e{x}$, $xt^\ell = x\e{t}^{\ell}$, $xt^{\ell-1} = x\e{t}^{\ell-1}$, and so on for the states of the cycle.
We know that $q_i = \e{q_i}$ for all $i$.
Hence, $0t = 0\e{t} = p$, $pt = p\e{t} = n-2$, $xt^i = x\e{t}^i$ for all $i$, and $q_i t = q_i\e{t} = n-2$.
Since the other transitions in $s$ are defined exactly as in $t$ and $\e{t}$, we have $\e{t}=t$.

Now suppose that $p \neq \e{p}$.
So $p = \e{x}\e{t}^i$ for some $i$.
Note that $p$ collides with all states $xt,\ldots,xt^\ell$, and $\e{p}$ collides with all states $\e{x}\e{t},\ldots,\e{x}\e{t}^{\ell}$.
If $\ell \ge 2$, then there exists $\e{x}\e{t}^j$ with $j \ge 1$ that is different from $p$ and collides with $p$.
But then $\e{t}^\ell$ focuses both these states to $\e{x}\e{t}^\ell$.
Finally consider $\ell = 1$.
If $p = \e{x}$ then $\{x,xt\} = \{\e{p},\e{x}\e{t}\}$, which is a~colliding pair because of $\e{t}$ that is focused by $t$ to $xt$.
On the other hand, if $p = \e{x}\e{t}$, then $xt = \e{x}$, and so $\{p,xt\} = \{\e{x}\e{t},\e{x}\}$ is a~colliding pair because of $t$ that is focused by $\e{t}$ to $\e{x}\e{t}$.
Hence, $t$ and $\e{t}$ cannot be both in $T(\cD_n)$.

\textbf{Case~3.3}: $t$ does not fit into any of the previous cases, $k=0$, and there exist at least two fixed points of in-degree 1.\\
Let the two smallest fixed points of in-degree 1 be the states $f_1$ and $f_2$, that is,
$$f_1 = \min\{q\in Q \mid q t = q, \forall_{q'\in Q \setminus \{q\}}\ q' t \neq q\},$$
$$f_2 = \min\{q\in Q\setminus\{f_1\} \mid q t = q, \forall_{q'\in Q \setminus \{q\}}\ q' t \neq q\}.$$
Let $s$ be the transformation illustrated in Fig.~\ref{fig:case3.3} and defined by
\begin{center}
  $0 s = n-2$, $f_1 s = f_2$, $f_2 s = f_1$, $p s = f_2$,\\
  $q_i s = p$ for $1\le i\le v$,\\
  $q s = q t$ for the other states $q\in Q$.
\end{center}
\begin{figure}[ht]
\unitlength 10pt\small
\gasset{Nh=2.5,Nw=2.5,Nmr=1.25,ELdist=0.3,loopdiam=1.5}
\begin{center}\begin{picture}(28,12)(0,-2)
\node[Nframe=n](name)(0,9){\normalsize$t\colon$}
\node(0)(2,7){0}\imark(0)
\node(p)(14,7){$p$}
\node(n-1)(26,0){$n$-$1$}
\node(n-2)(26,7){$n$-$2$}\rmark(n-2)
\node(q1)(18,4){$q_1$}
\node[Nframe=n](qdots)(20.5,4){$\dots$}
\node(qv)(23,4){$q_v$}
\node(f1)(8,0){$f_1$}
\node(f2)(14,0){$f_2$}
\drawedge(0,p){}
\drawedge(p,n-2){}
\drawedge(n-2,n-1){}
\drawloop[loopangle=270](n-1){}
\drawedge[curvedepth=.5](q1,n-2){}
\drawedge[curvedepth=.6,sxo=-.5,exo=1.5](qdots,n-2){}
\drawedge[curvedepth=0](qv,n-2){}
\drawloop(f1){}
\drawloop(f2){}
\end{picture}
\begin{picture}(28,14)(0,-1)
\node[Nframe=n](name)(0,9){\normalsize$s\colon$}
\node(0')(2,7){0}\imark(0')
\node(p')(14,7){$p$}
\node(n-1')(26,0){$n$-$1$}
\node(n-2')(26,7){$n$-$2$}\rmark(n-2')
\node(q1')(18,4){$q_1$}
\node[Nframe=n](qdots')(20.5,4){$\dots$}
\node(qv')(23,4){$q_v$}
\node(f1')(8,0){$f_1$}
\node(f2')(14,0){$f_2$}
\drawedge[curvedepth=3,linecolor=red,dash={.5 .25}{.25}](0',n-2'){}
\drawedge(n-2',n-1'){}
\drawloop[loopangle=270](n-1'){}
\drawedge[curvedepth=-.2,linecolor=red,dash={.5 .25}{.25}](q1',p'){}
\drawedge[curvedepth=-.3,syo=.5,linecolor=red,dash={.5 .25}{.25}](qdots',p'){}
\drawedge[curvedepth=-.8,linecolor=red,dash={.5 .25}{.25}](qv',p'){}
\drawedge[curvedepth=1,linecolor=red,dash={.5 .25}{.25}](f1',f2'){}
\drawedge[curvedepth=1,linecolor=red,dash={.5 .25}{.25}](f2',f1'){}
\drawedge[curvedepth=0,linecolor=red,dash={.5 .25}{.25}](p',f2'){}
\end{picture}\end{center}
\caption{Case~3.3.}\label{fig:case3.3}
\end{figure}
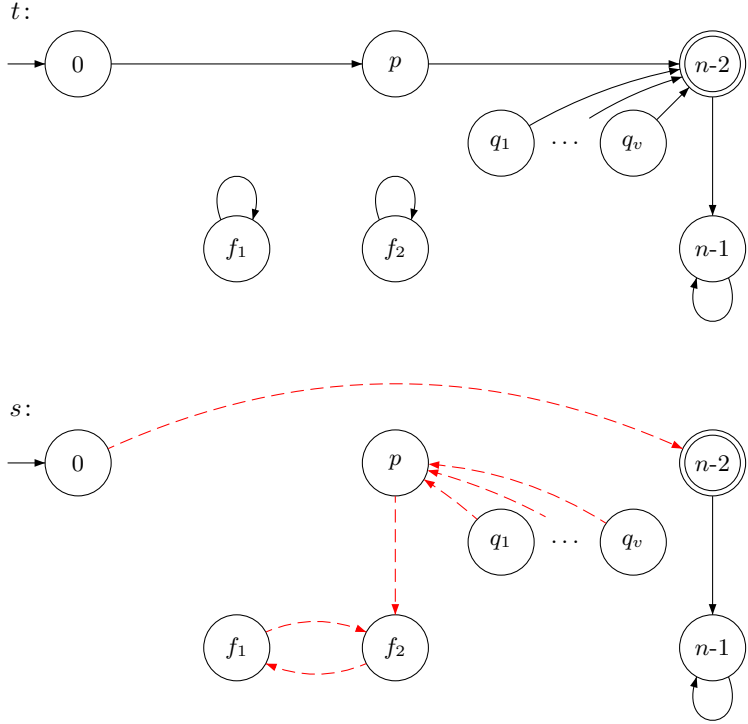

We observe the following properties:
\begin{enumerate}
\item[(a)] $\{p,f_1\}$ is a~colliding pair focused by $s$ to $f_2$.

\item[(b)] All states from $Q_M$ whose mapping is different in $s$ and $t$ belong to the same orbit of a~cycle.

\item[(c)] $s$ contains exactly one cycle, namely $(f_1,f_2)$.
\end{enumerate}

\textit{External injectivity}:
Since $\{p,f_1\}$ is the only colliding pair that is focused by $s$ to a~state in a~cycle, and $f_2$ is not the minimal state in the cycle, $s$ is different from the transformations of Case~3.1.

Since $s$ has a~cycle, it is different from the transformations of Subcase~3.2.1 and Subcase~3.2.3.
Also, since $s$ has exactly one cycle of length 2, it is different from the transformations of Subcase~3.2.2 and Subcase~3.2.4, which have a~cycle of length at least 3.

\textit{Internal injectivity}:
Let $\e{t}$ be any transformation that fits in this case and results in the same $s$; we will show that $\e{t}=t$.
From~(c), we know that $(f_1,f_2) = (\e{f_1},\e{f_2})$, and since $f_1$ has in-degree 1 and $f_2$ has in-degree 2 in $s$, we have $f_1 = \e{f_1}$ and $f_2 = \e{f_2}$.
Also $p = \e{p}$, as only $p$ and $f_2$ are mapped to $f_1$.
Then $q_i = \e{q_i}$ for all $i$, since these are precisely the states mapped to $p$ in $s$.
Hence $0t = 0\e{t}$, $pt = p\e{t} = n-1$, $f_1 t = f_1 \e{t} = f_1$, $f_2 t = f_2 \e{t} = f_2$, and $q_i t = q_i \e{t} = n-2$ for all $i$.
Since the other transitions in $s$ are defined exactly as in $t$ and $\e{t}$, we have $\e{t} = t$.

\textbf{Case~3.4}: $t$ does not fit into any of the previous cases and $k=0$.\\
In $t$, there is neither a~cycle (covered by Case~3.1) nor a~state $x \in Q_M$ such that $xt \notin \{x,n-1,n-2\}$ (covered by Case~3.2).
Hence, because $t \notin \Wbf(n)$, there must be a~fixed point $f$ of in-degree 1.
Because of Case~3.3, there is exactly one such fixed point.

Let $q_1 < \ldots < q_v$ be all the states from $Q_M \setminus \{p,f\}$ such that $q_i t = n-2$.
Let $r_1 < \ldots < r_u$ be all the states from $Q_M \setminus \{p,f\}$ such that $r_i t = n-1$.
All states $q_i$ and $r_i$ have in-degree 0 (covered by Case~3.2), and they are all the states besides $0,p,f,n-2,n-1$.
Because $n \ge 8$, we know that $v+u \ge 3$.
We have the following subcases that cover all possibilities for $t$:

\textbf{Subcase~3.4.1}: $v \ge 2$.\\
Let $s$ be the transformation illustrated in Fig.~\ref{fig:subcase3.4.1} and defined by
\begin{center}
  $0 s = n-2$, $p s = f$,\\
  $q_i s = q_{i+1}$ for $1\le i\le v-1$,\\
  $q_v s = q_1$,\\
  $r_i s = q_v$ for $1\le i\le u$,\\
  $q s = q t$ for the other states $q\in Q$.
\end{center}
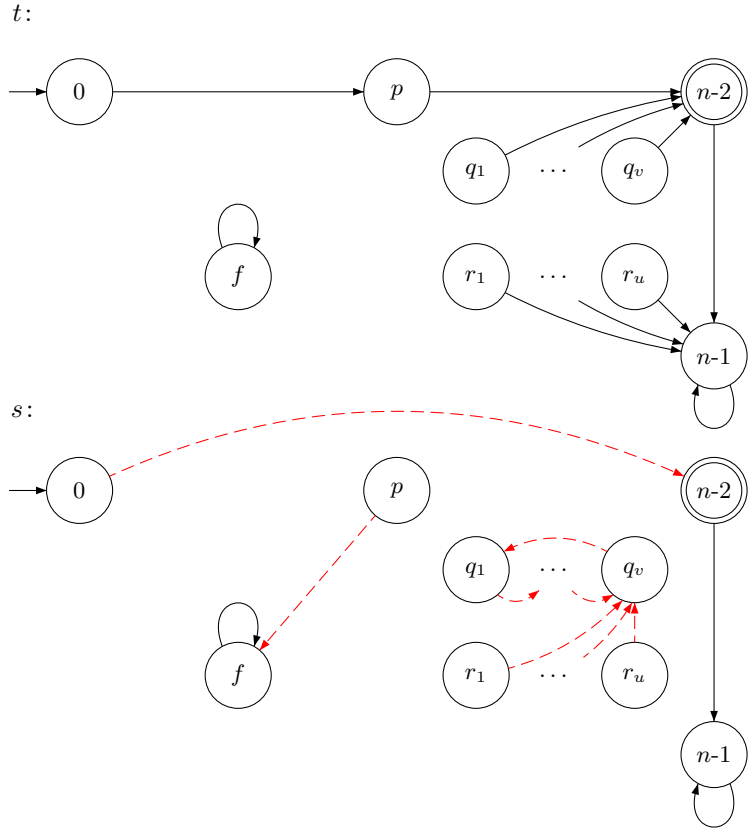
\begin{figure}[ht]
\unitlength 10pt\small
\gasset{Nh=2.5,Nw=2.5,Nmr=1.25,ELdist=0.3,loopdiam=1.5}
\begin{center}\begin{picture}(28,14)(0,-2)
\node[Nframe=n](name)(0,13){\normalsize$t\colon$}
\node(0)(2,10){0}\imark(0)
\node(p)(14,10){$p$}
\node(n-1)(26,0){$n$-$1$}
\node(n-2)(26,10){$n$-$2$}\rmark(n-2)
\node(q1)(17,7){$q_1$}
\node[Nframe=n](qdots)(20,7){$\dots$}
\node(qv)(23,7){$q_v$}
\node(r1)(17,3){$r_1$}
\node[Nframe=n](rdots)(20,3){$\dots$}
\node(ru)(23,3){$r_u$}
\node(f)(8,3){$f$}
\drawedge(0,p){}
\drawedge(p,n-2){}
\drawedge(n-2,n-1){}
\drawloop[loopangle=270](n-1){}
\drawedge[curvedepth=.5](q1,n-2){}
\drawedge[curvedepth=.6,sxo=-.5,exo=1.5](qdots,n-2){}
\drawedge[curvedepth=0](qv,n-2){}
\drawedge[curvedepth=-.5](r1,n-1){}
\drawedge[curvedepth=-.6,sxo=-.5,exo=1.5](rdots,n-1){}
\drawedge(ru,n-1){}
\drawloop(f){}
\end{picture}
\begin{picture}(28,15)(0,-2)
\node[Nframe=n](name)(0,13){\normalsize$s\colon$}
\node(0')(2,10){0}\imark(0')
\node(p')(14,10){$p$}
\node(n-1')(26,0){$n$-$1$}
\node(n-2')(26,10){$n$-$2$}\rmark(n-2')
\node(q1')(17,7){$q_1$}
\node[Nframe=n,Nh=2,Nw=2,Nmr=1](qdots')(20,7){$\dots$}
\node(qv')(23,7){$q_v$}
\node(r1')(17,3){$r_1$}
\node[Nframe=n](rdots')(20,3){$\dots$}
\node(ru')(23,3){$r_u$}
\node(f')(8,3){$f$}
\drawedge[curvedepth=3,linecolor=red,dash={.5 .25}{.25}](0',n-2'){}
\drawedge(n-2',n-1'){}
\drawloop[loopangle=270](n-1'){}
\drawedge[linecolor=red,dash={.5 .25}{.25}](p',f'){}
\drawloop(f){}
\drawedge[curvedepth=-1.2,linecolor=red,dash={.5 .25}{.25}](q1',qdots'){}
\drawedge[curvedepth=-1.2,linecolor=red,dash={.5 .25}{.25}](qdots',qv'){}
\drawedge[curvedepth=-1.2,linecolor=red,dash={.5 .25}{.25}](qv',q1'){}
\drawedge[curvedepth=-.8,exo=.5,linecolor=red,dash={.5 .25}{.25}](r1',qv'){}
\drawedge[curvedepth=-.5,exo=.5,linecolor=red,dash={.5 .25}{.25}](rdots',qv'){}
\drawedge[linecolor=red,dash={.5 .25}{.25}](ru',qv'){}
\end{picture}\end{center}
\caption{Subcase~3.4.1.}\label{fig:subcase3.4.1}
\end{figure}

We observe the following properties:
\begin{enumerate}
\item[(a)] $\{p,f\}$ is a~colliding pair focused by $s$ to $f$.
This is the only colliding pair that is focused by $s$ to a~fixed point.

\item[(c)] $s$ contains exactly one cycle, namely $(q_1,\ldots,q_v)$.
\end{enumerate}

\textit{External injectivity}:
Observe that all states in the unique cycle have in-degree 1 except possibly $q_v$.
Thus, no colliding pair of states is focused to the smallest state $q_1$ in the cycle.
This distinguishes $s$ from the transformations of Case~3.1.

Since $s$ has a~cycle, it is different from the transformations of Subcase~3.2.1 and Subcase~3.2.3.
Also, $s$ is different from the transformations of Subcase~3.2.2, Subcase~3.2.4, and Case~3.3, which do not focus a~colliding pair to a~fixed point, because the orbits from their Properties~(b) do not have a~fixed point.

\textit{Internal injectivity}:
Let $\e{t}$ be any transformation that fits in this subcase and results in the same $s$; we will show that $\e{t}=t$.
By~(c), we know that $\e{q_i} = q_i$ for all $i$.
Then all states mapped by $s$ to $q_1$ must be $r_i$, hence $\e{r_i} = r_i$ for all $i$.
By~(a) and since the fixed point is distinguished in the colliding pair, we obtain that $\e{p} = p$ and $\e{f} = f$.
We know that $0t = 0\e{t} = p$, $pt = p\e{t} = n-2$, $q_i t = q_i \e{t} = n-2$ and $r_i t = r_i \e{t} = n-1$ for all $i$.
Since the other transitions in $s$ are defined exactly as in $t$ and $\e{t}$, we have $\e{t} = t$.

\textbf{Subcase~3.4.2}: $v = 1$.\\
We have $u \ge 2$. Let $s$ be the transformation illustrated in Fig.~\ref{fig:subcase3.4.2} and defined by
\begin{center}
  $0 s = n-2$, $p s = f$,\\
  $q_1 s = f$,\\
  $r_i s = p$ for $1\le i\le u$,\\
  $q s = q t$ for other states $q\in Q$.
\end{center}
\begin{figure}[ht]
\unitlength 10pt\small
\gasset{Nh=2.5,Nw=2.5,Nmr=1.25,ELdist=0.3,loopdiam=1.5}
\begin{center}\begin{picture}(28,14)(0,-2)
\node[Nframe=n](name)(0,13){\normalsize$t\colon$}
\node(0)(2,10){0}\imark(0)
\node(p)(14,10){$p$}
\node(n-1)(26,0){$n$-$1$}
\node(n-2)(26,10){$n$-$2$}\rmark(n-2)
\node(q1)(17,7){$q_1$}
\node(r1)(17,3){$r_1$}
\node[Nframe=n](rdots)(20,3){$\dots$}
\node(ru)(23,3){$r_u$}
\node(f)(8,3){$f$}
\drawedge(0,p){}
\drawedge(p,n-2){}
\drawedge(n-2,n-1){}
\drawloop[loopangle=270](n-1){}
\drawedge[curvedepth=.3](q1,n-2){}
\drawedge[curvedepth=-.5](r1,n-1){}
\drawedge[curvedepth=-.6,sxo=-.5,exo=1.5](rdots,n-1){}
\drawedge[curvedepth=-0](ru,n-1){}
\drawloop(f){}
\end{picture}
\begin{picture}(28,15)(0,-2)
\node[Nframe=n](name)(0,13){\normalsize$s\colon$}
\node(0')(2,10){0}\imark(0')
\node(p')(14,10){$p$}
\node(n-1')(26,0){$n$-$1$}
\node(n-2')(26,10){$n$-$2$}\rmark(n-2')
\node(q1')(17,7){$q_1$}
\node(r1')(17,3){$r_1$}
\node[Nframe=n](rdots')(20,3){$\dots$}
\node(ru')(23,3){$r_u$}
\node(f')(8,3){$f$}
\drawedge[curvedepth=3,linecolor=red,dash={.5 .25}{.25}](0',n-2'){}
\drawedge(n-2',n-1'){}
\drawloop[loopangle=270](n-1'){}
\drawedge[linecolor=red,dash={.5 .25}{.25}](p',f'){}
\drawloop(f'){}
\drawedge[curvedepth=-4.5,linecolor=red,dash={.5 .25}{.25}](r1',p'){}
\drawedge[curvedepth=-4.5,linecolor=red,dash={.5 .25}{.25}](rdots',p'){}
\drawedge[curvedepth=-4.5,eyo=.5,linecolor=red,dash={.5 .25}{.25}](ru',p'){}
\drawedge[linecolor=red,dash={.5 .25}{.25}](q1',f'){}
\end{picture}\end{center}
\caption{Subcase~3.4.2.}\label{fig:subcase3.4.2}
\end{figure}
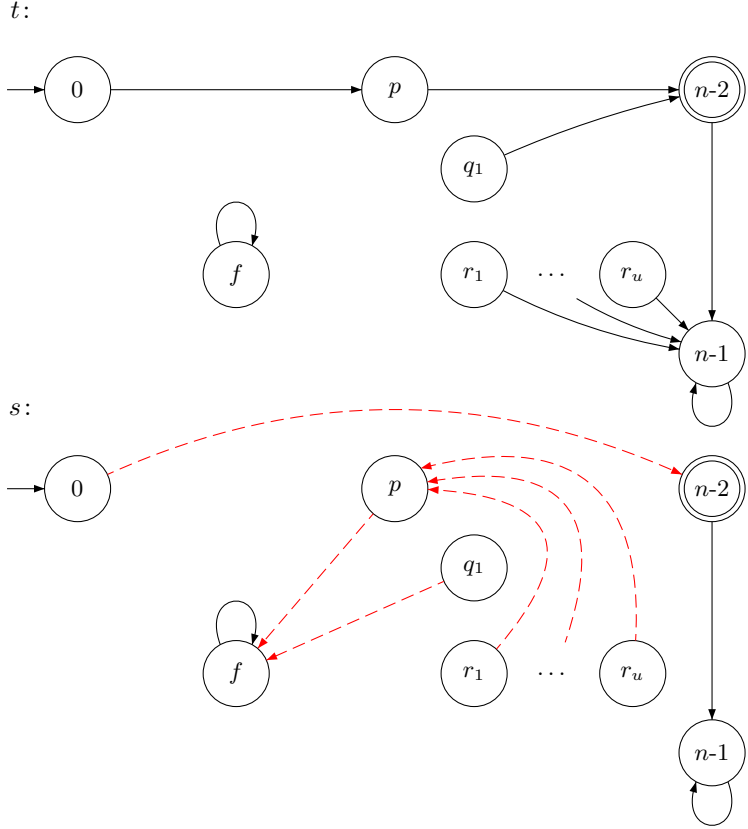

We observe the following properties:
\begin{enumerate}
\item[(a)] $\{p,f\}$ is a~colliding pair focused by $s$ to $f$.

\item[(b)] All states from $Q_M$ whose mapping is different in $s$ and $t$ belong to the same orbit of the fixed point $f$.

\item[(c)] $s$ does not contain any cycles.
\end{enumerate}

\textit{External injectivity}:
Since $s$ does not have any cycles, it is different from the transformations of Case~3.1, Subcase~3.2.2, Subcase~3.2.4, Case~3.3, and Subcase~3.4.1.

Let $\e{t}$ be a~transformation that fits in Subcase~3.2.1 and results in the same $s$.
By Lemma~\ref{lem:orbits}, the orbits from Properties~(b) for both $t$ and $\e{t}$ must be the same, so the subsubcase for $\e{t}$ is~(i), and necessarily $f = \e{x}\e{t}^{\e{\ell}}$.
We know that the states $\e{p}$ and $\e{x}\e{t}^{\e{\ell}-1}$ are mapped to $f$ and have in-degree at least 1.
This contradicts with that $p$ and $q_1$ are the only two states mapped to $f$, and $q_1$ has in-degree 0.

Let $\e{t}$ be a~transformation that fits in Subcase~3.2.3 and results in the same $s$.
By Lemma~\ref{lem:orbits}, the orbits from Properties~(b) for both $t$ and $\e{t}$ must be the same, so the subsubcase for $\e{t}$ is~(i), and necessarily $f = \e{x}$.
So $\{p,q_1\} = \{\e{p},\e{x}\e{t}\}$, but this is a~colliding pair because of $\e{t}$, which is focused to $n-2$ by $t$; hence, $t$ and $\e{t}$ cannot be both present in $T(\cD_n)$.

\textit{Internal injectivity}:
Let $\e{t}$ be any transformation that fits in this subcase and results in the same $s$; we will show that $\e{t}=t$.
Lemma~\ref{lem:orbits}, the orbits from Properties~(b) for both $t$ and $\e{t}$ must be the same, so we obtain that $f=\e{f}$.
So we have $\{p,q_1\} = \{\e{p},\e{q_1}\}$.
Since $q_1$ and $\e{q_1}$ have in-degree 0, and $p$ and $\e{p}$ have in-degree at least $2$, we have $q_1 = \e{q_1}$ and $p = \e{p}$.
Then $r_i = \e{r_i}$ for all $i$, as these are precisely the states mapped to $p$.
We know that $0t = 0\e{t} = p$, $pt = p\e{t} = n-2$, $q_1 t = q_1 \e{t} = n-2$, and $r_i t = r_i \e{t} = n-1$ for all $i$.
Since the other transitions in $s$ are defined exactly as in $t$ and $\e{t}$, we have $\e{t} = t$.

\textbf{Subcase~3.4.3}: $v = 0$.\\
Let $s$ be the transformation illustrated in Fig.~\ref{fig:subcase3.4.3} and defined by
\begin{center}
  $0 s = n-2$, $p s = f$,\\
  $r_1 s = p$,\\
  $r_i s = f$ for $2\le i\le u$,\\
  $q s = q t$ for other states $q\in Q$.
\end{center}
\begin{figure}[ht]
\unitlength 10pt\small
\gasset{Nh=2.5,Nw=2.5,Nmr=1.25,ELdist=0.3,loopdiam=1.5}
\begin{center}\begin{picture}(28,14)(0,-2)
\node[Nframe=n](name)(0,9){\normalsize$t\colon$}
\node(0)(2,7){0}\imark(0)
\node(p)(14,7){$p$}
\node(n-1)(26,0){$n$-$1$}
\node(n-2)(26,7){$n$-$2$}\rmark(n-2)
\node(f)(8,0){$f$}
\node(r1)(17,3){$r_1$}
\node[Nframe=n](rdots)(20,3){$\dots$}
\node(ru)(23,3){$r_u$}
\drawedge(0,p){}
\drawedge(p,n-2){}
\drawedge(n-2,n-1){}
\drawloop[loopangle=270](n-1){}
\drawedge[curvedepth=-.5](r1,n-1){}
\drawedge[curvedepth=-.6,sxo=-.5,exo=1.5](rdots,n-1){}
\drawedge[curvedepth=0](ru,n-1){}
\drawloop(f){}
\end{picture}
\begin{picture}(28,12)(0,-1)
\node[Nframe=n](name)(0,9){\normalsize$s\colon$}
\node(0')(2,7){0}\imark(0')
\node(p')(14,7){$p$}
\node(n-1')(26,0){$n$-$1$}
\node(n-2')(26,7){$n$-$2$}\rmark(n-2')
\node(f')(8,0){$f$}
\node(r1')(17,3){$r_1$}
\node[Nframe=n](rdots')(20,3){$\dots$}
\node(ru')(23,3){$r_u$}
\drawedge[curvedepth=3,linecolor=red,dash={.5 .25}{.25}](0',n-2'){}
\drawedge(n-2',n-1'){}
\drawloop[loopangle=270](n-1'){}
\drawedge[linecolor=red,dash={.5 .25}{.25}](p',f'){}
\drawloop(f'){}
\drawedge[linecolor=red,dash={.5 .25}{.25}](r1',p'){}
\drawedge[curvedepth=2.7,sxo=.5,eyo=.5,linecolor=red,dash={.5 .25}{.25}](rdots',f'){}
\drawedge[curvedepth=3,linecolor=red,dash={.5 .25}{.25}](ru',f'){}
\end{picture}\end{center}
\caption{Subcase~3.4.3.}\label{fig:subcase3.4.3}
\end{figure}
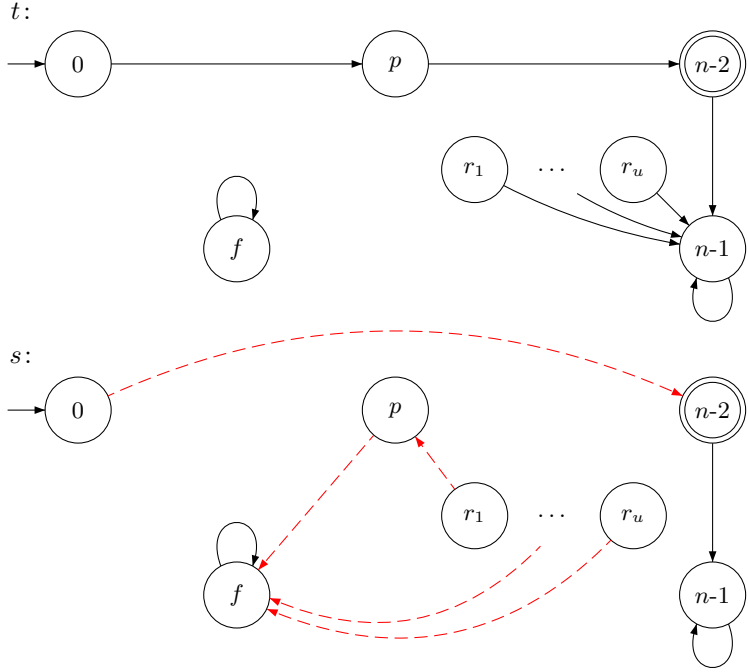

We observe the following properties:
\begin{enumerate}
\item[(a)] $\{p,f\}$ is a~colliding pair focused by $s$ to $f$.

\item[(b)] All states from $Q_M$ whose mapping is different in $s$ and $t$ belong to the same orbit of the fixed point $f$, which has in-degree $u+1 \ge 4$.

\item[(c)] $s$ does not contain any cycles.
\end{enumerate}

\textit{External injectivity}:
Since $s$ does not have any cycles, it is different from the transformations of Case~3.1, Subcase~3.2.2, Subcase~3.2.4, Case~3.3, and Subcase~3.4.1.

Let $\e{t}$ be a~transformation that fits in Subcase~3.2.1 and results in the same $s$.
By Lemma~\ref{lem:orbits}, the orbits from Properties~(b) for both $t$ and $\e{t}$ must be the same, so the subsubcase for $\e{t}$ must be~(i), and necessarily $f = \e{x}\e{t}^{\e{\ell}}$.
We know that the states $\e{p}$ and $\e{x}\e{t}^{\e{\ell}-1}$ are mapped by $s$ to $f$ and have in-degree at least $1$, because $\hat{\ell} \ge 2$ and in subsubcase~(i) there exists some $\e{q_i}$ mapped by $s$ to $\e{p}$.
On the other hand, all states mapped to $f$ (except $f$ itself) are $p$ and $r_2,\ldots,r_u$, where all the states $r_i$ have in-degree 0, which yields a~contradiction.

To distinguish $s$ from the transformations of Subcase~3.2.3 and of Subcase~3.4.2, observe that if they focus a~colliding pair to a~fixed point, then this fixed point have in-degree 3, but $s$ focuses a~colliding pair to the fixed point $f$ of in-degree at least 4.

\textit{Internal injectivity}:
Let $\e{t}$ be any transformation that fits in this subcase and results in the same $s$; we will show that $\e{t}=t$.
By Lemma~\ref{lem:orbits}, the orbits from Properties~(b) for both $t$ and $\e{t}$ must be the same, so we obtain that $f=\e{f}$.
We have $p = \e{p}$, as this is the unique state of in-degree 1 that is mapped to $f$.
Then $r_1 = \e{r_1}$ as this is the unique state mapped to $p$.
All states of in-degree 0 that mapped to $f$ are precisely $r_2,\ldots,r_u$; hence $r_i = \e{r_i}$ for all $i$.
We know that $0t = 0\e{t} = p$, $pt = p\e{t} = n-2$, and $r_i t = r_i \e{t} = n-1$ for all $i$.
Since the other transitions in $s$ are defined exactly as in $t$ and $\e{t}$, we have $\e{t} = t$.

\textbf{Case~3.5}: $k \ge 1$.\\
Let $q_1 < \ldots < q_v$ be all the states from $Q_M \setminus \{pt^k\}$ such that $q_i t = n-2$.
We split the case into the following three subcases covering all possibilities for $t$:

\textbf{Subcase~3.5.1}: $v = 0$ and $pt^k$ has in-degree $1$.\\
Let $s$ be the transformation illustrated in Fig.~\ref{fig:subcase3.5.1} and defined by
\begin{center}
  $0 s = n-2$, $p s = p$,\\
  $p t^i s = p t^{i-1}$ for $1\le i\le k$,\\
  $q s = qt$ for the other states $q\in Q$.
\end{center}
\begin{figure}[ht]
\unitlength 10pt\small
\gasset{Nh=2.5,Nw=2.5,Nmr=1.25,ELdist=0.3,loopdiam=1.5}
\begin{center}\begin{picture}(28,11)(0,-1)
\node[Nframe=n](name)(0,9){\normalsize$t\colon$}
\node(0)(2,7){0}\imark(0)
\node(p)(8,7){$p$}
\node[Nframe=n](pdots)(14,7){$\dots$}
\node(pt^k)(20,7){$pt^k$}
\node(n-2)(26,7){$n$-$2$}\rmark(n-2)
\node(n-1)(26,2){$n$-$1$}
\drawedge(0,p){}
\drawedge(p,pdots){}
\drawedge(pdots,pt^k){}
\drawedge(pt^k,n-2){}
\drawedge(n-2,n-1){}
\drawloop[loopangle=270](n-1){}
\end{picture}
\begin{picture}(28,10)(0,0)
\node[Nframe=n](name)(0,9){\normalsize$s\colon$}
\node(0')(2,7){0}\imark(0')
\node(p')(8,7){$p$}
\node[Nframe=n](pdots')(14,7){$\dots$}
\node(pt^k')(20,7){$pt^k$}
\node(n-2')(26,7){$n$-$2$}\rmark(n-2')
\node(n-1')(26,2){$n$-$1$}
\drawedge[curvedepth=3,linecolor=red,dash={.5 .25}{.25}](0',n-2'){}
\drawedge(n-2',n-1'){}
\drawloop[loopangle=270](n-1'){}
\drawloop[loopangle=270,linecolor=red,dash={.5 .25}{.25}](p'){}
\drawedge[linecolor=red,dash={.5 .25}{.25}](pdots',p'){}
\drawedge[linecolor=red,dash={.5 .25}{.25}](pt^k',pdots'){}
\end{picture}\end{center}
\caption{Subcase~3.5.1.}\label{fig:subcase3.5.1}
\end{figure}
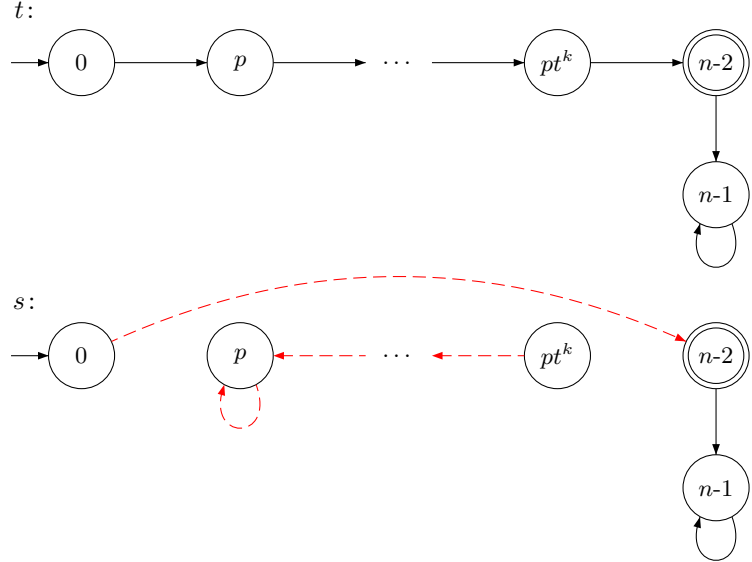

We observe the following properties:
\begin{enumerate}
\item[(a)] Pair $\{p,pt\}$ is a~colliding pair focused by $s$ to $p$.

\item[(b)] All states from $Q_M$ whose mapping is different in $s$ and $t$ belong to the orbit of the fixed point $p$, which has in-degree 2.
\end{enumerate}

\textit{External injectivity}:
Since the orbits from Properties~(b) for the transformations of Case~3.1, Subcase~3.2.2, Subcase~3.2.4, and Case~3.3 have cycles, and the orbit from~(b) of this subcase has a~fixed point, by Lemma~\ref{lem:orbits} $s$ is different from these transformations.
Similarly, the orbits from Properties~(b) for the transformations of Subcase~3.2.1, Subcase~3.2.3, Subcase~3.4.2, and Subcase~3.4.3 have a~fixed point of in-degree at least 3 or they are orbits of $n-1$, so by Lemma~\ref{lem:orbits} $s$ is different from these transformations.

Let $\e{t}$ be a~transformation that fits in Subcase~3.4.1 and results in the same $s$.
Since $\{\e{f},\e{p}\}$ is the only colliding pair that is focused to a~fixed point, it must be that $p = \e{f}$ and $pt = \e{p}$.
States $\e{q_i}$ form a~cycle in $s$, and since it is in a~different orbit from that from~(b), the cycle must be also present in $t$.
Hence, states $\e{q_i}$ collide with $pt = \e{p}$, and, in particular, $\{\e{q_1},\e{p}\}$ is a~colliding pair focused to $n-2$ by $\e{t}$, and so $t$ and $\e{t}$ cannot be both present in $T(\cD_n)$.

\textit{Internal injectivity}:
This follows exactly in the same way as in Case~2.2.

\textbf{Subcase~3.5.2}: $v = 0$ and $p t^k$ has in-degree at least $2$.\\
Let $y$ be the smallest state such that $yt = pt^k$ and $y \neq pt^{k-1}$.\\
Let $s$ be the transformation illustrated in Fig.~\ref{fig:subcase3.5.2} and defined by
\begin{center}
  $0 s = n-2$, $p s = y$,\\
  $y s = n-1$,\\
  $pt^i s = pt^{i-1}$ for $1\le i\le k$,\\
  $q s = q t$ for the other states $q\in Q$.
\end{center}
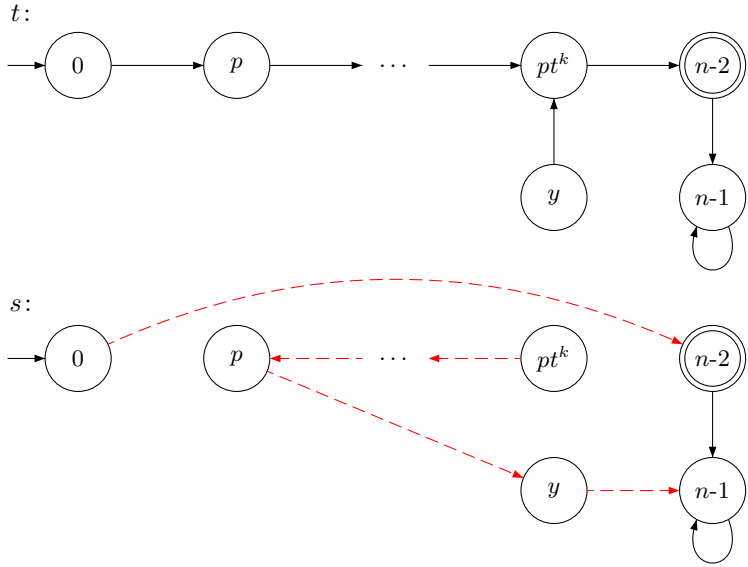
\begin{figure}[ht]
\unitlength 10pt\small
\gasset{Nh=2.5,Nw=2.5,Nmr=1.25,ELdist=0.3,loopdiam=1.5}
\begin{center}\begin{picture}(28,11)(0,-1)
\node[Nframe=n](name)(0,9){\normalsize$t\colon$}
\node(0)(2,7){0}\imark(0)
\node(p)(8,7){$p$}
\node[Nframe=n](pdots)(14,7){$\dots$}
\node(pt^k)(20,7){$pt^k$}
\node(y)(20,2){$y$}
\node(n-2)(26,7){$n$-$2$}\rmark(n-2)
\node(n-1)(26,2){$n$-$1$}
\drawedge(0,p){}
\drawedge(p,pdots){}
\drawedge(pdots,pt^k){}
\drawedge(pt^k,n-2){}
\drawedge(y,pt^k){}
\drawedge(n-2,n-1){}
\drawloop[loopangle=270](n-1){}
\end{picture}
\begin{picture}(28,11)(0,-1)
\node[Nframe=n](name)(0,9){\normalsize$s\colon$}
\node(0')(2,7){0}\imark(0')
\node(p')(8,7){$p$}
\node[Nframe=n](pdots')(14,7){$\dots$}
\node(pt^k')(20,7){$pt^k$}
\node(y')(20,2){$y$}
\node(n-2')(26,7){$n$-$2$}\rmark(n-2')
\node(n-1')(26,2){$n$-$1$}
\drawedge[curvedepth=3,linecolor=red,dash={.5 .25}{.25}](0',n-2'){}
\drawedge(n-2',n-1'){}
\drawloop[loopangle=270](n-1'){}
\drawedge[linecolor=red,dash={.5 .25}{.25}](y',n-1'){}
\drawedge[linecolor=red,dash={.5 .25}{.25}](p',y'){}
\drawedge[linecolor=red,dash={.5 .25}{.25}](pdots',p'){}
\drawedge[linecolor=red,dash={.5 .25}{.25}](pt^k',pdots'){}
\end{picture}\end{center}
\caption{Subcase~3.5.2.}\label{fig:subcase3.5.2}
\end{figure}

We observe the following properties:
\begin{enumerate}
\item[(a)] Pair $\{p,pt^k\}$ is a~colliding pair focused by $st$ to $pt^k$.

\item[(b)] All states from $Q_M$ whose mapping is different in $t$ and $s$ belong to the tree of $y$ in $s$, where $y$ is mapped to $n-1$.
\end{enumerate}

\textit{External injectivity}:
Since the orbits from Properties~(b) for the transformations of Case~3.1, Subcase~3.2.2, Subcase~3.2.4, and Case~3.3 have cycles, and the orbit from~(b) of this subcase is the orbit of $n-1$, by Lemma~\ref{lem:orbits} $s$ is different from these transformations.
Similarly, the orbits from Properties~(b) for the transformations of Subcase~3.2.1~(i), Subcase~3.2.3~(i), Subcase~3.4.2, Subcase~3.4.3, and Subcase~3.5.1 have a~fixed point from $Q_M$, so by Lemma~\ref{lem:orbits} $s$ is different from these transformations.
Since the transformations of Subcase~3.4.1 focus a~colliding pair to a~fixed point, they are also different from $s$.

Let $\e{t}$ be a~transformation from Subcase~3.2.1~(ii) that results in the same $s$.
By Lemma~\ref{lem:orbits}, the trees from Properties~(b) for both $t$ and $\e{t}$ must be the same, and so it must be that $y = \e{x}\e{t}^{\e{\ell}}$.
First observe that $p \neq \e{p}$, because otherwise $p$ and $pt=\e{q_i}$ for some $i$ would form a~colliding pair because of $t$, which is focused by $\e{t}$ to $n-2$.
So $p$ must be another state mapped by $s$ to $y = \e{x}\e{t}^{\e{\ell}}$, and so also by $\e{t}$.
It follows that all states $p,pt,\ldots,pt^k$ are mapped by $\e{t}$ in the same way as by $s$.
But then $p \e{t} t = p s t = pt^k$ and $(pt^k)\e{t} t = (pt^k)s t = pt^k$, so the colliding pair $\{p,pt^k\}$ is focused by $\e{t}t$, which yields a~contradiction.

Let $\e{t}$ be a~transformation from Subcase~3.2.3~(ii) that results in the same $s$.
By Lemma~\ref{lem:orbits}, the trees from Properties~(b) for both $t$ and $\e{t}$ must be the same, and so $\e{x} = y$.
But $\e{p}$ and $\e{x}\e{t}$ are the only states mapped to $y$ in $s$, and they both have in-degree 0, whereas $p$ is also mapped to $y$ in $s$ and has in-degree 1, which yields a~contradiction.

\textit{Internal injectivity}:
Let $\e{t}$ be any transformation that fits in this subcase and results in the same $s$; we will show that $\e{t}=t$.
By Lemma~\ref{lem:orbits}, the trees from Property~(b) must be the same, so $y = \e{y}$.
Since in $s$ all the states besides $p$ that are mapped to $y$ are also mapped to $y$ in $t$, it follows that $p \e{t} = y$ and $\e{p} t = y$.
Note that for $i$, $0 \le i \le \min\{k,\e{k}\}$, the distance in $s$ from $p t^i$ and from $\e{p} \e{t}^i$ to $y$ is $i+1$.
Hence, if $i \neq j$ then $p t^i \neq \e{p} \e{t}^j$.

\textbf{Subcase~3.5.3}: $v \ge 1$.\\
We define $c$ to be the largest distance in $t$ from a~state $q \in Q$ to some state $q_i$, that is,
$$c = \max\{d \in \mathbb{N} \mid \exists q\in Q\text{ such that }q t^d = q_i\text{ for some }i\}.$$
Notice that $c \le k$, because otherwise, when $qt^{k+1}=q_i$ for some $q$, the state $qt$ would be colliding with $p$ and the pair $\{qt,p\}$ would be focused by $t^{k+1}$ to $n-2$ (see also the observation at the beginning of Supercase~3).
Define
$$x = \min\{q\in Q \mid q t^c = q_i\text{ for some }i\},$$
that is, $x$ is the smallest state w.r.t. the ordering of the states among the furthest states from some $q_i$.
Let $q_m$ be that state $q_i$, which is the first state $q_i$ in the path from $x$.
Notice that if all $q_i$ have in-degree 0, then $c = 0$ and $x = q_m = q_1$.

Let $s$ be the transformation illustrated in Fig.~\ref{fig:subcase3.5.3} and defined by
\begin{center}
  $0 s = n-2$, $p s = x$,\\
  $p t^i s = p t^{i-1}$ for $1\le i \le k$,\\
  $q_i s = q_{i+1}$ for $1\le i\le v-1$,\\
  $q_v s = q_1$,\\
  $q s = q t$, for the other states $q\in Q$.
\end{center}
\begin{figure}[ht]
\unitlength 10pt\small
\gasset{Nh=2.5,Nw=2.5,Nmr=1.25,ELdist=0.3,loopdiam=1.5}
\begin{center}\begin{picture}(28,15)(0,-1)
\node[Nframe=n](name)(0,13){\normalsize$t\colon$}
\node(0)(2,11){0}\imark(0)
\node(p)(8,11){$p$}
\node[Nframe=n](pdots)(14,11){$\dots$}
\node(pt^k)(20,11){$pt^k$}
\node(q1)(14,6){$q_1$}
\node[Nframe=n](qdots1)(16.5,6){$\dots$}
\node(qi)(19,6){$q_m$}
\node[Nframe=n](qdots2)(21.5,6){$\dots$}
\node(qv)(24,6){$q_v$}
\node(x)(8,2){$x$}
\node[Nframe=n](xdots)(14,2){$\dots$}
\node(n-2)(26,11){$n$-$2$}\rmark(n-2)
\node(n-1)(26,2){$n$-$1$}
\drawedge(0,p){}
\drawedge(p,pdots){}
\drawedge(pdots,pt^k){}
\drawedge(pt^k,n-2){}
\drawedge(n-2,n-1){}
\drawloop[loopangle=270](n-1){}
\drawedge(x,xdots){}
\drawedge[curvedepth=-2,exo=.2](xdots,qi){}
\drawedge[curvedepth=.6](q1,n-2){}
\drawedge[curvedepth=.3,sxo=-1](qdots1,n-2){}
\drawedge[curvedepth=.2,sxo=-.5](qi,n-2){}
\drawedge[curvedepth=.1](qdots2,n-2){}
\drawedge[curvedepth=0](qv,n-2){}
\end{picture}
\begin{picture}(28,14)(0,0)
\node[Nframe=n](name)(0,14){\normalsize$s\colon$}
\node(0')(2,11){0}\imark(0')
\node(p')(8,11){$p$}
\node[Nframe=n](pdots')(14,11){$\dots$}
\node(pt^k')(20,11){$pt^k$}
\node(q1')(14,6){$q_1$}
\node[Nframe=n,Nh=2,Nw=2,Nmr=1](qdots1')(16.5,6){$\dots$}
\node(qi')(19,6){$q_m$}
\node[Nframe=n,Nh=2,Nw=2,Nmr=1](qdots2')(21.5,6){$\dots$}
\node(qv')(24,6){$q_v$}
\node[Nframe=n](xdots')(14,2){$\dots$}
\node(x')(8,2){$x$}
\node(n-2')(26,11){$n$-$2$}\rmark(n-2')
\node(n-1')(26,2){$n$-$1$}
\drawedge[curvedepth=3,linecolor=red,dash={.5 .25}{.25}](0',n-2'){}
\drawedge(n-2',n-1'){}
\drawloop[loopangle=270](n-1'){}
\drawedge[linecolor=red,dash={.5 .25}{.25}](pdots',p'){}
\drawedge[linecolor=red,dash={.5 .25}{.25}](pt^k',pdots'){}
\drawedge[linecolor=red,dash={.5 .25}{.25}](p',x'){}
\drawedge(x',xdots'){}
\drawedge[curvedepth=-2,exo=.2](xdots',qi'){}
\drawedge[curvedepth=-1.5,linecolor=red,dash={.5 .25}{.25}](q1',qdots1'){}
\drawedge[curvedepth=-1.5,linecolor=red,dash={.5 .25}{.25}](qdots1',qi'){}
\drawedge[curvedepth=-1.5,linecolor=red,dash={.5 .25}{.25}](qi',qdots2'){}
\drawedge[curvedepth=-1.5,linecolor=red,dash={.5 .25}{.25}](qdots2',qv'){}
\drawedge[curvedepth=-2,linecolor=red,dash={.5 .25}{.25}](qv',q1'){}
\end{picture}\end{center}
\caption{Subcase~3.5.3.}\label{fig:subcase3.5.3}
\end{figure}
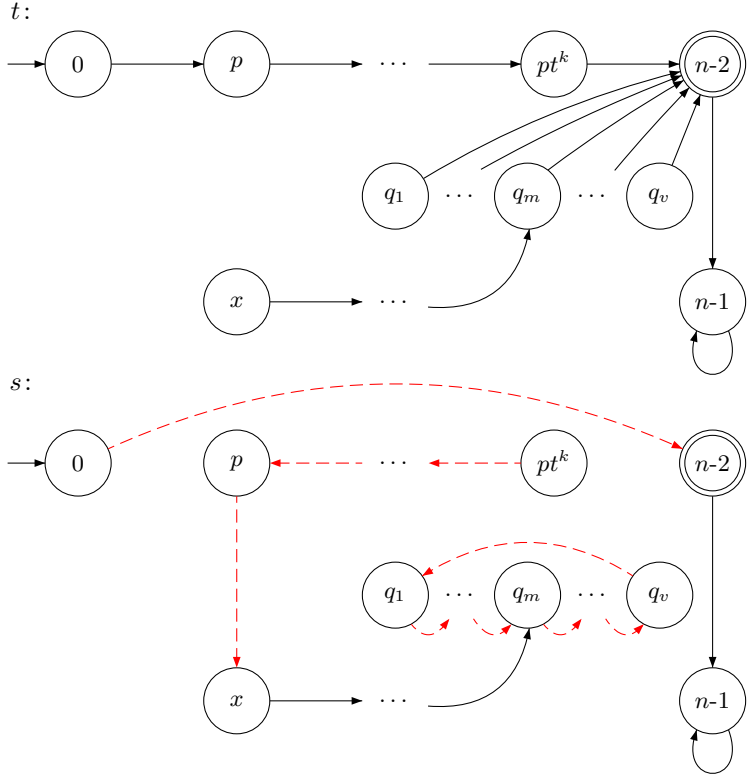

We observe the following properties:
\begin{enumerate}
\item[(a)] $\{p, pt\}$ is a~colliding pair focused by $s^{c+2}t$ to $n-2$.

\item[(b)] All states from $Q_M$ whose mapping is different in $s$ and $t$ belong to the same orbit of a~cycle (if $v \ge 2$) or a~fixed point (if $v=1$).

\item[(d)] Every longest path in $s$ from some state not in a~cycle to the first reachable $q_i$ contain both $p$ and $x$, and this $q_i$ is $q_m$.

\textit{Proof}: If such a~path contains $x$, then it does not contain $p,\ldots,pt^k$, and so would exist also in $t$.
But then, by the choice of $x$, its length could be at most $c$, whereas the path from $pt^k$ to $q_m$ is of length $k+c$.
Thus, every such a~path contain $x$ and so $p$, since $x$ has in-degree 1, and ends in $q_m$.
\end{enumerate}

\textit{External injectivity}:
Let $\e{t}$ be a~transformation that fits in Case~3.1 and results in the same $s$.
By Lemma~\ref{lem:orbits}, the orbits from Properties~(b) for both $t$ and $\e{t}$ are the same.
Let $y$ be the state mapped to $q_m$ in the path in $s$ from $p$ to $q_m$.
If $\e{p} \neq y$, then by the construction of $s$ in Case~3.1, all states in the tree of $y$ are mapped in $s$ in the same way as in $\e{t}$.
Hence, $\{p,pt\}$ is focused by $\e{t}^{c+2}t$ to $n-2$, which yields a~contradiction.
If $\e{p} = y$, then $p=\e{p}$, since to $\e{p}$ only the states $\e{q_i}$ are mapped, which have in-degree 0, and $p$ has in-degree 1.
Hence $k=1$, $p=y=\e{p}$, and $pt = \e{q_i}$ for some $i$. However, $\{p,pt\} = \{\e{p},\e{q_i}\}$ is a~colliding pair because of $t$ that is focused by $\e{t}$ to $n-2$, which yields a~contradiction.

Let $\e{t}$ be a~transformation that fits in Subcase~3.2.1 and results in the same $s$.
By Lemma~\ref{lem:orbits}, the orbits from Properties~(b) for both $t$ and $\e{t}$ are the same, so necessarily the subsubcase for $\e{t}$ must be~(i) and $v = 1$.
Since $\e{x}\e{t}^{\e{\ell}}$ has in-degree $\ge 3$ in $s$, it cannot be $x$, because $x$ has in-degree at most 2 (only $p$ is mapped to $x$ and, in the case $x=q_m=q_1$, additionally a~state $q_i$).
Thus $\e{p}$ and $\e{x}\e{t}^{\e{\ell}-1}$ are mapped in $s$ in the same way as in $t$.
But $\{\e{p},\e{x}\e{t}^{\e{\ell}-1}\}$ is a~colliding pair because of $\e{t}$, which is focused by $t$ to $\e{x}\e{t}^{\e{\ell}}$, which yields a~contradiction.

Let $\e{t}$ be a~transformation that fits in Subcase~3.2.2 and results in the same $s$.
By Lemma~\ref{lem:orbits}, the orbits from Properties~(b) for both $t$ and $\e{t}$ are the same, so necessarily $v=3$ and $\{\e{x},\e{y},\e{x}\e{t}\} = \{q_1,q_2,q_3\}$.
Observe that among the states mapped by $s$ to a~state in the cycle $(\e{x},\e{y},\e{x}\e{t})$, only $\e{p}$ can have in-degree larger than 0.
It follows that $\e{p} = p$, and we obtain a~contradiction exactly as for Case~3.1.

Let $\e{t}$ be a~transformation that fits in Subcase~3.2.3 and results in the same $s$.
By Lemma~\ref{lem:orbits}, the orbits from Properties~(b) for both $t$ and $\e{t}$ are the same, so necessarily the subsubcase for $\e{t}$ must be~(i) and $v=1$.
Since $\e{x}\e{t}$ has in-degree 1 in $\e{t}$, it has in-degree 0 in $s$, so it cannot be $p$.
Therefore $p = \e{p}$, but then we obtain a~contradiction exactly as for Case~3.1.

Let $\e{t}$ be a~transformation that fits in Subcase~3.2.4 and results in the same $s$.
By Lemma~\ref{lem:orbits}, the orbits from Properties~(b) for both $t$ and $\e{t}$ are the same, so necessarily $(\e{p},\e{x}\e{t}^{\e{\ell}},\ldots,\e{x})$ is the cycle formed by all states $q_i$.
But $\{\e{p},\e{x}\e{t}^{\e{\ell}}\}$ is a~colliding pair because of $\e{t}$, which is focused by $t$ to $n-2$; this yields a~contradiction.

Let $\e{t}$ be a~transformation that fits in Case~3.3 and results in the same $s$.
By Lemma~\ref{lem:orbits}, the orbits from Properties~(b) for both $t$ and $\e{t}$ are the same, so necessarily $v=2$ and $(f_1,f_2) = (q_1,q_2)$.
Then $p = \e{p}$, and again we obtain a~contradiction exactly as for Case~3.1.

Let $\e{t}$ be a~transformation that fits in Subcase~3.4.1 and results in the same $s$.
In $s$ there is exactly one orbit of a~fixed point from $Q_M$ and exactly one orbit of a~cycle.
But neither of them cannot be the orbit from~(b) of this subcase, since $\e{p}$ and states $\e{r_i}$ have in-degree 0 in $s$ so they cannot be $p$; this yields a~contradiction.

Let $\e{t}$ be a~transformation that fits in either Subcase~3.4.2 or Subcase~3.4.3 and results in the same $s$.
By Lemma~\ref{lem:orbits}, the orbits from Properties~(b) for both $t$ and $\e{t}$ are the same, so necessarily $v=1$ and $\e{f} = q_1 = q_m$.
Then $p = \e{p}$, as $\e{p}$ is the only state with non-zero in-degree in $s$ that is mapped to $\e{f}$.
So also $x = \e{f}$.
But there is another state mapped by $s$ to $p$ ($\e{q_1}$ or $\e{r_2}$, depending on the subcase), and it is mapped to $x$ also by $t$.
However, this contradicts that $x$ has in-degree 0 in $t$.

Let $\e{t}$ be a~transformation that fits in Subcase~3.5.1 and results in the same $s$.
By Lemma~\ref{lem:orbits}, the orbits from Properties~(b) for both $t$ and $\e{t}$ are the same, so necessarily $v=1$ and $\e{p} = q_1 = q_m$.
Consider the following path $s$, which contains all the states from $Q_M$ that are mapped differently in $t$ and $s$:
$$p t^k \stackrel{s}{\rightarrow} p t^{k-1} \stackrel{s}{\rightarrow} \dots \stackrel{s}{\rightarrow} p \stackrel{s}{\rightarrow} x \stackrel{s}{\rightarrow} \dots \stackrel{s}{\rightarrow} q_1.$$
Consider the second path in $s$, which contains all the states from $Q_M$ that are mapped differently in $\e{t}$ and $s$:
$$q_1 \e{t}^{\e{k}} \stackrel{s}{\rightarrow} q_1 \e{t}^{\e{k}-1} \stackrel{s}{\rightarrow} \dots \stackrel{s}{\rightarrow} q_1.$$
Let $y$ be the first common state in these paths; $y$ exists since both paths end up in $q_1$.
Note that $\e{t}$ reverses the second path.
We consider all possibilities for $y$, depending on where it occurs in the first chain:
\begin{itemize}
\item $y = p t^k$.
Then $y = q_1 \e{t}^j$ for some $j \ge 1$, so $\{y,\e{p}\} = \{pt^k,q_1\}$ is a~colliding pair because of $\e{t}$, which is focused by $t$ to $n-2$.
\item $y = p t^h$ for $1 \le h \le k-1$.
Then $(p t^h) s = p t^{h-1}$ so $(p t^{h-1})\e{t} = p t^h$, since $pt^{h-1}$ is in the second path and $\e{t}$ reverses it.
Also, $(p t^{h+1})\e{t} = (p t^{h+1})s = p t^h$, since $p t^{h+1}$ does not belong to the second path.
But then $\{p t^{h-1},p t^{h+1}\}$ is a~colliding pair because of $t$, which is focused by $\e{t}$ to $pt^h$.
\item $y = p$.
Since in $s$ only state $pt$ is mapped to $p$ and $p\e{t} \neq pt$, it must be that $p\e{t} = n-2$, as otherwise $(p\e{t})s = p$.
Therefore $p = q_1 \e{t}^{\e{k}}$.
But $q_1 \e{t}^{\e{k}}$ has in-degree 1 in $\e{t}$ from the conditions of Subcase~3.5.1, so it has in-degree 0 in $s$, which yields a~contradiction with in-degree 1 of $p$ in $s$.
\item $y$ is a~state in the path in $s$ from $x$ to $q_1$.
Then $q_1 \e{t}^j = y$ for some $j \le c$.
Remind that $c \le k$, so $j \le k$.
Since $y \notin \{p,pt,\ldots,pt^k\}$, the distance in $s$ from $pt^k$ to $y$ is at least $k+1 \ge j+1$.
It follows that there is a~state $z$ from the first chain such that $z s^{j+1} = z \e{t}^{j+1} = y$.
However, we also know that $0\e{t}^{j+1} = q_1 \e{t}^j = y$, hence $\e{t}$ cannot be in $T(\cD_n)$.
\end{itemize}
We obtained a~contradiction in every case, so $t$ and $\e{t}$ cannot be both in $T(\cD_n)$.

Let $\e{t}$ be a~transformation that fits in Subcase~3.5.2 and results in the same $s$.
However, by Lemma~\ref{lem:orbits}, the orbits from Properties~(b) for both $t$ and $\e{t}$ must be the same, but for $\e{t}$ this is an~orbit of $n-1$.

\textit{Internal injectivity}:
Let $\e{t}$ be any transformation that fits in this subcase and results in the same $s$; we will show that $\e{t}=t$.
By Lemma~\ref{lem:orbits}, the orbits from~(b) must be the same for both $t$ and $\e{t}$, hence $v = \e{v}$ and the sets of $q_i$ states are the same.

By~(d), both $p$ and $\e{p}$ are in every longest path to the first reachable $q_i$, so $q_m = \e{q_m}$.
Without loss of generality, state $\e{p}$ occurs not later than $p$, that is, we have $\e{p} s^j = p$ for some $j \ge 0$.
Since the path from $\e{x}$ to $q_m$ is the same in both $s$ and $\e{t}$, we have $\e{x} \e{t}^i = \e{x} s^i$ for all $i \ge 0$.

Consider the following path $P$ in $s$:
$$P = pt^k \stackrel{s}{\rightarrow} \dots \stackrel{s}{\rightarrow} p \stackrel{s}{\rightarrow} x \stackrel{s}{\rightarrow} \dots \stackrel{s}{\rightarrow} q_m.$$

First suppose that $P$ does not contain $\e{p}$.
Then also no state $\e{p} \e{t}^i$ for $1 \le i \le \e{k}$ would be in this path:
let $\e{p} \e{t}^i$ be such the state with the smallest $i$; then $(\e{p} \e{t}^i) s = \e{p} \e{t}^{i-1}$ would also be in this path, which is a~contradiction.
Hence, by the construction of $s$, this path is also present in $\e{t}$.
By the choice of $\e{x}$, the distance in $s$ from $\e{x}$ to $q_m$ is not smaller than the length of this path.
So we have $\e{c} \ge k+1+c$, which yields $\e{k} > k$ (because $\e{k} \ge \e{c}$).
Now observe that since in $s$ state $p$ is reachable from $\e{p}$, we have the following path in $s$:
$$\e{p} \e{t}^{\e{k}} \stackrel{s}{\rightarrow} \dots \stackrel{s}{\rightarrow} \e{p} \stackrel{s}{\rightarrow} \dots \stackrel{s}{\rightarrow} pt^i,$$
where $i$ is the smallest possible.
Then, by the construction of $s$, we have the following path in $t$:
$$\e{p} \e{t}^{\e{k}} \stackrel{t}{\rightarrow} \dots \stackrel{t}{\rightarrow} \e{p} \stackrel{t}{\rightarrow} \dots \stackrel{t}{\rightarrow} pt^i \stackrel{t}{\rightarrow} \dots \stackrel{t}{\rightarrow} pt^k.$$
This path has length at least $\e{k}+1 > k+1$.
Hence, there exists a~state $y \neq p$ in this path such that $y t^{k+1} = p t^k$.
This means that $\{p,yt\}$ is a~colliding pair because of $t$, which is focused by $t^k$ to $p t^k$.

There remains the case where $P$ contain $\e{p}$.
Since $\e{p}$ must occur before $p$ in $P$, we have $p t^h = \e{p}$ for some $h \ge 0$.

We claim that $p t^{h+i} = \e{p} \e{t}^{i}$ for all $i \ge 0$, which also implies $k = h+\e{k}$.
We use induction on $i$:
This holds for $i = 0$, and also for $i=1$, because $\e{k} \ge 1$ and the in-degree of $\e{p}$ is 1 in $s$.
For $i \ge 2$ assume that $p t^{h+j} = \e{p} \e{t}^{j}$ for all $j = 0,\ldots,i-1$.
Suppose for a~contradiction that $p t^{h+i} \neq \e{p} \e{t}^i$.
If $\e{p} \e{t}^i \neq n-2$, then $(\e{p} \e{t}^i)t = (\e{p} \e{t}^i)s = \e{p} \e{t}^{i-1} = p t^{h+i-1}$,
because in $s$ among the states mapped to $p t^{h+i-1}$, only $p t^{h+i}$ is mapped differently than in $t$.
Then, however, $\{\e{p} \e{t}^i,\e{p} \e{t}^{i-2}\}$ is a~colliding pair because of $\e{t}$ that is focused by $t$ to $\e{p} \e{t}^{i-1}$.
If $\e{p} \e{t}^i = n-2$, then, dually, $(p t^{h+i}) \e{t} = (p t^{h+i}) s = p t^{h+i-1} = \e{p} \e{t}^{i-1}$,
because in $s$, among the states mapped to $\e{p} \e{t}^{i-1}$, only $\e{p} \e{t}^i$ is mapped differently than in $\e{t}$.
Then, however, $\{p t^{h+i},p t^{h+i-2}\}$ is a~colliding pair because of $t$ that is focused by $\e{t}$ to $p t^{h+i-1}$.
Hence, the claim follows.

Suppose that $h \ge 1$.
Since the path in $s$ from $p$ to $q_m$ occurs also in $\e{t}$ and $t$, and is of length $c+1$, we have $p \e{t}^{c+2} = n-2$.
Note that $\e{k} \ge \e{c} = c+h \ge c+1$.
So there exists a~state $\e{p} \e{t}^{\e{k}-c-1} = p t^{h+\e{k}-c-1} \neq p$.
But this state collides with $p$ because of $t$, and the pair $\{p,\e{p} \e{t}^{\e{k}-c-1}\}$ is focused by $\e{t}^{c+2}$ to $n-2$.

Finally, if $h = 0$, then $0 t^i = 0 \e{t}^i$ for all $i \ge 0$, and $q_i t = q_i \e{t} = n-2$ for all $i$.
Since the other transitions in $s$ are defined exactly as in $t$ and $\e{t}$, we have $t = \e{t}$.
\end{proof}

\section{Uniqueness of maximal semigroups}

Here we show that $\Wbf(n)$ for $n \ge 6$ and $\Vbf(n)$ for $n \in \{3,4,5\}$ (whereas $\Wbf(n)=\Vbf(n)$ for $n \in \{3,4\}$) have not only the maximal sizes, but are also the unique largest semigroups up to renaming the states in a~minimal DFA $\cD_n = (Q,\Sigma,\delta,0,\{n-2\})$ of a~bifix-free language.

\begin{theorem}\label{thm:uniqueness}
If $n \ge 8$, and the transition semigroup $T(\cD_n)$ of a~minimal DFA $\cD_n$ of a~bifix-free language has at least one colliding pair, then
$$|T(\cD_n)| < |\Wbf(n)| = (n-1)^{n-3} + (n-2)^{n-3} + (n-3)2^{n-3}.$$
\end{theorem}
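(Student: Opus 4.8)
The plan is to reuse, rather than rebuild, the machinery of Theorem~\ref{thm:bifix-free_upper_bound}. That theorem supplies an injective map $\varphi\colon T(n)\to\Wbf(n)$, so $|T(n)|=|\varphi(T(n))|$. Consequently the strict inequality is equivalent to $\varphi$ failing to be surjective, and it suffices to exhibit a \emph{single} transformation of $\Wbf(n)$ lying outside the image $\varphi(T(n))$: that forces $|T(n)|\le|\Wbf(n)|-1$. The entire argument thus reduces to producing one explicit witness that $\varphi$ misses, exploiting the hypothesis that $T(n)$ possesses a colliding pair.

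First I would fix a colliding pair $\{p,q\}\subseteq Q_M$, which exists by hypothesis. I would then define $s_0$ to be a transformation of type~(3) from Remark~\ref{rem:Wbf_transformations}: set $0s_0$ equal to some fixed state of $Q_M$, map both $p$ and $q$ to $n-2$, map every remaining state of $Q_M$ to $n-1$, and put $(n-2)s_0=(n-1)s_0=n-1$. By Remark~\ref{rem:Wbf_transformations}(3) this $s_0$ lies in $\Wbf(n)$, since $0s_0\in Q_M$ and $Q_M$ is mapped into $\{n-2,n-1\}$; and by construction $s_0$ focuses the colliding pair $\{p,q\}$ to $n-2$.

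It remains to check that $s_0\notin\varphi(T(n))$, which I would organize along the supercases of $\varphi$. Because $\{p,q\}$ is colliding in $T(n)$, the discussion following Lemma~\ref{lem:bifix-free} shows that no transformation of $T(n)$ focuses it; hence $s_0\notin T(n)$. In particular $s_0$ is not a Supercase~1 image, whose image is exactly $T(n)\cap\Wbf(n)$. For the remaining supercases I would invoke the invariants declared at their outset: every $s$ assigned in Supercase~2 has $0s=n-1$, and every $s$ assigned in Supercase~3 has $0s=n-2$. Since $0s_0\in Q_M$ equals neither $n-1$ nor $n-2$, our $s_0$ cannot coincide with any image from Supercases~2 or~3. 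Therefore $s_0\in\Wbf(n)\setminus\varphi(T(n))$, and $\varphi$ is not surjective, which yields the claim.

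The main obstacle is conceptual rather than computational: it is isolating the right invariant of $\varphi$ that a cheaply constructed witness can violate. The cleanest choice is the value of $0s$, which is pinned to $\{n-1,n-2\}$ throughout the non-trivial supercases, whereas a type-(3) focusing transformation keeps $0s_0\in Q_M$. Confirming that $0s$ is indeed fixed uniformly across all $23$ (sub)cases is the one point requiring care, but this is precisely what is asserted when each of Supercase~2 and Supercase~3 is introduced, so no case-by-case re-examination beyond reading off those stated invariants is needed.
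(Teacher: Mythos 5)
Your proof is correct, but it uses a genuinely different (and simpler) witness than the paper. The paper also reduces the claim to exhibiting one element of $\Wbf(n)\setminus\varphi(T(n))$, but it chooses a type-(1) transformation with $0s=n-1$ that focuses the colliding pair and has a carefully engineered orbit structure (a fixed point of in-degree~3 and a $2$-cycle with both states of in-degree~1); since $0s=n-1$ is exactly the invariant shared by all Supercase~2 images, the paper must then discriminate its witness from each of the ten Supercase~2 (sub)cases individually by comparing cycles, fixed points and in-degrees. You instead pick a type-(3) transformation with $0s_0\in Q_M$, which sidesteps that entire analysis: the blanket invariants $0s=n-1$ (Supercase~2) and $0s=n-2$ (Supercase~3), stated once at the head of each supercase and satisfied in every one of their subcases, immediately exclude $s_0$ from those images, and exclusion from Supercase~1 follows because $s_0$ focuses the colliding pair and hence cannot lie in $T(n)$. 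The only points needing verification — that your $s_0$ really belongs to $\Wbf(n)$ (it does: $0s_0\in Q_M$ and $Q_M s_0\subseteq\{n-2,n-1\}$, matching type~(3) of Remark~\ref{rem:Wbf_transformations} and the membership conditions for $\Bbf(n)$), and that the two supercase invariants hold uniformly across all 23 (sub)cases (they do) — are both routine. Your argument buys a substantially shorter proof at no cost in generality; the paper's choice of witness does not appear to be forced by anything else in the development.
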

\begin{proof}
Assume that there is a~colliding pair $\{p_1,p_2\}$ with $p_1,p_2 \in Q_M$.
Since $n \ge 8$, there must be at least three states $r_1,r_2,r_3 \in Q_M \setminus \{p,q\}$.
Let $s \in \Wbf(n)$ be the transformation illustrated in Fig.~\ref{fig:uniqueness} and defined by:
\begin{center}
$0s = n-1$, $p_1 s = p_2$, $r_1 s = p_2$, $r_2 s = r_3$, $r_3 s = r_2$,\\
$q s = q$, for the other states $q \in Q$.
\end{center}
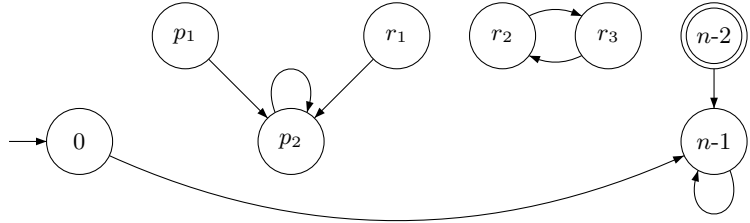
\begin{figure}[ht]
\unitlength 10pt\small
\gasset{Nh=2.5,Nw=2.5,Nmr=1.25,ELdist=0.3,loopdiam=1.5}
\begin{center}\begin{picture}(28,8)(0,-2)
\node(0)(2,0){0}\imark(0)
\node(p1)(6,4){$p_1$}
\node(p2)(10,0){$p_2$}
\node(r1)(14,4){$r_1$}
\node(r2)(18,4){$r_2$}
\node(r3)(22,4){$r_3$}
\node(n-1)(26,0){$n$-$1$}
\node(n-2)(26,4){$n$-$2$}\rmark(n-2)
\drawedge[curvedepth=-3](0,n-1){}
\drawedge(n-2,n-1){}
\drawloop[loopangle=270](n-1){}
\drawedge(p1,p2){}
\drawloop(p2){}
\drawedge(r1,p2){}
\drawedge[curvedepth=1](r2,r3){}
\drawedge[curvedepth=1](r3,r2){}
\end{picture}\end{center}
\caption{The transformation $s$ in the proof of Theorem~\ref{thm:uniqueness}.}\label{fig:uniqueness}
\end{figure}

Let $\varphi$ be the injective function from the proof of Theorem~\ref{thm:bifix-free_upper_bound}.
We will show that $s$ does not belong to $\varphi(T(\cD_n))$.

Since $\{p_1,p_2\}$ is focused by $s$ to $p_2$, $s$ is different from the transformations of Supercase~1.
Since $0s = n-1$, it is also different from the transformations of Supercase~3.

To see that it is different from all transformations of Supercase~2, notice that only the transformations of Case~2.1, Case~2.3, Subcase~2.4.2, Subcase~2.5.1, and Subcase 2.5.2 have a~cycle.
The transformations of Case~2.1, Case~2.3, and Subcase~2.4.2 have a~cycle with a~state with in-degree at least 2, whereas the single cycle $(r_2,r_3)$ in $s$ has both states of in-degree 1.
In the transformations of Subcase~2.5.1 and Subcase~2.5.2, there is only one fixed point from $Q_M$, and it has in-degree 2, whereas the single fixed point $p_2$ in $s$ has in-degree 3.

Thus, since $\varphi$ is injective and $\varphi(T(\cD_n)) \subseteq \Wbf(n)$, $s \in \Wbf(n)$ but $s \notin \varphi(T(\cD_n))$, it follows that $\varphi(T(\cD_n)) \subsetneq \Wbf(n)$, so $|T(\cD_n)| < |\Wbf(n)|$.
\end{proof}

\begin{corollary}
For $n \ge 8$, the transition semigroup $\Wbf(n)$ is the unique largest transition semigroup of a~minimal DFA of a~bifix-free language.
\end{corollary}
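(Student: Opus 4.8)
The plan is to read off the corollary as an essentially immediate consequence of the three preceding results, so the proof will be short. First I would verify that $\Wbf(n)$ genuinely is a largest transition semigroup: by the proposition establishing the witness $\cW(n)$, the semigroup $\Wbf(n)$ is realized as the transition semigroup of a minimal DFA of a bifix-free language, and its cardinality equals $(n-1)^{n-3}+(n-2)^{n-3}+(n-3)2^{n-3}$, which by Theorem~\ref{thm:bifix-free_upper_bound} is exactly the maximal syntactic complexity in the class for $n \ge 8$. Hence no transition semigroup of such a DFA can be strictly larger, and $\Wbf(n)$ attains the maximum.

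Next I would establish uniqueness. Let $T(n)$ be the transition semigroup of an arbitrary minimal DFA $\cD_n=(Q,\Sigma,\delta,0,\{n-2\})$ of a bifix-free language with $|T(n)|=|\Wbf(n)|$; by Lemma~\ref{lem:bifix-free} we may assume the standard normalization in which $0$, $n-2$, and $n-1$ are the initial, final, and empty states, so that $T(n)$ and $\Wbf(n)$ live on the same state set. The key step is the contrapositive of Theorem~\ref{thm:uniqueness}: because the equality $|T(n)|=|\Wbf(n)|$ holds, $T(n)$ cannot contain any colliding pair, since any colliding pair would force the strict inequality $|T(n)|<|\Wbf(n)|$.

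Finally, having no colliding pair, Proposition~\ref{pro:Wbf_unique} applies directly: $\Wbf(n)$ is the unique maximal transition semigroup of a minimal DFA of a bifix-free language without colliding pairs, so $T(n)\subseteq\Wbf(n)$. Combining this inclusion with the cardinality equality $|T(n)|=|\Wbf(n)|$ yields $T(n)=\Wbf(n)$, which is uniqueness up to the renaming of the middle states $Q_M$. There is no genuine obstacle remaining at this stage: all the difficulty has already been absorbed into the construction of the injective map $\varphi$ in Theorem~\ref{thm:bifix-free_upper_bound} and into the explicit separating transformation built in Theorem~\ref{thm:uniqueness}. The only points requiring care are to invoke the contrapositive correctly and to confirm that the normalization of Lemma~\ref{lem:bifix-free} is precisely what licenses comparing $T(n)$ with $\Wbf(n)$ on a common copy of $Q$.
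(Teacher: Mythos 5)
Your proposal is correct and follows essentially the same route as the paper: Theorem~\ref{thm:uniqueness} rules out any colliding pair in a semigroup of maximal size, and Proposition~\ref{pro:Wbf_unique} then forces containment in $\Wbf(n)$, with the cardinality count giving equality. The extra details you supply (the witness realizing $\Wbf(n)$ and the normalization from Lemma~\ref{lem:bifix-free}) are implicit in the paper's two-sentence argument but do not change the approach.
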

\begin{proof}
From Theorem~\ref{thm:uniqueness}, a~transition semigroup that has a~colliding pair cannot be largest.
From Proposition~\ref{pro:Wbf_unique}, $\Wbf(n)$ is the unique maximal transition semigroup that does not have colliding pairs of states.
\end{proof}

The following theorem solves the remaining cases of small semigroups:
\begin{theorem}
For $n \in \{6,7\}$, the largest transition semigroup of minimal DFAs of bifix-free languages is $\Wbf(n)$ and it is unique.
For $n = 5$, the largest transition semigroup of minimal DFAs of bifix-free languages is $\Vbf(n)$ and it is unique.
For $n \in \{3,4\}$, $\Wbf(n)=\Vbf(n)$ is the unique largest transition semigroup of minimal DFAs of bifix-free languages.
\end{theorem}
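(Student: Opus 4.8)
The plan is to treat the five small values $n\in\{3,4,5,6,7\}$ directly, since they fall outside the scope of Theorems~\ref{thm:bifix-free_upper_bound} and~\ref{thm:uniqueness}: the injective map $\varphi$ is built using subcases (for instance Subcase~2.5.2 and Case~3.4) that explicitly need the three spare middle states guaranteed only when $n\ge 8$, as does the witness transformation in Theorem~\ref{thm:uniqueness}. Consequently neither the bound $|T(n)|\le|\Wbf(n)|$ nor the strictness argument is available here. The unifying idea is that, by Propositions~\ref{pro:Wbf_unique} and~\ref{pro:Vbf_unique}, the set of colliding pairs of a transition semigroup pins down a maximal semigroup containing it; combined with $T(n)\subseteq\Bbf(n)$, this reduces the problem to comparing finitely many maximal semigroups, one per admissible collision relation on $Q_M$.

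First I would dispose of $n\in\{3,4\}$. Here $Q_M$ has at most one state ($Q_M=\emptyset$ for $n=3$ and $Q_M=\{1\}$ for $n=4$), so no pair of distinct middle states exists and no colliding pair is possible. A direct inspection of the defining conditions shows that the trajectory condition in the definition of $\Bbf(n)$ forces every transformation into one of the three forms of Remark~\ref{rem:Wbf_transformations}: when $0t\in\{n-2,n-1\}$ it is of type~1 or~2 (the extra requirement of type~2 again following from that condition), and when $0t\in Q_M$ the condition at $j=1$ forces the unique middle state into $\{n-2,n-1\}$, giving type~3. Hence $\Bbf(n)=\Wbf(n)=\Vbf(n)$. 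Since every transition semigroup is contained in $\Bbf(n)$ and $\Wbf(n)$ is realised by $\cW(n)$, a semigroup attaining the size $|\Wbf(n)|$ must equal $\Wbf(n)$, so $\Wbf(n)=\Vbf(n)$ is the unique largest.

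For $n=5$ the set $Q_M=\{1,2\}$ carries exactly one pair, so a transition semigroup either has $\{1,2\}$ colliding or does not, and by Propositions~\ref{pro:Vbf_unique} and~\ref{pro:Wbf_unique} it is then contained in $\Vbf(5)$ or in $\Wbf(5)$; both of these are realised (by \cite{BLY12} and by $\cW(5)$). A direct enumeration of the admissible images $(0t,1t,2t)$, subject to the $\Bbf(5)$ conditions and the middle-injectivity condition, gives $|\Wbf(5)|=4^2+3^2+2\cdot 2^2=33$ and $|\Vbf(5)|=34$, where the second count splits as $13+7+7+7$ according as $0t$ equals $n-1$, $n-2$, $1$, or $2$. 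Since $34>33$, any semigroup of maximal size must equal $\Vbf(5)$, which is therefore the unique largest.

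The remaining cases $n\in\{6,7\}$ are the genuine obstacle. With $|Q_M|\in\{3,4\}$ several collision relations are admissible, so the extremes $\Wbf$ and $\Vbf$ no longer exhaust the possibilities, no elementary size comparison suffices, and—because $\varphi$ is unavailable—we do not even have an a priori bound $|T(n)|\le|\Wbf(n)|$. I would settle these by the dedicated exhaustive computation announced in the introduction: enumerate the finitely many admissible collision relations $R$ on the pairs of $Q_M$, and for each compute the largest transition semigroup whose set of colliding pairs is $R$, recording its cardinality; the maximum over all $R$ is then the answer. The delicate point, and the reason a plain enumeration does not suffice, is that the colliding relation is determined by the semigroup itself, so the two constraints—focusing no colliding pair, and creating only collisions that are themselves admissible—interact and must be resolved to a fixpoint rather than checked independently (enlarging the transformation set creates new colliding pairs, which in turn forbid previously admissible focusing transformations). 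Since $\Bbf(n)$ is of manageable size for $n=6,7$, the search is feasible, and its output certifies that the maximum is attained uniquely at $R=\emptyset$, namely at $\Wbf(n)$; combined with Proposition~\ref{pro:Wbf_unique} this yields both the maximality and the uniqueness of $\Wbf(n)$ for $n\in\{6,7\}$.
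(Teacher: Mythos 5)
Your treatment of $n\in\{3,4,5\}$ is correct and is actually more informative than the paper's, which handles all of $n\in\{3,\dots,7\}$ by a single computer search: for $n\in\{3,4\}$ your observation that $|Q_M|\le 1$ forces $\Bbf(n)=\Wbf(n)=\Vbf(n)$ is a clean hand proof, and for $n=5$ the dichotomy ``the unique pair $\{1,2\}$ is colliding or it is not,'' combined with Propositions~\ref{pro:Vbf_unique} and~\ref{pro:Wbf_unique} and the counts $|\Vbf(5)|=34>33=|\Wbf(5)|$ (which I checked), settles that case without any computation. For $n\in\{6,7\}$ you and the paper both resort to exhaustive verification, but the algorithms differ in a way worth noting. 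You propose enumerating all admissible collision relations $R$ on $Q_M$ and, for each, computing the \emph{exact} largest transition semigroup realizing $R$; you correctly flag the fixpoint interaction between focusing and collision-creation, but the resulting task is a maximum-independent-set problem with a closure constraint over all of $\Bbf(n)$ for each of up to $2^6$ relations, and you give no indication of how to make that tractable or even precisely what certificate it produces. The paper instead never computes any maximum exactly: it defines a conflict graph on $\Bbf(n)$ (where a conflict also covers the case that two transformations force the collision relation to be full or empty, whence Propositions~\ref{pro:Vbf_unique} and~\ref{pro:Wbf_unique} apply), and for each candidate transformation $t$ it uses a greedy maximal matching to get an upper bound $1+|B'|-|M|$ on any semigroup containing $t$ other than $\Vbf(n)$ and $\Wbf(n)$; transformations whose bound falls below $\max\{|\Vbf(n)|,|\Wbf(n)|\}$ are discarded and the process is iterated to emptiness. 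That scheme only needs cheap upper bounds rather than exact maxima, which is what makes the verification feasible (the paper reports $3653\to 1176\to 0$ for $n=7$ in under a minute). So your proposal is sound in outline, but if you intend the $n\in\{6,7\}$ computation to be carried out as stated you should either adopt a pruning/bounding scheme like the paper's or explain how the per-relation exact maximization is to be performed.
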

\begin{proof}
We have verified this with the help of computation, basing on the idea of conflicting pairs of transformations from~\cite[Theorem~20]{BLY12}.
The idea of the algorithm is described as follows, and the program is available at~\cite{SzWi18SyntacticComplexityOfBifixFreeArxiv}.

We say that two transformations $t_1,t_2 \in \Bbf(n)$ \emph{conflicts} if they cannot be both present in the transition semigroup of a~minimal DFA $\cD$ of a~bifix-free language, or they imply that all pairs of states from $Q_M$ are either colliding or focused.
In the latter case, by Proposition~\ref{pro:Vbf_unique} and Proposition~\ref{pro:Wbf_unique} we know that a~transition semigroup containing these transformations must be a~subsemigroup of $\Vbf(n)$ or $\Wbf(n)$, respectively.
Hence, we know that two conflicting transformations cannot be present in a~transition semigroup of size at least $\max\{\Vbf(n),\Wbf(n)\}$ which is different from $\Vbf(n)$ and $\Wbf(n)$.
Given a~set of transformations $B$, the \emph{graph of conflicts} is the graph $(B,E)$, where there is an~edge $(t_1,t_2) \in E$ if and only if $t_1$ conflicts with $t_2$.

Given an~$n$, our algorithm is as follows:
We keep a~subset $B_i \subseteq \Bbf(n)$ of transformations that can potentially be present in a~largest transition semigroup.
Starting with $B_0=\Bbf(n)$, we iteratively compute $B_{i+1} \subset B_i$, where $B_{i+1}$ is obtained from $B_i$ by removing some transformations.
This is done for $i=0,1,\ldots$ until we obtain $|B_{i+1}| = 0$.
If $B_{i+1} = B_i$ then the algorithm fails.

Given $B_i$, we compute $B_{i+1}$ by checking every transformation $t \in B$ and estimating how many pairwise non-conflicting transformations can we add to the set $\{t\}$.
Let $B' \subseteq B \setminus \{t\}$ be the set of all transformations that do not conflict with $t$.
The maximal number of pairwise non-conflicting transformations in $B'$ is the size of a~largest independent set in $B'$.
We only compute an~upper bound for it, since the problem is computationally hard.
Let $M$ be a~maximal matching in the graph of conflicts of $B'$; this can be computed by a~simple greedy algorithm in $O(|B'|^2)$ time.
Then $|B'|-|M|$ is an~upper bound for the size of a~largest independent set in $B'$, and so $1+|B'|-|M|$ is an~upper bound for the cardinality of a~maximal transition semigroup containing $t$ that is different from $\Vbf(n)$ and $\Wbf(n)$.
If this bound is smaller than $\max\{\Vbf(n),\Wbf(n)\}$, then we do not take $t$ into $B_{i+1}$; otherwise we keep $t$.

When $|B_i| = 0$, all transformations are rejected, which means that there are no transformations that can be present in a~transition semigroup of size at least $\max\{\Vbf(n),\Wbf(n)\}$ which is different from $\Vbf(n)$ and $\Wbf(n)$, so there are no such semigroups.

For $n=7$, two iterations were sufficient, and we obtained $|B_0| = 3653$, $|B_1|=1176$, and $|B_2|=0$; the computation took less than one minute.
\end{proof}

Since the largest transition semigroups are unique, from Propositions~\ref{pro:Wbf_alphabet_lower_bound} and~\ref{pro:Vbf_alphabet_lower_bound} we infer the sizes of the alphabets required in order to meet the bound for the syntactic complexity.
\begin{corollary}
To meet the bound for the syntactic complexity of bifix-free languages, $(n-2)^{n-3} + (n-3)2^{n-3} - 1$ letters are required and sufficient for $n \ge 6$, and $(n-2)!$ letters are required and sufficient for $n \in \{3,4,5\}$.
\end{corollary}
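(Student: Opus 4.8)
The plan is to combine the uniqueness of the largest transition semigroups, just established, with the two alphabet lower bounds of Propositions~\ref{pro:Wbf_alphabet_lower_bound} and~\ref{pro:Vbf_alphabet_lower_bound} together with the explicit witness constructions. The key observation is that \emph{meeting the bound} is a rigid condition: since for $n \ge 6$ the semigroup $\Wbf(n)$ is the \emph{unique} largest transition semigroup of a minimal DFA of a bifix-free language, and for $n \in \{3,4,5\}$ this role is played by $\Vbf(n)$ (with $\Wbf(n) = \Vbf(n)$ when $n \in \{3,4\}$), any minimal DFA whose syntactic complexity equals the bound must have a transition semigroup that is \emph{exactly} this unique semigroup, not merely one of the same cardinality. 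Hence the minimal number of letters required to meet the bound coincides with the minimal number of generators of the relevant unique semigroup, and it suffices to pin down the latter in each range of $n$.

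For $n \ge 6$ I would argue as follows. The \emph{necessity} of $(n-2)^{n-3}+(n-3)2^{n-3}-1$ letters is exactly Proposition~\ref{pro:Wbf_alphabet_lower_bound}, which shows that no fewer generators can produce $\Wbf(n)$. The \emph{sufficiency} is witnessed by the DFA $\cW(n)$: its alphabet has precisely this size, and its transition semigroup was already shown to be $\Wbf(n)$. Combining the two, $(n-2)^{n-3}+(n-3)2^{n-3}-1$ letters are both required and sufficient.

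For $n \in \{3,4,5\}$ the same scheme applies with $\Vbf(n)$ in place of $\Wbf(n)$. Necessity of $(n-2)!$ letters is Proposition~\ref{pro:Vbf_alphabet_lower_bound}, and sufficiency follows from the construction of~\cite{BLY12}, which realizes $\Vbf(n)$ over an alphabet of size $(n-2)!$. Together these settle all the remaining small values of $n$.

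I expect no serious obstacle in the assembly itself, as the genuine work was carried out in the uniqueness theorems and in the two lower-bound propositions. The only point that requires care is the \emph{rigidity step} at the outset, namely articulating precisely why attaining the bound forces the transition semigroup to be the unique largest one rather than some distinct semigroup of equal size. This is immediate from uniqueness, but it is the logical hinge that converts the lower bounds on generating the specific semigroups $\Wbf(n)$ and $\Vbf(n)$ into lower bounds on the alphabet size needed to reach the syntactic-complexity bound.
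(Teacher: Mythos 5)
Your proposal is correct and follows exactly the paper's argument: uniqueness of the largest semigroups forces any bound-meeting DFA to have transition semigroup precisely $\Wbf(n)$ (for $n\ge 6$) or $\Vbf(n)$ (for $n\in\{3,4,5\}$), so the generator lower bounds of Propositions~\ref{pro:Wbf_alphabet_lower_bound} and~\ref{pro:Vbf_alphabet_lower_bound} give necessity, while the witness $\cW(n)$ and the construction from~\cite{BLY12} give sufficiency. The ``rigidity step'' you single out is exactly the one-line justification the paper itself gives.
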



\section{Conclusions}

We have solved the problem of syntactic complexity of bifix-free languages and identified the largest semigroups for every number of states $n$.
In the main theorem, we used the method of injective function (cf.~\cite{BrSz14a,BrSz15SyntacticComplexityOfSuffixFree}) with new techniques and tricks for ensuring injectivity (in particular, Lemma~\ref{lem:orbits} and the constructions in Supercase~3). This stands as a~universal method for solving similar problems concerning the maximality of semigroups.
Our proof required an~extensive analysis of 23 (sub)cases and much more complicated injectivity arguments than those for suffix-free (12 cases), left ideals (5 subcases) and two-sided ideals (8 subcases).
The difficulty of applying the method grows quickly when the characterization of the class of languages gets more involved.

It may be surprising that we need a~witness with $(n-2)^{n-3} + (n-3)2^{n-3} - 1$ (for $n \ge 6$) letters to meet the bound for syntactic complexity of bifix-free languages, whereas in the case of prefix- and suffix-free languages only $n+1$ and five letters suffice, respectively (see \cite{BLY12,BrSz15SyntacticComplexityOfSuffixFree}).

Finally, our results enabled establishing the existence of most complex bifix-free languages (\cite{FeSz17ComplexityOfBifixFree,FeSz18ComplexityOfBifixFree}).

\section*{Acknowledgments}

We are grateful to anonymous reviewers for their careful reading of the proof.
This work was supported in part by the National Science Centre, Poland under project numbers 2014/15/B/ST6/00615 and 2017/25/B/ST6/01920 (Marek Szyku{\l}a).

\bibliographystyle{plainurl}
\providecommand{\noopsort}[1]{}

\newpage
\section*{Appendix: Map of the (sub)cases in the proof of Theorem~\ref{thm:bifix-free_upper_bound}}

{\setstretch{1.05}
\begin{enumerate}[leftmargin=*,widest=\textbf{Supercase~1}]
\item[\textbf{Supercase~1}:]
$t \in \Wbf(n)$.
\end{enumerate}

Let $p=0t$, and $k \ge 0$ be the largest integer such that $pt^k \notin \{n-2,n-1\}$.
\begin{enumerate}[leftmargin=*,widest=\textbf{Supercase~1}]
\item[\textbf{Supercase~2}:]
$t \notin \Wbf(n)$ and $pt^{k+1} = n-1$.
\begin{enumerate}[leftmargin=*,widest=\textbf{Case~2.1}]
\item[\textbf{Case~2.1}:] $t$ has a~cycle.
\item[\textbf{Case~2.2}:] $t$ has no cycles and $k \ge 1$.
\item[\textbf{Case~2.3}:] $t$ does not fit in any of the previous cases, and there exist at least two fixed points of in-degree 1.
\item[\textbf{Case~2.4}:] $t$ does not fit in any of the previous cases, and there exists $x \in Q\setminus \{0\}$ of in-degree $0$ such that $xt \notin \{x,n-2,n-1\}$.
Let $x$ be the smallest state such a~state with the largest $\ell\ge 1$ such that $xt^\ell \notin \{xt^{\ell-1},n-2,n-1\}$.
\begin{enumerate}[leftmargin=*,widest=\textbf{Subcase~2.4.1}]
\item[\textbf{Subcase~2.4.1}:] $\ell\ge 2$ and $xt^{\ell+1} = n-1$.
\item[\textbf{Subcase~2.4.2}:] $\ell=1$, $xt^2 = n-1$, and $xt$ has in-degree $>1$.
\item[\textbf{Subcase~2.4.3}:] $\ell=1$, $xt^2 = n-1$, and $xt$ has in-degree $1$.
\item[\textbf{Subcase 2.4.4}:] $xt^{\ell+1} = n-2$.
\item[\textbf{Subcase 2.4.5}:] $xt^{\ell+1} = xt^\ell$.
\end{enumerate}
\item[\textbf{Case 2.5}:] $t$ does not fit in any of the previous cases.
\begin{enumerate}[leftmargin=*,widest=\textbf{Subcase~2.5.1}]
\item[\textbf{Subcase~2.5.1}:] There are at least two states $r_1,r_2,\ldots,r_u$ from $Q \setminus \{0,p\}$ such that $r_i t = n-1$ for all $i$.
\item[\textbf{Subcase~2.5.2}:] $t$ does not fit in Subcase~2.5.1.
\end{enumerate}
\end{enumerate}
\medskip
\item[\textbf{Supercase 3}:] $t \notin \Wbf(n)$ and $pt^{k+1} = n-2$.
\begin{enumerate}[leftmargin=*,widest=\textbf{Case~3.1}]
\item[\textbf{Case~3.1}:] $k=0$ and $t$ has a~cycle.
\item[\textbf{Case~3.2}:] $t$ does not fit into any of the previous cases, $k=0$, and there exists a~state $x \in Q \setminus \{0\}$ such that $xt \notin \{x,n-1,n-2\}$.
Let $x$ be the smallest state such a~state with the largest $\ell\ge 1$ such that $xt^\ell \notin \{xt^{\ell-1},n-2,n-1\}$.
\begin{enumerate}[leftmargin=*,widest=\textbf{Subcase~3.2.1}]
\item[\textbf{Subcase~3.2.1}:] $\ell \ge 2$ and $xt^{\ell+1} = n-1$.
\item[\textbf{Subcase~3.2.2}:] $\ell=1$, $xt^2 = n-1$, and $xt$ has in-degree at least $2$.
\item[\textbf{Subcase~3.2.3}:] $\ell=1$, $xt^2 = n-1$, and $xt$ has in-degree $1$.
\item[\textbf{Subcase~3.2.4}:] $xt^{\ell} = xt^{\ell+1}$.
\end{enumerate}
\item[\textbf{Case~3.3}:] $t$ does not fit into any of the previous cases, $k=0$, and there exist at least two fixed points of in-degree 1.
\item[\textbf{Case~3.4}:] $t$ does not fit into any of the previous cases and $k=0$.
\begin{enumerate}[leftmargin=*,widest=\textbf{Subcase~3.4.1}]
\item[\textbf{Subcase~3.4.1}:] $v \ge 2$.
\item[\textbf{Subcase~3.4.2}:] $v = 1$.
\item[\textbf{Subcase~3.4.3}:] $v = 0$.
\end{enumerate}
\item[\textbf{Case~3.5}:] $k \ge 1$.
\begin{enumerate}[leftmargin=*,widest=\textbf{Subcase~3.5.1}]
\item[\textbf{Subcase~3.5.1}:] $v = 0$ and $pt^k$ has in-degree $1$.
\item[\textbf{Subcase~3.5.2}:] $v = 0$ and $p t^k$ has in-degree at least $2$.
\item[\textbf{Subcase~3.5.3}:] $v \ge 1$.
\end{enumerate}
\end{enumerate}
\end{enumerate}}


\begin{figure}[htb]
\unitlength 7.2pt\scriptsize
\gasset{Nh=2.5,Nw=2.5,Nmr=1.25,ELdist=0.5,loopdiam=1.5}
\begin{center}\begin{picture}(28,10)(0,-1)
\node[Nframe=n](name)(2,7){Case 2.1:}
\node(0)(2,0){0}\imark(0)
\node(p)(8,0){$p$}
\node[Nframe=n](pdots)(14,0){$\dots$}
\node(pt^k)(20,0){$pt^k$}
\node(n-1)(26,0){$n$-$1$}
\node(n-2)(26,4){$n$-$2$}\rmark(n-2)
\node(z)(12,4){$z$}
\node(r)(14,7){$r$}
\node[Nframe=n](rdots)(16,4){$\dots$}
\drawedge(0,p){}
\drawedge(p,pdots){}
\drawedge(pdots,pt^k){}
\drawedge(pt^k,n-1){}
\drawedge(n-2,n-1){}
\drawloop[loopangle=270](n-1){}
\drawedge[curvedepth=1](z,r){}
\drawedge[curvedepth=1](r,rdots){}
\drawedge[curvedepth=1](rdots,z){}
\drawedge[linecolor=red,dash={.5 .25}{.25},curvedepth=-2.5](0,n-1){}
\drawedge[linecolor=red,dash={.5 .25}{.25},curvedepth=3.5](p,r){}
\drawedge[linecolor=red,dash={.5 .25}{.25},curvedepth=-1.5](pdots,p){}
\drawedge[linecolor=red,dash={.5 .25}{.25},curvedepth=-1.5](pt^k,pdots){}
\end{picture}\begin{picture}(28,10)(-2,-1)
\node[Nframe=n](name)(2,7){Case 2.2:}
\node(0)(2,0){0}\imark(0)
\node(p)(8,0){$p$}
\node[Nframe=n](pdots)(14,0){$\dots$}
\node(pt^k)(20,0){$pt^k$}
\node(n-1)(26,0){$n$-$1$}
\node(n-2)(26,4){$n$-$2$}\rmark(n-2)
\drawedge(0,p){}
\drawedge(p,pdots){}
\drawedge(pdots,pt^k){}
\drawedge(pt^k,n-1){}
\drawedge(n-2,n-1){}
\drawloop[loopangle=270](n-1){}
\drawedge[linecolor=red,dash={.5 .25}{.25},curvedepth=-2.5](0,n-1){}
\drawedge[linecolor=red,dash={.5 .25}{.25},curvedepth=-1.5](pdots,p){}
\drawedge[linecolor=red,dash={.5 .25}{.25},curvedepth=-1.5](pt^k,pdots){}
\drawloop[linecolor=red,dash={.5 .25}{.25}](p){}
\end{picture}\end{center}

\begin{center}\begin{picture}(28,12)(0,-1)
\node[Nframe=n](name)(2,7){Case 2.3:}
\node(0)(2,0){0}\imark(0)
\node(p)(14,0){$p$}
\node(n-1)(26,0){$n$-$1$}
\node(n-2)(26,4){$n$-$2$}\rmark(n-2)
\node(f1)(10,4){$f_1$}
\node(f2)(18,4){$f_2$}
\drawedge(0,p){}
\drawedge(p,n-1){}
\drawedge(n-2,n-1){}
\drawloop[loopangle=270](n-1){}
\drawloop(f1){}
\drawloop(f2){}
\drawedge[linecolor=red,dash={.5 .25}{.25},curvedepth=-2.5](0,n-1){}
\drawedge(n-2,n-1){}
\drawloop[loopangle=270](n-1){}
\drawedge[linecolor=red,dash={.5 .25}{.25},curvedepth=1](f1,f2){}
\drawedge[linecolor=red,dash={.5 .25}{.25},curvedepth=1](f2,f1){}
\drawedge[linecolor=red,dash={.5 .25}{.25}](p,f2){}
\end{picture}\begin{picture}(28,12)(-2,-1)
\node[Nframe=n](name)(2,7){Subcase 2.4.1:}
\node(0)(2,0){0}\imark(0)
\node(p)(14,0){$p$}
\node(n-1)(26,0){$n$-$1$}
\node(n-2)(26,4){$n$-$2$}\rmark(n-2)
\node(x)(6,4){$x$}
\node(xt)(10,4){$xt$}
\node[Nframe=n](xdots)(14,4){$\dots$}
\node(xt^ell)(18,4){$xt^\ell$}
\drawedge(0,p){}
\drawedge(p,n-1){}
\drawedge(n-2,n-1){}
\drawloop[loopangle=270](n-1){}
\drawedge(x,xt){}
\drawedge(xt,xdots){}
\drawedge(xdots,xt^ell){}
\drawedge(xt^ell,n-1){}
\drawedge[linecolor=red,dash={.5 .25}{.25},curvedepth=-2.5](0,n-1){}
\drawedge[linecolor=red,dash={.5 .25}{.25}](p,xt^ell){}
\end{picture}\end{center}

\begin{center}\begin{picture}(28,13)(0,-1)
\node[Nframe=n](name)(2,8){Subcase 2.4.2:}
\node(0)(2,0){0}\imark(0)
\node(p)(14,0){$p$}
\node(n-1)(26,0){$n$-$1$}
\node(n-2)(26,4){$n$-$2$}\rmark(n-2)
\node(x)(10,4){$x$}
\node(xt)(18,4){$xt$}
\node(y)(10,8){$y$}
\drawedge(0,p){}
\drawedge(p,n-1){}
\drawedge(n-2,n-1){}
\drawloop[loopangle=270](n-1){}
\drawedge(x,xt){}
\drawedge(xt,n-1){}
\drawedge(y,xt){}
\drawedge[linecolor=red,dash={.5 .25}{.25},curvedepth=-2.5](0,n-1){}
\drawedge[linecolor=red,dash={.5 .25}{.25},curvedepth=5,syo=.5](p,y){}
\drawedge[linecolor=red,dash={.5 .25}{.25},curvedepth=1](xt,x){}
\drawedge[linecolor=red,dash={.5 .25}{.25}](x,y){}
\end{picture}\begin{picture}(28,13)(-2,-1)
\node[Nframe=n](name)(2,8){Subcase 2.4.3:}
\node(0)(2,0){0}\imark(0)
\node(p)(14,0){$p$}
\node(n-1)(26,0){$n$-$1$}
\node(n-2)(26,4){$n$-$2$}\rmark(n-2)
\node(x)(10,4){$x$}
\node(xt)(18,4){$xt$}
\drawedge(0,p){}
\drawedge(p,n-1){}
\drawedge(n-2,n-1){}
\drawloop[loopangle=270](n-1){}
\drawedge(x,xt){}
\drawedge(xt,n-1){}
\drawedge[linecolor=red,dash={.5 .25}{.25},curvedepth=-2.5](0,n-1){}
\drawedge[linecolor=red,dash={.5 .25}{.25}](p,x){}
\drawedge[linecolor=red,dash={.5 .25}{.25},curvedepth=-1](xt,x){}
\drawedge[linecolor=red,dash={.1 .1}{.1},curvedepth=3](x,n-2){(i)}
\drawedge[linecolor=red,dash={.1 .1}{.1},curvedepth=-.2,ELside=r,ELdist=.2](x,n-1){(ii)}
\end{picture}\end{center}

\begin{center}\begin{picture}(28,11)(0,-1)
\node[Nframe=n](name)(2,7){Subcase 2.4.4:}
\node(0)(2,0){0}\imark(0)
\node(p)(14,0){$p$}
\node(n-1)(26,0){$n$-$1$}
\node(n-2)(26,4){$n$-$2$}\rmark(n-2)
\node(x)(6,4){$x$}
\node(xt)(10,4){$xt$}
\node[Nframe=n](xdots)(14,4){$\dots$}
\node(xt^ell)(18,4){$xt^\ell$}
\drawedge(0,p){}
\drawedge(p,n-1){}
\drawedge(n-2,n-1){}
\drawloop[loopangle=270](n-1){}
\drawedge(x,xt){}
\drawedge(xt,xdots){}
\drawedge(xdots,xt^ell){}
\drawedge(xt^ell,n-2){}
\drawedge[linecolor=red,dash={.5 .25}{.25},curvedepth=-2.5](0,n-1){}
\drawedge[linecolor=red,dash={.5 .25}{.25},curvedepth=-.5](p,n-2){}
\end{picture}\begin{picture}(28,11)(-2,-1)
\node[Nframe=n](name)(2,7){Subcase 2.4.5:}
\node(0)(2,0){0}\imark(0)
\node(p)(14,0){$p$}
\node(n-1)(26,0){$n$-$1$}
\node(n-2)(26,4){$n$-$2$}\rmark(n-2)
\node(x)(6,4){$x$}
\node(xt)(10,4){$xt$}
\node[Nframe=n](xdots)(14,4){$\dots$}
\node(xt^ell)(18,4){$xt^\ell$}
\drawedge(0,p){}
\drawedge(p,n-1){}
\drawedge(n-2,n-1){}
\drawloop[loopangle=270](n-1){}
\drawedge(x,xt){}
\drawedge(xt,xdots){}
\drawedge(xdots,xt^ell){}
\drawloop(xt^ell){}
\drawedge[linecolor=red,dash={.5 .25}{.25},curvedepth=-2.5](0,n-1){}
\drawedge[linecolor=red,dash={.5 .25}{.25}](p,xt^ell){}
\end{picture}\end{center}

\begin{center}\begin{picture}(28,13)(0,-2)
\node[Nframe=n](name)(2,7){Subcase 2.5.1:}
\node(0)(2,0){0}\imark(0)
\node(p)(14,0){$p$}
\node(n-1)(26,0){$n$-$1$}
\node(n-2)(26,4){$n$-$2$}\rmark(n-2)
\node(f)(8,4){$f$}
\node(r1)(14,4){$r_1$}
\node[Nframe=n](rdots)(18,4){$\dots$}
\node(ru)(22,4){$r_u$}
\drawedge(0,p){}
\drawedge(p,n-1){}
\drawedge(n-2,n-1){}
\drawloop[loopangle=270](n-1){}
\drawloop(f){}
\drawedge[curvedepth=-.2](r1,n-1){}
\drawedge[curvedepth=0,exo=.2](rdots,n-1){}
\drawedge[curvedepth=0,exo=.5](ru,n-1){}
\drawedge[linecolor=red,dash={.5 .25}{.25},curvedepth=-2.5](0,n-1){}
\drawedge[linecolor=red,dash={.5 .25}{.25}](p,f){}
\drawedge[linecolor=red,dash={.5 .25}{.25}](r1,rdots){}
\drawedge[linecolor=red,dash={.5 .25}{.25}](rdots,ru){}
\drawedge[linecolor=red,dash={.5 .25}{.25},curvedepth=-2](ru,r1){}
\end{picture}\begin{picture}(28,13)(-2,-2)
\node[Nframe=n](name)(2,7){Subcase 2.5.2:}
\node(0)(2,0){0}\imark(0)
\node(p)(14,0){$p$}
\node(n-1)(26,0){$n$-$1$}
\node(n-2)(26,4){$n$-$2$}\rmark(n-2)
\node(f)(8,4){$f$}
\node(q1)(14,4){$q_1$}
\node[Nframe=n](qdots)(18,4){$\dots$}
\node(qv)(22,4){$q_v$}
\drawedge(0,p){}
\drawedge(p,n-1){}
\drawedge(n-2,n-1){}
\drawloop[loopangle=270](n-1){}
\drawloop(f){}
\drawedge[curvedepth=-3,exo=1](q1,n-2){}
\drawedge[curvedepth=-2,sxo=-1](qdots,n-2){}
\drawedge[curvedepth=0](qv,n-2){}
\drawedge[linecolor=red,dash={.5 .25}{.25},curvedepth=-2.5](0,n-1){}
\drawedge[linecolor=red,dash={.5 .25}{.25}](p,f){}
\drawedge[linecolor=red,dash={.5 .25}{.25},curvedepth=2](q1,qv){}
\drawedge[linecolor=red,dash={.5 .25}{.25}](qdots,q1){}
\drawedge[linecolor=red,dash={.5 .25}{.25}](qv,qdots){}
\end{picture}\end{center}
\caption{Map of the (sub)cases of Supercase~2 in the proof of Theorem~\ref{thm:bifix-free_upper_bound}.}
\end{figure}
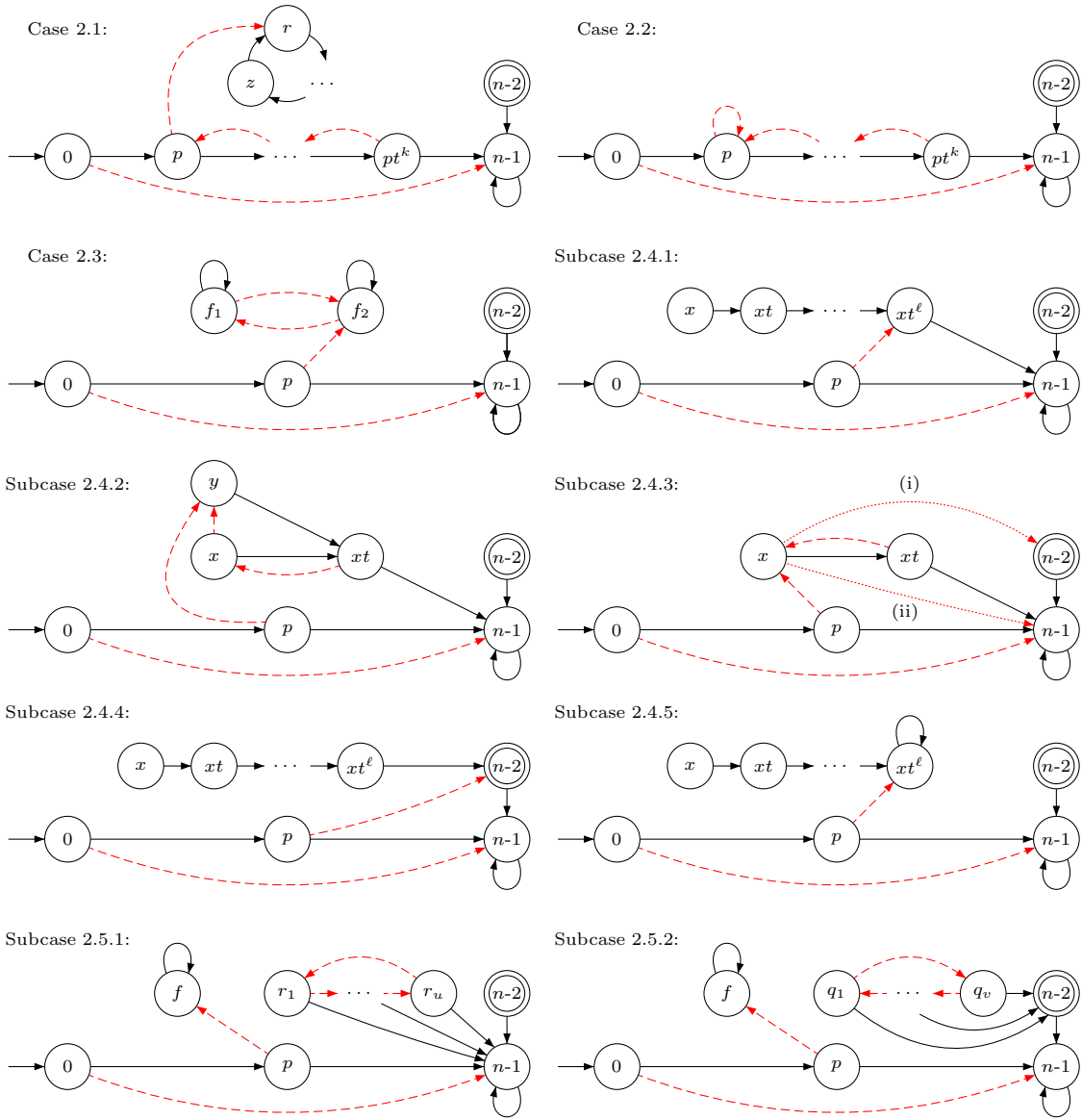


\begin{figure}[htb]
\unitlength 7.2pt\scriptsize
\gasset{Nh=2.5,Nw=2.5,Nmr=1.25,ELdist=0.5,loopdiam=1.5}
\begin{center}\begin{picture}(28,9.5)(0,-1)
\node[Nframe=n](name)(2,8.5){Case 3.1:}
\node(0)(2,6){0}\imark(0)
\node(p)(14,6){$p$}
\node(n-1)(26,0){$n$-$1$}
\node(n-2)(26,6){$n$-$2$}\rmark(n-2)
\node(z)(12,-1){$z$}
\node(r)(14,2){$r$}
\node[Nframe=n](rdots)(16,-1){$\dots$}
\node(q1)(18,3){$q_1$}
\node[Nframe=n](qdots)(20.5,3){$\dots$}
\node(qv)(23,3){$q_v$}
\drawedge(0,p){}
\drawedge(p,n-2){}
\drawedge(n-2,n-1){}
\drawloop[loopangle=270](n-1){}
\drawedge[curvedepth=.4](q1,n-2){}
\drawedge[curvedepth=.4,sxo=-1,exo=1.5](qdots,n-2){}
\drawedge(qv,n-2){}
\drawedge[curvedepth=1](z,r){}
\drawedge[curvedepth=1](r,rdots){}
\drawedge[curvedepth=1](rdots,z){}
\drawedge[linecolor=red,dash={.5 .25}{.25},curvedepth=2](0,n-2){}
\drawedge[linecolor=red,dash={.5 .25}{.25}](p,r){}
\drawedge[linecolor=red,dash={.5 .25}{.25}](q1,p){}
\drawedge[linecolor=red,dash={.5 .25}{.25},curvedepth=-.4,sxo=1,exo=-1.5](qdots,p){}
\drawedge[linecolor=red,dash={.5 .25}{.25},curvedepth=-.4](qv,p){}
\end{picture}\begin{picture}(28,9.5)(-2,-1)
\node[Nframe=n](name)(2,8.5){Subcase 3.2.1:}
\node(0)(2,6){0}\imark(0)
\node(p)(14,6){$p$}
\node(n-1)(26,0){$n$-$1$}
\node(n-2)(26,6){$n$-$2$}\rmark(n-2)
\node(x)(2,0){$x$}
\node(xt)(6,0){$xt$}
\node[Nframe=n](xdots)(10,0){$\dots$}
\node(xtl)(14,0){$xt^\ell$}
\node(q1)(18,3){$q_1$}
\node[Nframe=n](qdots)(20.5,3){$\dots$}
\node(qv)(23,3){$q_v$}
\drawedge(0,p){}
\drawedge(p,n-2){}
\drawedge(n-2,n-1){}
\drawedge[curvedepth=.4](q1,n-2){}
\drawedge[curvedepth=.4,sxo=-1,exo=1.5](qdots,n-2){}
\drawedge(qv,n-2){}
\drawedge(x,xt){}
\drawedge(xt,xdots){}
\drawedge(xdots,xtl){}
\drawedge[ELdist=.2](xtl,n-1){(ii)}
\drawedge[linecolor=red,dash={.5 .25}{.25},curvedepth=2](0,n-2){}
\drawedge[linecolor=red,dash={.5 .25}{.25}](q1,p){}
\drawedge[linecolor=red,dash={.5 .25}{.25},curvedepth=-.4,sxo=1,exo=-1.5](qdots,p){}
\drawedge[linecolor=red,dash={.5 .25}{.25},curvedepth=-.4](qv,p){}
\drawedge[linecolor=red,dash={.5 .25}{.25},curvedepth=-3](p,xtl){}
\drawloop[linecolor=red,dash={.1 .1}{.1},ELpos=80,ELdist=.2](xtl){(i)}
\end{picture}\end{center}

\begin{center}\begin{picture}(28,11)(0,-1)
\node[Nframe=n](name)(2,8.5){Subcase 3.2.2:}
\node(0)(2,6){0}\imark(0)
\node(p)(14,6){$p$}
\node(n-1)(26,0){$n$-$1$}
\node(n-2)(26,6){$n$-$2$}\rmark(n-2)
\node(x)(8,0){$x$}
\node(xt)(14,0){$xt$}
\node(y)(8,4){$y$}
\node(q1)(18,3){$q_1$}
\node[Nframe=n](qdots)(20.5,3){$\dots$}
\node(qv)(23,3){$q_v$}
\drawedge(0,p){}
\drawedge(p,n-2){}
\drawedge(n-2,n-1){}
\drawedge[curvedepth=.4](q1,n-2){}
\drawedge[curvedepth=.4,sxo=-1,exo=1.5](qdots,n-2){}
\drawedge(qv,n-2){}
\drawedge(x,xt){}
\drawedge(xt,n-1){}
\drawedge(y,xt){}
\drawedge[linecolor=red,dash={.5 .25}{.25},curvedepth=2](0,n-2){}
\drawedge[linecolor=red,dash={.5 .25}{.25}](q1,p){}
\drawedge[linecolor=red,dash={.5 .25}{.25},curvedepth=-.4,sxo=1,exo=-1.5](qdots,p){}
\drawedge[linecolor=red,dash={.5 .25}{.25},curvedepth=-.4](qv,p){}
\drawedge[linecolor=red,dash={.5 .25}{.25}](p,y){}
\drawedge[linecolor=red,dash={.5 .25}{.25},curvedepth=1](xt,x){}
\drawedge[linecolor=red,dash={.5 .25}{.25}](x,y){}
\end{picture}\begin{picture}(28,11)(-2,-1)
\node[Nframe=n](name)(2,8.5){Subcase 3.2.3:}
\node(0)(2,6){0}\imark(0)
\node(p)(14,6){$p$}
\node(n-1)(26,0){$n$-$1$}
\node(n-2)(26,6){$n$-$2$}\rmark(n-2)
\node(x)(8,0){$x$}
\node(xt)(14,0){$xt$}
\node(q1)(18,3){$q_1$}
\node[Nframe=n](qdots)(20.5,3){$\dots$}
\node(qv)(23,3){$q_v$}
\drawedge(0,p){}
\drawedge(p,n-2){}
\drawedge(n-2,n-1){}
\drawedge[curvedepth=.4](q1,n-2){}
\drawedge[curvedepth=.4,sxo=-1,exo=1.5](qdots,n-2){}
\drawedge(qv,n-2){}
\drawedge(x,xt){}
\drawedge(xt,n-1){}
\drawedge[linecolor=red,dash={.5 .25}{.25},curvedepth=2](0,n-2){}
\drawedge[linecolor=red,dash={.5 .25}{.25}](q1,p){}
\drawedge[linecolor=red,dash={.5 .25}{.25},curvedepth=-.4,sxo=1,exo=-1.5](qdots,p){}
\drawedge[linecolor=red,dash={.5 .25}{.25},curvedepth=-.4](qv,p){}
\drawedge[linecolor=red,dash={.5 .25}{.25}](p,x){}
\drawedge[linecolor=red,dash={.5 .25}{.25},curvedepth=-1](xt,x){}
\drawloop[linecolor=red,dash={.1 .1}{.1}](x){(i)}
\drawedge[linecolor=red,dash={.1 .1}{.1},ELdist=.2,curvedepth=-2](x,n-1){(ii)}
\end{picture}\end{center}

\begin{center}\begin{picture}(28,11.5)(0,-2)
\node[Nframe=n](name)(2,7.5){Subcase 3.2.4:}
\node(0)(2,5){0}\imark(0)
\node(p)(14,5){$p$}
\node(n-1)(26,0){$n$-$1$}
\node(n-2)(26,5){$n$-$2$}\rmark(n-2)
\node(x)(2,0){$x$}
\node(xt)(6,0){$xt$}
\node[Nframe=n](xdots)(10,0){$\dots$}
\node(xtl)(14,0){$xt^\ell$}
\node(q1)(18,2){$q_1$}
\node[Nframe=n](qdots)(20.5,2){$\dots$}
\node(qv)(23,2){$q_v$}
\drawedge(0,p){}
\drawedge(p,n-2){}
\drawedge(n-2,n-1){}
\drawloop[loopangle=270](n-1){}
\drawedge[curvedepth=.4](q1,n-2){}
\drawedge[curvedepth=.4,sxo=-1,exo=1.5](qdots,n-2){}
\drawedge(qv,n-2){}
\drawedge(x,xt){}
\drawedge(xt,xdots){}
\drawedge(xdots,xtl){}
\drawloop[loopangle=270](xtl){}
\drawedge[linecolor=red,dash={.5 .25}{.25},curvedepth=2](0,n-2){}
\drawedge[linecolor=red,dash={.5 .25}{.25},curvedepth=5.6,sxo=1,exo=-.2](q1,x){}
\drawedge[linecolor=red,dash={.5 .25}{.25},curvedepth=5.7,sxo=1,exo=-1](qdots,x){}
\drawedge[linecolor=red,dash={.5 .25}{.25},curvedepth=5.8,sxo=1,exo=-1.8](qv,x){}
\drawedge[linecolor=red,dash={.5 .25}{.25}](p,xtl){}
\drawedge[linecolor=red,dash={.5 .25}{.25},curvedepth=1](xtl,xdots){}
\drawedge[linecolor=red,dash={.5 .25}{.25},curvedepth=1](xdots,xt){}
\drawedge[linecolor=red,dash={.5 .25}{.25},curvedepth=1](xt,x){}
\drawedge[linecolor=red,dash={.5 .25}{.25}](x,p){}
\end{picture}\begin{picture}(28,11.5)(-2,-2)
\node[Nframe=n](name)(2,7.5){Case 3.3:}
\node(0)(2,5){0}\imark(0)
\node(p)(14,5){$p$}
\node(n-1)(26,0){$n$-$1$}
\node(n-2)(26,5){$n$-$2$}\rmark(n-2)
\node(q1)(18,2){$q_1$}
\node[Nframe=n](qdots)(20.5,2){$\dots$}
\node(qv)(23,2){$q_v$}
\node(f1)(8,0){$f_1$}
\node(f2)(14,0){$f_2$}
\drawedge(0,p){}
\drawedge(p,n-2){}
\drawedge(n-2,n-1){}
\drawloop[loopangle=270](n-1){}
\drawedge[curvedepth=.4](q1,n-2){}
\drawedge[curvedepth=.4,sxo=-1,exo=1.5](qdots,n-2){}
\drawedge(qv,n-2){}
\drawloop[loopangle=270](f1){}
\drawloop[loopangle=270](f2){}
\drawedge[linecolor=red,dash={.5 .25}{.25},curvedepth=2](0,n-2){}
\drawedge[linecolor=red,dash={.5 .25}{.25}](q1,p){}
\drawedge[linecolor=red,dash={.5 .25}{.25},curvedepth=-.4,sxo=1,exo=-1.5](qdots,p){}
\drawedge[linecolor=red,dash={.5 .25}{.25},curvedepth=-.4](qv,p){}
\drawedge[linecolor=red,dash={.5 .25}{.25},curvedepth=1](f1,f2){}
\drawedge[linecolor=red,dash={.5 .25}{.25},curvedepth=1](f2,f1){}
\drawedge[linecolor=red,dash={.5 .25}{.25},curvedepth=0](p,f2){}
\end{picture}\end{center}

\begin{center}\begin{picture}(28,15.5)(0,-1)
\node[Nframe=n](name)(2,12.5){Subcase 3.4.1:}
\node(0)(2,10){0}\imark(0)
\node(p)(14,10){$p$}
\node(n-1)(26,0){$n$-$1$}
\node(n-2)(26,10){$n$-$2$}\rmark(n-2)
\node(q1)(18,7){$q_1$}
\node[Nframe=n,Nw=1.5,Nh=1.5](qdots)(20.5,7){$\dots$}
\node(qv)(23,7){$q_v$}
\node(r1)(18,3){$r_1$}
\node[Nframe=n,Nw=1.5,Nh=1.5](rdots)(20.5,3){$\dots$}
\node(ru)(23,3){$r_u$}
\node(f)(8,3){$f$}
\drawedge(0,p){}
\drawedge(p,n-2){}
\drawedge(n-2,n-1){}
\drawloop[loopangle=270](n-1){}
\drawedge[curvedepth=.4](q1,n-2){}
\drawedge[curvedepth=.4,sxo=-1,exo=1.5](qdots,n-2){}
\drawedge(qv,n-2){}
\drawedge[curvedepth=-.4](r1,n-1){}
\drawedge[curvedepth=-.4,sxo=-1,exo=1.5](rdots,n-1){}
\drawedge(ru,n-1){}
\drawloop(f){}
\drawedge[linecolor=red,dash={.5 .25}{.25},curvedepth=2](0,n-2){}
\drawedge[linecolor=red,dash={.5 .25}{.25}](p,f){}
\drawedge[linecolor=red,dash={.5 .25}{.25},sxo=-.5,curvedepth=-1](q1,qdots){}
\drawedge[linecolor=red,dash={.5 .25}{.25},sxo=-.5,curvedepth=-1](qdots,qv){}
\drawedge[linecolor=red,dash={.5 .25}{.25},curvedepth=-1.5](qv,q1){}
\drawedge[linecolor=red,curvedepth=-.5,exo=1,dash={.5 .25}{.25}](r1,qv){}
\drawedge[linecolor=red,curvedepth=-.5,exo=.7,dash={.5 .25}{.25}](rdots,qv){}
\drawedge[linecolor=red,curvedepth=-.5,exo=0,dash={.5 .25}{.25}](ru,qv){}
\end{picture}\begin{picture}(28,15.5)(-2,-1)
\node[Nframe=n](name)(2,12.5){Subcase 3.4.2:}
\node(0)(2,10){0}\imark(0)
\node(p)(14,10){$p$}
\node(n-1)(26,0){$n$-$1$}
\node(n-2)(26,10){$n$-$2$}\rmark(n-2)
\node(q1)(17,7){$q_1$}
\node(r1)(17,3){$r_1$}
\node[Nframe=n](rdots)(20,3){$\dots$}
\node(ru)(23,3){$r_u$}
\node(f)(8,3){$f$}
\drawedge(0,p){}
\drawedge(p,n-2){}
\drawedge(n-2,n-1){}
\drawloop[loopangle=270](n-1){}
\drawedge[curvedepth=.3](q1,n-2){}
\drawedge[curvedepth=-.4](r1,n-1){}
\drawedge[curvedepth=-.4,sxo=-1,exo=1.5](rdots,n-1){}
\drawedge(ru,n-1){}
\drawloop(f){}
\drawedge[linecolor=red,dash={.5 .25}{.25},curvedepth=2](0,n-2){}
\drawedge[linecolor=red,dash={.5 .25}{.25}](p,f){}
\drawedge[linecolor=red,dash={.5 .25}{.25},curvedepth=-3.5,eyo=-.9,sxo=.3](r1,p){}
\drawedge[linecolor=red,dash={.5 .25}{.25},curvedepth=-3,eyo=-.7](rdots,p){}
\drawedge[linecolor=red,dash={.5 .25}{.25},curvedepth=-2.7,eyo=-.4](ru,p){}
\drawedge[linecolor=red,dash={.5 .25}{.25}](q1,f){}
\end{picture}\end{center}

\begin{center}\begin{picture}(28,10.5)(0,-1)
\node[Nframe=n](name)(2,8.5){Subcase 3.4.3:}
\node(0)(2,6){0}\imark(0)
\node(p)(14,6){$p$}
\node(n-1)(26,0){$n$-$1$}
\node(n-2)(26,6){$n$-$2$}\rmark(n-2)
\node(f)(8,0){$f$}
\node(r1)(18,3){$r_1$}
\node[Nframe=n,Nw=2,Nh=2](rdots)(20.5,3){$\dots$}
\node(ru)(23,3){$r_u$}
\drawedge(0,p){}
\drawedge(p,n-2){}
\drawedge(n-2,n-1){}
\drawloop[loopangle=270](n-1){}
\drawedge[curvedepth=-.4](r1,n-1){}
\drawedge[curvedepth=-.4,sxo=-1,exo=1.5](rdots,n-1){}
\drawedge(ru,n-1){}
\drawloop(f){}
\drawedge[linecolor=red,dash={.5 .25}{.25},curvedepth=2.5](0,n-2){}
\drawedge[linecolor=red,dash={.5 .25}{.25}](p,f){}
\drawedge[linecolor=red,dash={.5 .25}{.25}](r1,p){}
\drawedge[linecolor=red,dash={.5 .25}{.25},curvedepth=2,sxo=1](rdots,f){}
\drawedge[linecolor=red,dash={.5 .25}{.25},curvedepth=2,eyo=-.5](ru,f){}
\end{picture}\begin{picture}(28,10.5)(-2,-1)
\node[Nframe=n](name)(2,8.5){Subcase 3.5.1:}
\node(0)(2,6){0}\imark(0)
\node(p)(8,6){$p$}
\node[Nframe=n](pdots)(14,6){$\dots$}
\node(pt^k)(20,6){$pt^k$}
\node(n-2)(26,6){$n$-$2$}\rmark(n-2)
\node(n-1)(26,0){$n$-$1$}
\drawedge(0,p){}
\drawedge(p,pdots){}
\drawedge(pdots,pt^k){}
\drawedge(pt^k,n-2){}
\drawedge(n-2,n-1){}
\drawloop[loopangle=270](n-1){}
\drawedge[linecolor=red,dash={.5 .25}{.25},curvedepth=2.5](0,n-2){}
\drawedge[linecolor=red,dash={.5 .25}{.25},curvedepth=1](pdots,p){}
\drawedge[linecolor=red,dash={.5 .25}{.25},curvedepth=1](pt^k,pdots){}
\drawloop[linecolor=red,dash={.5 .25}{.25},loopangle=270](p){}
\end{picture}\end{center}

\begin{center}\begin{picture}(28,12)(0,-1)
\node[Nframe=n](name)(2,9.5){Subcase 3.5.2:}
\node(0)(2,7){0}\imark(0)
\node(p)(8,7){$p$}
\node[Nframe=n](pdots)(14,7){$\dots$}
\node(pt^k)(20,7){$pt^k$}
\node(y)(20,2){$y$}
\node(n-2)(26,7){$n$-$2$}\rmark(n-2)
\node(n-1)(26,2){$n$-$1$}
\drawedge(0,p){}
\drawedge(p,pdots){}
\drawedge(pdots,pt^k){}
\drawedge(pt^k,n-2){}
\drawedge(y,pt^k){}
\drawedge(n-2,n-1){}
\drawloop[loopangle=270](n-1){}
\drawedge[linecolor=red,dash={.5 .25}{.25},curvedepth=2.5](0,n-2){}
\drawedge[linecolor=red,dash={.5 .25}{.25}](y,n-1){}
\drawedge[linecolor=red,dash={.5 .25}{.25},curvedepth=-1,sxo=-1](p,y){}
\drawedge[linecolor=red,dash={.5 .25}{.25},curvedepth=1](pdots,p){}
\drawedge[linecolor=red,dash={.5 .25}{.25},curvedepth=1](pt^k,pdots){}
\end{picture}\begin{picture}(28,12)(-2,0)
\node[Nframe=n](name)(2,10.5){Subcase 3.5.3:}
\node(0)(2,8){0}\imark(0)
\node(p)(8,8){$p$}
\node[Nframe=n](pdots)(14,8){$\dots$}
\node(pt^k)(20,8){$pt^k$}
\node[Nw=2,Nh=2](q1)(14,3){$q_1$}
\node[Nframe=n,Nw=2,Nh=2](qdots1)(16.5,3){$\dots$}
\node[Nw=2,Nh=2](qi)(19,3){$q_m$}
\node[Nframe=n,Nw=2,Nh=2](qdots2)(21.5,3){$\dots$}
\node[Nw=2,Nh=2](qv)(24,3){$q_v$}
\node(x)(8,0){$x$}
\node[Nframe=n](xdots)(14,0){$\dots$}
\node(n-2)(26,8){$n$-$2$}\rmark(n-2)
\node(n-1)(26,0){$n$-$1$}
\drawedge(0,p){}
\drawedge(p,pdots){}
\drawedge(pdots,pt^k){}
\drawedge(pt^k,n-2){}
\drawedge(n-2,n-1){}
\drawloop[loopangle=270](n-1){}
\drawedge(x,xdots){}
\drawedge[curvedepth=-1.5,exo=.5](xdots,qi){}
\drawedge[curvedepth=.6](q1,n-2){}
\drawedge[curvedepth=.3,sxo=-1](qdots1,n-2){}
\drawedge[curvedepth=.2,sxo=-.5](qi,n-2){}
\drawedge[curvedepth=.1](qdots2,n-2){}
\drawedge[curvedepth=0](qv,n-2){}
\drawedge[linecolor=red,dash={.5 .25}{.25},curvedepth=2.5](0,n-2){}
\drawedge[linecolor=red,dash={.5 .25}{.25},curvedepth=1](pdots,p){}
\drawedge[linecolor=red,dash={.5 .25}{.25},curvedepth=1](pt^k,pdots){}
\drawedge[linecolor=red,dash={.5 .25}{.25}](p,x){}
\drawedge[linecolor=red,dash={.5 .25}{.25},curvedepth=-1.5](q1,qdots1){}
\drawedge[linecolor=red,dash={.5 .25}{.25},curvedepth=-1.5](qdots1,qi){}
\drawedge[linecolor=red,dash={.5 .25}{.25},curvedepth=-1.5](qi,qdots2){}
\drawedge[linecolor=red,dash={.5 .25}{.25},curvedepth=-1.5](qdots2,qv){}
\drawedge[linecolor=red,dash={.5 .25}{.25},curvedepth=-2.5,sxo=1,exo=-1](qv,q1){}
\end{picture}\end{center}
\caption{Map of the (sub)cases of Supercase~3 in the proof of Theorem~\ref{thm:bifix-free_upper_bound}.}
\end{figure}

\end{document}